\newcommand{\eps}{\varepsilon}
\newcommand{\N}{\mathbb{N}}
\newcommand{\F}{\mathbb{F}}
\renewcommand{\R}{\mathbb{R}}
\newcommand{\Z}{\mathbb{Z}}
\newcommand{\floor}[1]{{\lfloor#1\rfloor}}
\newcommand{\ceil}[1]{{\lceil#1\rceil}}
\newtheorem{theorem}{Theorem}[section]
\newtheorem{lemma}[theorem]{Lemma}
\newtheorem{corollary}[theorem]{Corollary}
\newtheorem{definition}[theorem]{Definition}
\newtheorem{remark}[theorem]{Remark}
\newtheorem{proposition}[theorem]{Proposition}
\newcommand{\problem}[1]{\textsl{#1}}
\begin{document}
\sloppy
\pagenumbering{roman}

\title{Polynomial formulations as a~barrier for~reduction-based hardness~proofs}
\date{}
\author{
    Tatiana Belova
    \thanks{Steklov Institute of Mathematics at St. Petersburg. Email: \texttt{yukikomodo@gmail.com}.}
    \and
	Alexander Golovnev
	\thanks{Georgetown University. Email: \texttt{alexgolovnev@gmail.com}.}
	\and
	Alexander~S. Kulikov
	\thanks{Steklov Institute of Mathematics at St. Petersburg and St.~Petersburg State University. Email: \texttt{alexander.s.kulikov@gmail.com}.}
	\and
	Ivan Mihajlin
	\thanks{Leonhard Euler International Mathematical Institute in St.~Petersburg. Email: \texttt{ivmihajlin@gmail.com}.}
	\and
	Denil Sharipov 
	\thanks{St.~Petersburg State University. Email: \texttt{d.i.sharipov@yandex.ru}.}
}

\maketitle

\begin{abstract}
    The Strong Exponential Time Hypothesis (SETH) 
    asserts that for every $\varepsilon>0$ there exists~$k$
    such that $k$-SAT requires time
    $(2-\varepsilon)^n$. The field of fine-grained complexity has leveraged SETH to prove quite tight conditional lower bounds for dozens of problems in various domains and complexity classes, including Edit Distance, Graph Diameter, Hitting Set, Independent Set, and Orthogonal Vectors.
    Yet, it~has been repeatedly asked in~the literature 
    whether SETH-hardness results can be~proven for other fundamental problems such as Hamiltonian Path, Independent Set, Chromatic Number,
    MAX-$k$-SAT, and Set Cover.
    
    In this paper, we show that fine-grained reductions implying even $\lambda^n$-hardness of these problems from SETH for \emph{any} $\lambda>1$, would imply new circuit lower bounds: super-linear lower bounds for Boolean series-parallel circuits or~polynomial lower bounds for arithmetic circuits (each of which is a four-decade open question).
    
    We also extend this barrier result to the class of parameterized problems. Namely, for every~$\lambda>1$ we conditionally rule out fine-grained reductions implying SETH-based lower bounds of~$\lambda^k$ for a number of problems parameterized by the solution size~$k$.
    
    Our main technical tool is a new concept called polynomial formulations. In particular, we show that many problems can be represented by relatively succinct low-degree polynomials, and that any problem with such a representation cannot be proven SETH-hard (without proving new circuit lower bounds).
\end{abstract}

\thispagestyle{empty}
\newpage
\setcounter{tocdepth}{2}
\tableofcontents
\newpage
\pagenumbering{arabic}

\section{Introduction}
In this paper, we explain the lack of hardness results based on the Strong Exponential Time Hypothesis for a~large class of problems by proving that such hardness results would lead to new strong circuit lower bounds.
\subsection{Background}
The central question in~complexity theory is to find the minimum time required to~solve a~given computational problem. Answering such a~question involves
proving lower bounds on~computational complexity. Unconditional lower bounds
remain elusive: for example, we do~not know how to~solve \problem{CNF-SAT} 
in~time $(2-\varepsilon)^n$ (where $n$~is the number of~variables in~an~input CNF formula and $\varepsilon>0$ is a~constant),
and at~the same time we~have no~tools to~exclude the possibility of even an~$O(n)$ time
algorithm. Super-linear lower bounds are only known for restricted models of computation.

For this reason, all existing lower bounds are conditional. Classical complexity theory, founded in the 1970's, considers polynomial-time reductions:
we say that $P$~reduces to~$Q$ and write $P \le Q$, if a~polynomial-time algorithm for~$Q$ can be~used to~solve~$P$ in polynomial time. 
Such a reduction may be~viewed as~a~conditional lower bound:
if the~problem~$P$ cannot be~solved in~polynomial time, then neither can the problem~$Q$. 
While polynomial-time reductions (conditionally) rule out polynomial-time algorithms for many problems, they say little about \emph{quantitative} hardness of computational problems.

The recently developed field of fine-grained complexity aims to establish tighter connections between complexities of computational problems. By using fine-grained reductions, one can leverage algorithmic hardness assumptions to prove \emph{quantitative} lower bounds for wide classes of problems. A~fine-grained reduction, denoted $(P, p(n)) \le (Q, q(n))$, implies that a~faster than $q(n)$-time
algorithm for~$Q$ leads to a~faster than $p(n)$-time algorithm for~$P$.
The standard assumptions in this field (see Vassilevska Williams~\cite{DBLP:conf/iwpec/Williams15,vw18} for excellent surveys on this topic) are hardness of \problem{CNF-SAT}~\cite{ip99,ipz98}, \problem{$3$-SUM}~\cite{go95,e99}, \problem{Orthogonal Vectors}~\cite{w05}, \problem{All Pairs Shortest Paths}~\cite{vww10}, \problem{Online Matrix-Vector Multiplication}~\cite{hkns15}, and \problem{Set Cover}~\cite{DBLP:journals/talg/CyganDLMNOPSW16}.

One of the most popular fine-grained assumptions, the Strong Exponential Time Hypothesis (SETH), postulates that for every $\eps>0$ there exists a~$k$ such that \problem{$k$-SAT} cannot be solved in time $(2-\eps)^n$. The Strong explanatory power of SETH is confirmed by many tight lower bounds for computational problems both in \P{} and \NP{}. We refer the reader to \cite[Section~3]{vw18} for an extensive list of such results, and we list a~few notable representatives below. The following upper bounds are known to~be~tight (up to small multiplicative factors) under SETH:
\begin{itemize}
    \item $n^2$ for \problem{Orthogonal Vectors}~\cite{w05} (where $n$~is the number of~vectors), $3/2$-approximate \problem{Graph Diameter}~\cite{roditty2013fast} (where $n$~is the number of nodes in the input graph), and \problem{Edit Distance}~\cite{DBLP:journals/siamcomp/BackursI18} (where $n$~is the length of~the input strings);
    \item $2^{n}$ for \problem{Hitting~Set} (where $n$~is the size of the universe); and \problem{NAE-SAT} (where $n$~is the number of the variables)~\cite{DBLP:journals/talg/CyganDLMNOPSW16};
    \item $n^k$ for \problem{$k$-Dominating Set}~\cite{DBLP:conf/soda/PatrascuW10} (where $n$~is the number of~nodes in~the input graph and $k\geq7$);
    \item $2^{\operatorname{tw}}$ for \problem{Independent Set}~\cite{DBLP:journals/talg/LokshtanovMS18} (where $\operatorname{tw}$ is the treewidth of the input graph).
\end{itemize}
Given such an extensive list of tight conditional lower bounds, one may speculate that SETH can explain many other current algorithmic barriers. 
\begin{restatable}{openproblem}{finegrained}\label{op:finegrained}
Can we prove $\lambda^n$-SETH hardness results for $\lambda>1$ for any of the following problems:  
\problem{$k$-SAT} (asked in~\cite[problem~5]{DBLP:journals/talg/CyganDLMNOPSW16}), 
        \problem{Hamiltonian Path} (asked in~\cite[chapter~12]{DBLP:series/txtcs/FominK10}),
        \problem{Chromatic Number} (asked in~~\cite[problem~5]{DBLP:journals/eatcs/LokshtanovMS11}, \cite[problem~43]{DBLP:conf/icalp/Zamir21}, \cite[problem~2]{DBLP:journals/talg/CyganDLMNOPSW16}),
        \problem{Set Cover} (asked in \cite[problem~1]{DBLP:journals/talg/CyganDLMNOPSW16}),
        \problem{Independent Set} (asked in \cite[problem~4]{DBLP:journals/eatcs/LokshtanovMS11}),
        \problem{Clique},
        \problem{Vertex Cover},
        \problem{MAX-$k$-SAT}, 
        \problem{$3$d-Matching}? 
        
        Can we prove $\lambda^k$-SETH hardness results \cref{sec:problemsparam} for $\lambda>1$ for any of the following parameterized problems (asked in~\cite[problem~2]{DBLP:journals/eatcs/LokshtanovMS11}): 
        \problem{$k$-Path}, 
        \problem{$k$-Vertex Cover}, 
        \problem{$k$-Tree}, 
        \problem{$k$-Steiner Tree}, 
        \problem{$k$-Internal Spanning Tree}, 
        \problem{$k$-Leaf Spanning Tree}, 
        \problem{$k$-Nonblocker}, 
        \problem{$k$-Path Contractibility}, 
        \problem{$k$-Cluster Editing}, 
        \problem{$k$-Set Splitting}?
\end{restatable}

\paragraph{Barriers for hardness proofs.} The first (and the only prior to this work) conditional barrier for proving SETH-hardness results was shown by Carmosino 
et~al.~\cite{DBLP:conf/innovations/CarmosinoGIMPS16}. For 
an~integer~$k$, \problem{$k$-TAUT}
    is~the language of~all $k$-DNFs that are tautologies (note that a $k$-DNF formula $\phi$ is in $k$-TAUT if and only if the $k$-CNF formula resulting from negating~$\phi$ is not in $k$-SAT). \cite{DBLP:conf/innovations/CarmosinoGIMPS16} defines a stronger version of SETH---Non-deterministic Strong Exponential Time Hypothesis (NSETH)---which postulates that for every $\eps>0$ there exists $k$ such that even non-deterministic algorithms cannot solve \problem{$k$-TAUT} on $n$-variate formulas in time $2^{(1-\eps)n}$. While this conjecture is stronger than SETH, refuting even NSETH would imply strong lower bounds against Boolean series-parallel circuits~\cite{DBLP:journals/iandc/JahanjouMV18}. Carmosino et al.~\cite{DBLP:conf/innovations/CarmosinoGIMPS16} proved that tight SETH-hardness results for \problem{$3$-SUM}, \problem{APSP}, and some other problems would refute NSETH, and, thus, imply new circuit lower bounds (resolving a four-decade open question). 

\paragraph{Circuit lower bounds.} The barriers we show for SETH-hardness proofs also lie in the field of circuit complexity. Below we review two of the main challenges in this field, and we give rigorous definitions of the circuit models in \cref{sec:spckts,sec:ackts}.
The best known lower bound on the size of Boolean circuits computing functions in~\P{} is~${3.1n-o(n)}$~\cite{31n}. In fact, this bound remains the best known even for the much larger class of functions from $\E^{\NP}$ even against the~restricted model of~series-parallel circuits. A long-standing open problem in Boolean circuit complexity is to find an explicit language that cannot be computed by linear-size circuits from various restricted circuit classes~\cite[Frontier~3]{Valiant, AB2009}.
\begin{restatable}{openproblem}{spckts}\label{op:spckts}
Prove a lower bound of $\omega(n)$ on the size of Boolean series-parallel circuits computing a language from $\E^{\NP}$.
\end{restatable}
In contrast to the case of Boolean circuits, in the model of arithmetic circuits we have a super-linear lower bound of $\Omega(n\log{n})$~\cite{s73,bs83}. One of the biggest challenges in this area is to prove a stronger lower bound, for example, a lower bound of $n^{\gamma}$ for a constant $\gamma>1$.
\begin{restatable}{openproblem}{ackts}\label{op:ackts}
For a constant $\gamma>1$, prove a lower bound of $n^\gamma$ on the arithmetic circuit complexity of a constant-degree polynomial that can be constructed in polynomial time~$n^{O(1)}$.
\end{restatable}

\subsection{Our Contribution}
Despite much effort, we still do not have \emph{any} SETH-hardness results for the problems listed in \cref{op:finegrained}. For example, an algorithm solving the \problem{Hamiltonian Path} problem in time $2^n n^{O(1)}$ has been known for $60$ years~\cite{bellman1962dynamic,held1962dynamic}, yet we don't have any improvements on the algorithm nor conditional lower bounds $\lambda^n$ on the complexity of this problem.\footnote{The undirected version of \problem{Hamiltonian Path} can be solved in time~$1.66^n$ by Bj{\"{o}}rklund's algorithm~\cite{B10}. In this paper, we only consider the directed version of the \problem{Hamiltonian Path} and \problem{$k$-Path} problems. Since we're proving barriers for lower bounds, considering the harder directed version of a problem only makes our barrier results stronger.} In this paper, we show that this barrier is no coincidence. Namely, we show that a resolution of~\cref{op:finegrained} would resolve \cref{op:spckts} or \cref{op:ackts}.
More specifically, \emph{any} SETH-based exponential lower bound for \problem{Hamiltonian Path} or any other problem from \cref{op:finegrained} would imply a super-linear lower bound for series-parallel circuits or an arbitrarily large polynomial lower bound for arithmetic circuits. 

Our first main result says that for a number of well-studied problems, any SETH lower bound of the form $\lambda^n$ for a constant $\lambda>1$ would imply new circuit lower bounds. (In the end of this section we clarify what we mean by SETH lower bounds, and our definition is quite general.)
\begin{restatable}{theorem}{intromain}\label{thm:intromain}
If at least one of the following problems
    \begin{quote}
        \problem{$k$-SAT}, 
        \problem{MAX-$k$-SAT}, 
        \problem{Hamiltonian Path},
        \problem{Graph Coloring},
        \problem{Set Cover},
        \problem{Independent Set},
        \problem{Clique},
        \problem{Vertex Cover},
        \problem{$3$d-Matching}
    \end{quote}
is $\lambda^n$-SETH-hard for a constant $\lambda>1$, then at~least one of~the following circuit lower bounds holds:
	\begin{itemize}
		\item $\E^{\NP}$ requires series-parallel Boolean circuits of~size~$\omega(n)$;
		\item for every constant $\gamma>1$, there exists an~explicit family of constant-degree polynomials over~$\Z$ that requires arithmetic circuits of~size $\Omega(n^{\gamma})$.
	\end{itemize}
\end{restatable}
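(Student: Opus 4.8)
The plan is to isolate one \emph{meta-theorem} about \emph{polynomial formulations} and then invoke it once per problem. A polynomial formulation of a problem~$P$, with natural parameter~$n$ (the one for which brute force costs $2^{n}$), will consist of: an explicit, polynomial-time constructible family of polynomials $p_n$ over~$\Z$ of low degree in a controlled number of variables; an algebraic \emph{aggregation rule} (a sum over a moderate-size domain, an inclusion--exclusion alternating sum, a permanent, \dots) whose value determines the answer on any size-$n$ instance of~$P$; and the quantitative promise that, \emph{given an arithmetic circuit of size~$\sigma$ for~$p_n$}, the aggregate can be evaluated in time $\sigma^{O(1)}\cdot 2^{o(n)}$ --- crucially, the formulation does \emph{not} claim that $p_n$ itself has a small circuit. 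The meta-theorem will assert that a problem with such a formulation cannot be $\lambda^{n}$-SETH-hard for any constant $\lambda>1$ unless one of the circuit lower bounds of \cref{op:spckts,op:ackts} holds. Granting it, \cref{thm:intromain} reduces to exhibiting polynomial formulations for \problem{$k$-SAT}, \problem{MAX-$k$-SAT}, \problem{Hamiltonian Path}, \problem{Graph Coloring}, \problem{Set Cover}, \problem{Independent Set}, \problem{Clique}, \problem{Vertex Cover} and \problem{$3$d-Matching}.

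To prove the meta-theorem I would take a problem~$P$ with such a formulation and a fine-grained reduction witnessing $\lambda^{n}$-SETH-hardness --- mapping a \problem{$k$-SAT} instance on~$m$ variables to a $P$-instance of parameter $n=O(m)$ so that a $(\lambda-\eps)^{n}$-time algorithm for~$P$ refutes SETH --- and then branch on the arithmetic circuit complexity of~$p_n$. If $p_n$ has circuits of size $2^{o(n)}$, the quantitative promise lets us evaluate the aggregate, hence decide~$P$ on parameter~$n$, in time $2^{o(n)}\ll(\lambda-\eps)^{n}$; composing with the reduction refutes SETH, which a fortiori refutes NSETH (the resulting deterministic algorithm for \problem{$k$-SAT} is in particular a nondeterministic algorithm for \problem{$k$-TAUT}), and by Jahanjou--Miles--Viola~\cite{DBLP:journals/iandc/JahanjouMV18} this yields the series-parallel lower bound of \cref{op:spckts}. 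Otherwise $p_n$ has no circuit of size $2^{o(n)}$; since the formulation supplies~$p_n$ explicitly, in polynomial time, and of low degree, this is --- after a routine reduction to the constant-degree case and padding --- exactly the statement of \cref{op:ackts}, namely an explicit family of constant-degree polynomials over~$\Z$ requiring arithmetic circuits of size $\Omega(n^{\gamma})$ for every constant $\gamma>1$. One point needs care: $\lambda$ may be arbitrarily close to~$1$, so the hypothetical reduction may blow the instance up by any constant factor; this is precisely why the formulations must deliver a genuine $2^{o(n)}$ saving rather than a fixed $2^{(1-\delta)n}$ one, and it fixes the degree and variable budgets on~$p_n$.

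For the formulations, \problem{Hamiltonian Path} is handled by algebraizing the Karp--Bax inclusion--exclusion count of Hamiltonian paths --- an alternating sum over vertex subsets of an entry-sum of a bounded power of a symbolic adjacency matrix, a low-degree polynomial with a small circuit --- and \problem{Graph Coloring} similarly via its inclusion--exclusion formula over independent sets. For the constraint-satisfaction-type problems (\problem{$k$-SAT}, \problem{MAX-$k$-SAT}, \problem{Independent Set}, \problem{Clique}, \problem{Vertex Cover}, \problem{Set Cover}, \problem{$3$d-Matching}) I would first sparsify (Impagliazzo--Paturi--Zane-style~\cite{ipz98}, at the cost of $2^{o(n)}$ sub-instances) so that each instance carries only $O(n)$ constraints, then represent the conjunction of those constraints by a low-degree polynomial via the polynomial method, so that satisfiability --- or, for the optimization and covering variants, the relevant count extracted from a univariate wrapper --- is the aggregate of that polynomial over the Boolean cube. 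For each problem one then checks the two budgets (low degree, $O(n)$ variables, and aggregation computable in $\sigma^{O(1)}2^{o(n)}$ time from a size-$\sigma$ circuit) that make the meta-theorem fire.

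I expect the main obstacle to be the constraint-satisfaction formulations: producing an \emph{exact} (not one-sided or merely probabilistic) algebraic surrogate for a conjunction of $O(n)$ width-$k$ constraints whose degree and variable count meet the budgets after sparsification is the delicate step, and it must be engineered so that the degree reduction in the ``no small circuit'' branch genuinely lands on a \emph{constant}-degree explicit family as \cref{op:ackts} demands. The graph problems should be comparatively routine given the classical inclusion--exclusion identities; the remaining care is purely in the bookkeeping --- that the $2^{o(n)}$ saving survives an arbitrary constant blow-up and that ``refutes SETH $\Rightarrow$ refutes NSETH'' is invoked with matching parameters.
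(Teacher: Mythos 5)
Your branch ``if $p_n$ has circuits of size $2^{o(n)}$'' does not yield an algorithm, and this is where the proposal breaks. Knowing that a small arithmetic circuit \emph{exists} for $p_n$ does not tell you what it is; you cannot evaluate the aggregate without one in hand. The paper addresses exactly this gap, and it is the technical heart of the argument: the circuit is \emph{nondeterministically guessed}, and then the guess is \emph{verified}. The verification is possible because, by Strassen's homogenization (their \cref{lem:homogeneous}/\cref{cor:bounded-degree}), a size-$s$ circuit for a degree-$\Delta$ polynomial can be converted into an $O(\Delta^2 s)$-size circuit in which every gate computes a polynomial of degree $\leq\Delta$; one can therefore expand each gate as an explicit low-degree polynomial and compare the output against the coefficient list of $p_n$ (their $\operatorname{Gap-MACP}$, \cref{lemma:macp}). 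This is why the conclusion in this branch is only ``NSETH is false,'' not ``SETH is false'': you get a nondeterministic algorithm for $k$-TAUT, not a deterministic one for $k$-SAT. Your claim that you can ``compose with the reduction to refute SETH'' is too strong and unsupported.

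The second serious issue is the shape of your polynomial formulations. Your Karp--Bax and inclusion--exclusion polynomials live on $n^{O(1)}$ variables (symbolic adjacency-matrix entries) and have degree $\Theta(n)$; and for the CSP-type problems the polynomial method also naturally produces degree growing with $n$. That is incompatible with both branches. For the circuit-lower-bound branch, \cref{op:ackts} demands an explicit family of \emph{constant-degree} polynomials, and there is no ``routine reduction + padding'' turning a degree-$n$ polynomial family into a constant-degree one while preserving the required super-linear circuit hardness. For the verification step, Strassen's homogenization and the gate-by-gate expansion only run in the stated time budget because the degree $\Delta$ is a constant; with degree $\Theta(n)$ this blows up. The paper's formulations are designed in the opposite regime: for every $c>1$ the polynomial has $t\le c^n$ variables (subexponentially many with adjustable base) and \emph{constant} degree $\Delta(c)$, obtained by partitioning a candidate solution into $\theta(c)$ blocks of size $n/\theta$ and using one $0/1$-variable per block. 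This is what makes both the verification and the arithmetic-lower-bound conclusion land where they should. Finally, the paper's time accounting relies on self-reduction of $k$-TAUT (branching on $(1-\alpha)n$ variables and amortizing the expensive circuit-guessing preprocessing over $2^{(1-\alpha)n}$ small subinstances); your proposal does not address this amortization, and without it constructing the polynomial alone costs more than the allotted time on a single instance.
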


While this result conditionally rules out, say, $1.1^n$-hardness of \problem{Hamiltonian Path}, it is still possible that the parameterized version of \problem{Hamiltonian Path}, \problem{$k$-Path}, is $1.1^k$-hard for \emph{some} function $k\vcentcolon= k(n)$. In our second result, we conditionally rule out even such hardness results (which might have seemed easier to obtain).
\begin{restatable}{theorem}{intromainparam}\label{thm:intromainparam}
If at least one of the following parameterized problems
    \begin{quote}
         \problem{$k$-Path}, 
         \problem{$k$-Vertex Cover}, 
         \problem{$k$-Tree}, 
         \problem{$k$-Steiner Tree}, 
         \problem{$k$-Internal Spanning Tree}, 
         \problem{$k$-Leaf Spanning Tree}, 
         \problem{$k$-Nonblocker}, 
         \problem{$k$-Path Contractibility}, 
         \problem{$k$-Cluster Editing}, 
         \problem{$k$-Set Splitting}
    \end{quote}
is $\lambda^k$-SETH-hard for a constant $\lambda>1$, then at~least one of~the following circuit lower bounds holds:
	\begin{itemize}
		\item $\E^{\NP}$ requires series-parallel Boolean circuits of~size~$\omega(n)$;
		\item for every constant $\gamma>1$, there exists an~explicit family of constant-degree polynomials over~$\Z$ that requires arithmetic circuits of~size $\Omega(n^{\gamma})$.
	\end{itemize}
\end{restatable}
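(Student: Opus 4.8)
The plan is to mimic the strategy that underlies \Cref{thm:intromain}, adapted to the parameterized regime. By the promised machinery of \emph{polynomial formulations}, it suffices to exhibit, for each of the ten parameterized problems, a relatively succinct low-degree polynomial representation, and then invoke the general meta-theorem (stated earlier in the paper) that any problem admitting such a formulation cannot be $\lambda^k$-SETH-hard without yielding one of the two circuit lower bounds. Concretely, I would first recall the precise notion of a polynomial formulation for a parameterized problem: a family of polynomials $p_{n,k}$ over $\Z$ in roughly $\mathrm{poly}(n)$ variables, of constant (or slowly growing) degree, whose nonvanishing on $\{0,1\}$-assignments of Hamming weight controlled by $k$ certifies a yes-instance, together with the bound on the number of monomials and on the circuit size needed to evaluate it. The key quantitative point is that for these problems the ``solution'' is a set of size $k$ (a path on $k$ vertices, a vertex cover of size $k$, a subtree with $k$ nodes, etc.), so the natural polynomial—summing over candidate solutions, with each solution encoded by an indicator monomial of degree $O(k)$ or after linearization by a product of $O(1)$-degree gadgets—has a description whose size is governed by $k$ rather than $n$. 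This is exactly the feature the meta-theorem needs to convert a hypothetical $\lambda^k$ reduction into a fast nondeterministic (or arithmetic) algorithm, contradicting NSETH-style consequences unless the circuit bounds hold.

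Next I would go through the ten problems in small groups sharing a common gadget. For \problem{$k$-Path} and \problem{$k$-Tree} (and then \problem{$k$-Steiner Tree}, \problem{$k$-Internal Spanning Tree}, \problem{$k$-Leaf Spanning Tree} as decorated variants), the idea is to use a walk/branching-walk polynomial: encode the existence of a colorful structure of size $k$ by a polynomial summing, over $k$-step labeled walks, a product of edge variables, and apply an inclusion–exclusion or a determinantal/permanent-style contraction to kill non-simple structures; this yields degree $O(k)$ and a polynomial with $2^{O(k)}\mathrm{poly}(n)$ monomials—succinct in the parameterized sense. For \problem{$k$-Vertex Cover} and \problem{$k$-Nonblocker} (which is vertex cover complementation on $n - k$), the polynomial is a direct product-of-clauses encoding ``every edge is hit by one of $k$ chosen vertices,'' of degree $O(1)$ per edge after introducing selection variables, with $O(n^2)$ monomials. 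For \problem{$k$-Set Splitting} and \problem{$k$-Cluster Editing}, one writes a constraint polynomial over the $O(k)$-bit description of the sought bipartition / edit set, with each set or each vertex pair contributing a constant-degree factor. \problem{$k$-Path Contractibility} (contract to a path on $\le k$ vertices, equivalently a partition into $\le k$ connected ``bags'' in a path order) is handled by an ordered-partition polynomial of degree $O(k)$.

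The main obstacle, I expect, is ensuring that in every case the polynomial formulation is genuinely \emph{succinct with respect to $k$}—i.e., that both the number of monomials and the evaluation-circuit size are of the form $f(k)\cdot\mathrm{poly}(n)$ with $f$ at most $2^{O(k)}$ (or better, $\mathrm{poly}(k)$ where the meta-theorem demands it)—and simultaneously of \emph{constant degree after linearization}, so that the arithmetic-circuit branch of the conclusion is actually triggered. For the tree-like problems this tension is real: the naive colorful-subgraph polynomial has degree $k$, which must be reduced to $O(1)$ by the standard trick of introducing one fresh variable per (vertex, color, layer) triple and writing the structure layer by layer, at the cost of blowing up the variable count to $\mathrm{poly}(n)\cdot k$—I would need to check that this blow-up is absorbed by the slack in the meta-theorem's hypothesis. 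A secondary subtlety is that \problem{$k$-Path}, \problem{$k$-Tree}, and \problem{$k$-Steiner Tree} are \W-hard or at least not obviously FPT in a way that matters for the reduction's running-time bookkeeping; but since we are only using their polynomial formulations as a \emph{barrier} (a hypothetical reduction is being ruled out), no FPT algorithm is required—only that the formulation's parameters plug into the meta-theorem. Once each of the ten formulations is verified, \Cref{thm:intromainparam} follows immediately by applying the meta-theorem ten times and taking the contrapositive: a $\lambda^k$-SETH-hardness reduction for any one of them, composed with its polynomial formulation, yields the nondeterministic speed-up for \problem{$k$-TAUT} (hence a series-parallel lower bound by the Jahanjou–Miltersen–Viola connection) or the arithmetic-circuit lower bound, as claimed.
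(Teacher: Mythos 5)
Your high-level plan is right: the theorem follows by exhibiting a polynomial formulation for each of the ten problems and plugging it into the parameterized meta-theorem (\cref{thm:mainparam}), exactly as the paper does. But the specific formulations you propose would not satisfy the hypotheses of \cref{thm:mainparam}, and the quantitative requirements you recall are off in a way that matters. The meta-theorem needs, \emph{for every} $c>1$, a $\Delta(c)$-polynomial formulation whose variable count is $s(x,k)\le c^k\,|x|^{O(1)}$ and whose degree is a \emph{constant} $\Delta(c)$. Your walk/branching-walk polynomial for \problem{$k$-Path} and \problem{$k$-Tree} has degree $\Theta(k)$, and the ``layer-by-layer linearization'' you sketch to fix this (one variable per vertex–color–layer triple) still produces a polynomial that is a product over $k$ layers, i.e., degree $k$ — not constant. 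Reducing degree to a constant while keeping the variable count at $c^k\cdot\mathrm{poly}(n)$ for \emph{arbitrarily small} $c>1$ is precisely the hard part, and neither of your two suggested tricks achieves both simultaneously.

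The ideas you are missing are the ones the paper devotes \cref{sec:splitters} and \cref{sec:paramformulations} to. For \problem{$k$-Path} and \problem{$k$-Tree} the paper first constructs $(n,k,\theta k)$-splitters of size $e^{k/\theta(1+o(1))}\cdot\mathrm{poly}(n)$ (\cref{thm:splitters}) — an object with a much better size/alphabet tradeoff than the standard $k$-perfect hash families — and then writes the solution as a partition into $\theta$ color-disjoint sub-paths/sub-trees, each using $\le k/\theta$ colors. The resulting monomials have degree $O(\theta)$, and since $\theta$ can be taken as large as desired, the variable count $e^{O(k/\theta)}\mathrm{poly}(n)$ can be pushed below $c^k\,\mathrm{poly}(n)$ for any $c>1$. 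For the other problems the paper relies on polynomial kernels (e.g.\ the $2k$-vertex kernel for \problem{$k$-Vertex Cover}, $5k/3$ for \problem{$k$-Nonblocker}, $2k$-set kernel for \problem{$k$-Set Splitting}, etc.) to bring $n$ down to $O(k)$, and then uses a ``split the solution into $\theta$ blocks of size $O(k/\theta)$'' decomposition — via \cref{st:tree_decomposition} for the tree-like problems — so that the formulation has $\binom{O(k)}{O(k/\theta)}\mathrm{poly}(n)$ variables and degree $\theta^{O(1)}$. Your direct product-of-clauses encoding for \problem{$k$-Vertex Cover} gives degree roughly $m=|E|$, not $O(1)$. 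Also, your remark that \problem{$k$-Path} / \problem{$k$-Tree} / \problem{$k$-Steiner Tree} are ``\textsf{W}-hard or at least not obviously FPT'' and that no FPT algorithm is needed is incorrect and actually inverts the situation: all ten problems are classical FPT problems, and the paper \emph{uses} their FPT algorithms to compute the mapping $\phi(x,k)$ (the variable assignments) within the time bound $T(k)|x|^{O(1)}$ demanded by the definition of a polynomial formulation.
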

A couple of remarks about the main results of this work are in order. 
While an SETH-lower bound for any problem in the premise of \cref{thm:intromain,thm:intromainparam} would imply that at least one of \cref{op:spckts} or \cref{op:ackts} must have an affirmative answer, it would not say which~one!

The first part of the conclusion of  \cref{thm:intromain,thm:intromainparam} is really that ``NSETH fails'', and the series-parallel circuit lower bound comes as an already known consequence of that~\cite{DBLP:journals/iandc/JahanjouMV18}.

\paragraph{On the definition of SETH-hardness.}
The Exponential Time Hypothesis (ETH), which follows from SETH~\cite{ipz98}, asserts that there is some non-explicit $\lambda>1$ such that \problem{$3$-SAT} on~$n$ variables requires time~$\lambda^n$. While SETH implies the existence of such $\lambda>1$, it doesn't tell us anything about $\lambda$. In particular, it doesn't provide us with a lower bound on $\lambda$ greater than~$1$.  
Each of the problems discussed above has a reduction from \problem{$3$-SAT} that preserves the size of the instance up to a constant factor~\cite[Theorem~14.6]{ipz98,cygan2015parameterized}. Thus, under SETH (or even ETH), none of these problems can be solved by algorithms running in time $(\lambda')^n$ for some \emph{non-explicit} $\lambda'>1$ which we cannot even bound away from~$1$.

The only currently known way to prove fine-grained and SETH-hardness results is via fine-grained reductions. Such a reduction gives a conditional lower bound of $\lambda^n$ on the complexity of a computational problem for an \emph{explicit} $\lambda>1$. In this work, we (conditionally) rule out reduction-based proofs of SETH-hardness which we emphasize in the title of the paper. 

To simplify the presentation, we follow the terminology of previous works in this area and say ``a computational problem~$P$ is $\lambda^n$-SETH hard for $\lambda>1$'' when we mean the following. There exists an \emph{explicit} constant~$\lambda>1$ and a function $\delta\vcentcolon=\delta(\eps)$ such that for every $k$ and $\eps>0$, there exists an algorithm $A$ for \problem{$k$-SAT} running in time $2^{(1-\delta(\eps))n}$, making~$t$ calls to an oracle for~$P$ with input sizes $n_1,\ldots,n_t$ satisfying
\[
\sum_{i=1}^t \lambda^{(1-\eps)n_i} \leq 2^{(1-\delta(\eps))n} \;.
\]
In particular, this captures all known $\lambda^n$-SETH lower bounds in fine-grained complexity for explicit~$\lambda>1$.

\paragraph{On randomized reductions.}
While this work focuses on deterministic (i.e., non-randomized) reductions, now we discuss why randomized SETH-hardness reductions would also be surprising. If one of the problems under consideration is SETH-hard under deterministic reductions, then \cref{thm:intromain,thm:intromainparam} imply that either NSETH is false (implying series-parallel circuit lower bounds) or we have high arithmetic circuit lower bounds. In fact, the same result holds for zero-error probabilistic reductions. 

A randomized SETH-hardness reduction for any of the problems in the premises of \cref{thm:intromain,thm:intromainparam} would either give us high arithmetic circuit lower bounds or a faster than $2^n$ two-round \AM{} protocol for \problem{$k$-TAUT}. While faster than $2^n$ two-round \AM{} protocols are not known to imply circuit lower bounds, designing such a protocol for \problem{$k$-TAUT} would still be a big achievement. The~celebrated work of Williams~\cite{W16} gives a constant-round \AM{} protocol running in time $2^{n/2}$ for \problem{$k$-TAUT}.\footnote{The standard transformation of a constant-round \AM{} protocol into a two-round \AM{} protocol suffers a quadratic blow-up in size which results in a two-round protocol of trivial length $2^n$.}

Also, our result, together with the standard trick of simulating randomness by non-uniformity (see~\cite[Lemma~4]{DBLP:conf/innovations/CarmosinoGIMPS16}), gives us that a randomized SETH-hardness reduction for any of the problems in the premises of \cref{thm:intromain,thm:intromainparam} implies either high arithmetic circuit lower bounds or that NUNSETH (Non-unifrom NSETH, see \cite[Definition~4]{DBLP:conf/innovations/CarmosinoGIMPS16})) is false.  

\subsection{Proof Overview}
\subsubsection{Non-parameterized Problems}
We demonstrate our main result on the following example. Assume that the \problem{Hamiltonian Path} problem is $1.5^n$-SETH hard. We'll prove one of the two circuit lower bounds: a super-linear lower bound for Boolean series-parallel circuits computing a language from $\E^\NP$ or an $n^{1.2}$ lower bound on the size of arithmetic circuits computing an explicit $n$-variate polynomial of constant degree. We note that the actual result proven in~\cref{thm:intromain} rules out $\lambda^n$-hardness for every $\lambda>1$ and is capable of proving arithmetic circuit lower bounds of $n^\gamma$ for every $\gamma>1$.

First we find a low-degree polynomial on exponentially-many variables that can be used to solve \emph{any} \problem{Hamiltonian path} instance of a given size. We call this a polynomial formulation of \problem{Hamiltonian Path}.\footnote{In fact, our main result (\cref{thm:main}) rules out SETH-hardness results for all problems that admit polynomial formulations on $c^n$ variables of instances of size~$n$ for every~$c>1$.}
Then we show how to find a small arithmetic circuit computing this polynomial formulation. After that, we give a non-deterministic algorithm for the \problem{$k$-TAUT} problem, and conclude with circuit lower bounds.
\paragraph{Polynomial formulations of \problem{Hamiltonian Path}.}
First we give the following polynomial formulation of \problem{Hamiltonian Path} on a graph with node set $[\ell]$. Without loss of generality, we assume that $\ell$ is a multiple of~$10$. For every set $S$ of $\ell/10$ nodes, every node $u\in S$, and every node $v\not\in S$, we introduce a variable $x_{S,u,v}$. Thus, in total we have $t\leq \binom{\ell}{\ell/10}\cdot \ell^2\leq 2^{0.47\ell}$ variables. Consider the following degree-$10$ polynomial $P$ of~$t$ variables. For each partition of $[\ell]=S_1 \sqcup \dotsb \sqcup S_{10}$ into sets of size $\ell/10$, and for each $v_1\in S_1,\dotsc,v_{10}\in S_{10}$, and $v_{11}\in[\ell]$ we add the monomial
\[
x_{S_1,v_1,v_2}\cdot x_{S_2,v_2,v_3}\cdots x_{S_{10},v_{10},v_{11}} \,.
\]
Now for an instance $G$ of \problem{Hamiltonian path}, where $G$ is a graph on $\ell$ nodes, we assign the following values to the $t$ variables of the polynomial~$P$. If there is a Hamiltonian path visiting each node of~$S$ exactly once, starting at the node $u$ and ending at a node neighboring~$v$, then we set ${x_{S,u,v}=1}$, otherwise we set ${x_{S,u,v}=0}$. Note that the polynomial~$P$ evaluates to~$0$ if and only if the original graph has no Hamiltonian paths. Indeed, a Hamiltonian path can be partitioned into $10$~parts of length~$\ell/10$ with starting nodes $v_1,\ldots, v_{10}$ such that all variables of the monomial $x_{S_1,v_1,v_2}\cdot x_{S_2,v_2,v_3}\cdots x_{S_{10},v_{10},v_{11}}$ are assigned ones. The converse is also true: if all variables of some monomial are assigned ones, then this gives us a Hamiltonian path in~$G$. 

\paragraph{Time complexity of polynomial formulations.}
Now we bound the running time for computing the polynomial~$P$. This polynomial can be constructed by listing all $10$-tuples of $(\ell/10)$-subsets of $[\ell]$ together with nodes $v_1,\ldots,v_{11}$, which can be done in time $O\left(\binom{\ell}{\ell/10}^{10} \ell^{O(1)}\right)\leq2^{4.7\ell}$. This polynomial is not particularly helpful for solving an instance of \problem{Hamiltonian Path} on $\ell$ nodes since the problem can be solved in time~$2^\ell\ell^{O(1)}$ while only constructing this polynomial takes time $2^{4.7\ell}$. Nevertheless, we'll later use the self-reducibility property of \problem{TAUT} to reuse one polynomial to solve exponentially many instances of \problem{Hamiltonian Path}.

Let us now bound the running time for computing the assignment of the variables for an instance~$G$ on $\ell$ nodes.  In order to find the assignment of the variables, it's sufficient to solve $t$ instances of Hamiltonian path on $\ell/10$ nodes. Since each instance can be solved in time $2^{\ell/10}\ell^{O(1)}$, we bound the total running time by 
\[t\cdot 2^{\ell/10}\ell^{O(1)} \leq \binom{\ell}{\ell/10}\cdot 2^{\ell/10}\cdot\ell^{O(1)}=O(2^{0.47\ell+\ell/10})=O(1.49^\ell)\;.\]
Thus, in time~$2^{4.7\ell}$ we construct a degree-$10$ polynomial with at most $t\leq 2^{0.47\ell}$ variables, such that each instance of \problem{Hamiltonian path} on $\ell$ nodes can be reduced in time $1.49^\ell$ to evaluating this polynomial at a $0/1$ point. Note that since the polynomial has $2^{4.7\ell}$ monomials with coefficients one, and we only evaluate it at $0/1$ points, the maximum value we can obtain is $\leq2^{4.7\ell}$, so we can as well assume that our polynomial $P$ is over $\mathbb{Z}_p$ for $2 \cdot 2^{4.7\ell} \leq p \leq 4\cdot 2^{4.7\ell}$. 

\paragraph{The first circuit lower bound.}
If the constructed polynomial $P$ of $t$ variables doesn't have arithmetic circuits of size $t^{1.2}$, then we have our first circuit lower bound. Indeed, we have an explicit family of constant-degree $t$-variate polynomials $P_t$ that require arithmetic circuits of size at least $t^{1.2}$. To see that this family of polynomials is explicit, recall that this $t$-variate polynomial can be constructed in time polynomial in the number of variables:
$O\left(\binom{\ell}{\ell/10}^{10} \ell^{O(1)}\right) \leq t^{11}$. Thus, in the following we assume that the polynomial $P$ does have circuits of size $t^{1.2}$. From this (together with the assumed $1.5^n$-SETH hardness of \problem{Hamiltonian Path}) we will prove the second circuit lower bound. In fact, we will refute NSETH, which, as discussed earlier, implies a super-linear circuit lower bound for Boolean series-parallel circuits~\cite{DBLP:journals/iandc/JahanjouMV18}. Therefore, in the rest of this section, we will show how to solve the \problem{$k$-TAUT} problem in non-deterministic time $2^{(1-\eps)n}$ for constant $\eps>0$.

\paragraph{Approximation of MACP.}
We have a $t$-variate polynomial~$P$ that solves \problem{Hamiltonian Path} on graphs with $\ell$ nodes, and we know that this polynomial has circuits of size $t^{1.2}$, but we don't have those circuits. We use the result of Strassen~\cite{strassen1973vermeidung} that asserts that if there is a circuit of size $t^{1.2}$ computing a degree-$10$ polynomial, then there exists a circuit of size $O(t^{1.2})$ computing the same polynomial, where every gate of the circuit computes a polynomial of degree at most~$10$ (see \cref{cor:bounded-degree}). We use the power of non-deterministic algorithm to guess such a circuit, but we need to verify that the guessed circuit indeed computes the polynomial~$P$. For this, we write down the polynomials computed at each gate of the circuit. Since in our modified circuit, each gate computes a \mbox{degree-$10$} polynomial, each such polynomial can be written in time $O(t^{10})$. Performing all operations (modulo~$p$) over the polynomials in this circuit will take time $O(t^{1.2}\cdot t^{20}\cdot \log^2p)\leq 2^{21.2\cdot0.47\ell}\cdot\ell^{O(1)}\leq2^{10\ell}$. By comparing the list of the monomials of the polynomial~$P$ with the monomials computed at the output gate, we verify the guessed circuit.

\paragraph{Non-deterministic algorithm for \problem{$k$-TAUT}.}
Below we present an algorithm solving every \problem{$k$-TAUT} instance on~$n$ variables in non-deterministic time $2^{(1-\eps)n}$ for constant $\eps$ independent of~$k$. Recall that this refutes NSETH and, thus, implies the second circuit lower bound.

We consider all assignments of the first $(1-\alpha)n$ Boolean variables for $\alpha=1/20$. This gives us a set of $2^{(1-\alpha)n}$ instances of \problem{$k$-TAUT} on $\alpha n$ variables each. Solutions to these $2^{(1-\alpha)n}$ instances will give us the solution to the original instance. Each of these instances of \problem{$k$-TAUT} will later be reduced (via the assumed fine-grained reduction from \problem{$k$-SAT} to \problem{Hamiltonian Path}) to (possibly exponentially many) instances of \problem{Hamiltonian Path}. Since we assume $1.5^n$-SETH hardness of \problem{Hamiltonian Path}, an instance of \problem{$k$-TAUT} on $\alpha n$ variables must be reduced to instances of \problem{Hamiltonian Path} on at most $\ell$ nodes, where $\ell$ satisfies $1.5^\ell \leq 2^{\alpha n}$. In particular, $\ell<n/11$. In this proof overview we assume for simplicity that all instances produced by the fine-grained reduction have size $\ell$, (in the full proof of \cref{thm:main} we generate $\ell$ different polynomials, one for each integer up to $\ell$, to~solve \problem{Hamiltonian Path} instances of any size $\leq \ell$).

First, we construct the polynomial $P$ that solves the instances of \problem{Hamiltonian Path} on $\ell$ nodes. Recall that we can construct this polynomial in time $2^{4.7\ell}\ll 1.9^n$ (we only need to write down this polynomial once, and then we will use this polynomial for all instances of \problem{Hamiltonian Path}). Then we find a circuit of size $O(t^{1.2})$ computing~$P$ in non-deterministic time $2^{10\ell}\ll1.9^n$. Now we are ready to solve all $2^{(1-\alpha)n}$ instances of \problem{$k$-TAUT} on $\alpha n$ variables. Indeed, each such instance we reduce to (a number of instances of) \problem{Hamiltonian Path} on $\ell$ nodes. We find the assignment of the variables of~$P$ in time $1.49^{\ell}$, and evaluate the circuit of size $O(t^{1.2})$ modulo~$p$ in time $O(t^{1.2}\log^2p)\leq 1.49^\ell$. Since we solve each instance of \problem{Hamiltonian Path} on $\ell$ nodes in time $1.49^\ell$, $1.5^n$-SETH hardness of \problem{Hamiltonian Path} implies a $2^{(1-\delta)\alpha n}$-algorithm for \problem{$k$-TAUT} on $\alpha n$ variables (for constant $\alpha>0$ independent of~$k$). Therefore, each of the $2^{(1-\alpha)n}$ instances of \problem{$k$-TAUT} is solved in time $2^{(1-\delta)\alpha n}$, and the total (non-deterministic) running time to solve the original instance of \problem{$k$-TAUT} is $2^{(1-\alpha)n}\cdot2^{(1-\delta)\alpha n}=2^{(1-\alpha\delta)n}$.

\paragraph{Summary.} We~start by~showing that \problem{Hamiltonian Path} can be~expressed as a constant-degree polynomial~$P$ in exponentially many variables. That is, checking whether the input graph has a Hamiltonian path boils down to~evaluating~$P$.

Assuming that~$P$ can be~computed by a~small arithmetic circuit~$C$
	(if~this is not the case, we~have an arithmetic circuit lower bound), we~show that there exists a~\emph{homogeneous} 
	circuit~$C'$ computing~$P$ that is not much larger than~$C$.
	We~guess~$C'$ non-deterministically. The fact that~$C'$
	is~homogeneous and that $P$~has constant degree allows~us to efficiently verify the correctness of~$C'$
	by~checking the computation at~each gate manually.
	
	The circuit $C'$ allows us to~solve \problem{Hamiltonian Path} efficiently. Since 
	we have a~fine-grained reduction from \problem{$k$-SAT} to~\problem{Hamiltonian Path},
	this implies a~fast non-deterministic algorithm for \problem{$k$-TAUT}. The obtained algorithm refutes NSETH, which, in~turn, gives super-linear lower bounds for Boolean series-parallel circuits.

\subsubsection{Parameterized Problems}
In the (parameterized) \problem{$k$-Path} problem, the goal is to check if the given graph on $n$ nodes contains a simple path on $k$ nodes. Similarly to the case of \problem{Hamiltonian Path}, the best known algorithm for \problem{$k$-Path} runs in time~$2^k$~\cite{williams2009finding}. The result sketched in the previous section (conditionally) rules out $\lambda^n$-SETH hardness of \problem{Hamiltonian Path} for every $\lambda>1$. Yet, this leaves a possibility to prove a $2^k$ lower bound on the complexity of \problem{$k$-Path} for \emph{some} function $k\vcentcolon=k(n)$. We extend our framework to parameterized problems to prove the same barriers for SETH-hardness for parameterized problems. 

While most of the machinery from the previous section extends to the case of parameterized problems, the new issue arises in the polynomial formulations of the problems. Indeed, if we apply the polynomial formulation from the previous section to the \problem{$k$-Path} problem, then we inevitably have at least $\binom{n}{k/10} \geq n^{\Omega(k)}$ variables. Such a polynomial formulation can only rule out $\left(n^{\Omega(k)}n^{O(1)}\right)$-SETH hardness results, which are of no interest since the problem can be solved in time $2^{k}n^{O(1)}$. To overcome this issue, we design a different polynomial formulation of the parameterized \problem{$k$-Path} problem. This formulation uses a certain pseudorandom object called a splitter, on which we elaborate below.

\paragraph{Splitters.} A family~$H$ of functions $f\colon[n]\to[k]$ is a $k$-perfect hash family if for every $S\subseteq[n]$ of size $|S|=k$, there is a function $h\in H$ which is injective on~$S$. An $(n,k,ck)$-splitter is a relaxation of this notion where the functions~$f$ have range $[ck]$ for a constant $c\geq 1$.  There are known constructions of splitters~\cite{NSS95} and $k$-perfect hash families~\cite{AYZ95} of size $e^{k(1+o(1))}k\log{n}$, but our polynomial formulations require splitters of size $e^{\frac{k}{g(c)}}$ for an unbounded function~$g$. While a simple probabilistic argument shows that a set of $\approx e^{\frac{k}{c}}\cdot k\log{n}$ random functions forms an $(n,k,ck)$-splitter with high probability, in \cref{sec:splitters}, we show how to efficiently construct an explicit family of such functions of size $e^{\frac{k}{c}(1+o(1))}\cdot k\log{n}$.

\paragraph{Polynomial formulations.} The color-coding technique, introduced in~\cite{AYZ95}, solves \problem{$k$-Path} on a graph with node set~$[n]$ as~follows:
    assign a~random color $c \in [k]$ to~every node; then,
    all nodes of a~$k$-path receive different colors with
    probability about $e^{-k}$; at~the same time, one can find
    such a~colorful path in~time $2^kn^{O(1)}$. This gives
    us a~randomized $2^{O(k)}n^{O(1)}$-time algorithm for 
    \problem{$k$-Path}. This algorithm can be derandomized by utilizing a~$k$-perfect family~$H$ of~hash functions $f \colon [n] \to [k]$: go~through all~$f \in H$ and assign the~color~$f(v)$ to every node $v$~\cite{AYZ95}. Since $H$ guarantees that for every $k$-path there is a~coloring $f \in H$ that assigns
    different colors to all nodes of~the path, one of the hash functions~$f$ will lead to a $k$-path. 
    \cite{AYZ95} gives a~construction of~$H$ of~size
    $e^{k(1+o(1))}$, but this would result in a polynomial formulation with $t\geq e^{k(1+o(1))}$ variables. Recall that in order to prove a lower bound of $t^{\gamma}$ on the size of arithmetic circuits, we need $t^\gamma$ to be much less than the assumed complexity of \problem{$k$-Path}. Thus, for our purposes, we~need a~family of hash function of~much smaller size.
    To~achieve this, 
    we~allow a~larger number of~colors. This will decrease the number of variables as desired at the cost of increasing the degree and the time needed to compute the coefficients of the polynomial. Fortunately, our construction is robust enough to tolerate this drawback. 
    
    We take a~family~$H$ of $(n, k, c k)$-splitters of~size $e^{\frac{k}{c}}$ that can be computed in time $2^{9k}n^{O(1)}$ described above.
    Given a~coloring $f \colon [n] \to [c k]$, we
    find an~$f$-colorful $k$-path in~time 
    $2^{c k}n^{O(1)}$~\cite{AYZ95}.
    
    The main idea of the polynomial formulation then is the following. For a~coloring $f \colon [n] \to [c k]$, an~$f$-colorful $k$-path~$\pi$ can be~partitioned into 
    $c$~paths $\pi_1, \dotsc, \pi_{c}$ such that each path uses at~most $k/c$ colors. Then, a~partition $\pi_1, \dotsc, \pi_{c}$ is~valid if the paths are color-disjoint and there~is an~edge from the last node of~$\pi_i$ to~the first node of~$\pi_{i+1}$, for all~$i \in [c-1]$. Such valid paths will lead to monomials of degree~$c$, and the resulting polynomial will have only $e^{\frac{k}{c}} n^{O(1)}$ variables for arbitrarily large constant~$c$.

\subsection{Related Work}
The closest in spirit to our result is the work of Carmosino et al.~\cite{DBLP:conf/innovations/CarmosinoGIMPS16}. The work~\cite{DBLP:conf/innovations/CarmosinoGIMPS16} proves that if APSP, $3$-SUM, or one of a few other problems is SETH-hard, then NSETH is false. In~this paper, we relax the implication but prove a stronger barrier for a much wider class of problems: if one of the problems listed in~\cref{op:finegrained} is SETH-hard, then NSETH is false or we have arbitrarily large polynomial lower bounds for arithmetic problems. Another important difference between~\cite{DBLP:conf/innovations/CarmosinoGIMPS16} and the present work is that~\cite{DBLP:conf/innovations/CarmosinoGIMPS16} rules out \emph{some} hardness results (say, $n^{3/2}$-SETH hardness of \problem{$3$-SUM}), while the present work rules out \emph{all} exponential SETH-hardness results~$\lambda^n$ for an explicit constant $\lambda>1$.

The work of Kabanets and Impagliazzo~\cite{DBLP:journals/cc/KabanetsI04} is the closest one to our main result in terms of techniques. In~fact, we view our main result (\cref{thm:main}) as a fine-grained version of~\cite{DBLP:journals/cc/KabanetsI04} with the following modifications. The work~\cite{DBLP:journals/cc/KabanetsI04} shows that if \problem{Polynomial Identity Testing} (PIT) has a~deterministic polynomial time algorithm, then either $\NEXP \not \subseteq \Ppoly$, or the \problem{Permanent} problem requires super-polynomial size arithmetic circuits. To~show this, they guess 
	small arithmetic circuits and verify them 
	using the assumed deterministic algorithm for PIT
	and downward self-reducibility of \problem{Permanent}.
	In~this paper, efficient 
	verification is~possible due to the low degree of the~polynomial computed by the circuit and the~homogenization trick allowing us to~consider only circuits with a specific structure.
	
Finally, several works provided efficient algorithms for problems studied in fine-grained complexity (e.g., Gr{\o}nlund and Pettie~\cite{DBLP:journals/jacm/GronlundP18} for \problem{$3$-SUM} and Williams~\cite{W16} for \problem{Multipoint Circuit Evaluation}) using self-reducibility of the problem and a construction that improves on the trivial running time for small instances of the problem. In~this paper,
	we~also use the self-reducibility trick: first, in the allocated time we construct arithmetic circuits solving all instances of \problem{$k$-SAT} on $\alpha n$ variables significantly faster than in $2^{\alpha n}$, and then we~reduce 
	an $n$-variate instance of \problem{$k$-SAT} to a~series of instances on $\alpha n$ variables for a~sufficiently small $\alpha>0$.

\subsection{Structure} The rest of the paper is organized as follows. In \cref{sec:prelims}, we give all necessary background material, including the definitions of SETH-hardness and polynomial formulations. In \cref{sec:macp}, we define an approximate version of the Minimum Arithmetic Circuit Problem (MACP), and provide a non-deterministic algorithm for this version of the problem. In \cref{sec:mainresultsection}, we prove the main result of this paper: SETH-hardness of a problem admitting polynomial formulations would imply one of the two aforementioned circuit lower bounds. \cref{sec:splitters} contains a construction of a pseudorandom object needed for certain polynomial formulations: deterministic splitters over alphabets of linear size. In \cref{sec:formulationssection}, we give polynomial formulations of all problems considered in this paper. Finally, \cref{sec:apx} contains proofs of technical claims omitted in the main part of the paper.

\section{Preliminaries}\label{sec:prelims}
For two sets $S$ and $T$, by $S \sqcup T$ we denote their disjoint union. All logarithms in this paper are base $2$, i.e., $\log 2^n = n$.
Recall that the $O^*(\cdot)$ notation suppresses polynomial factors, e.g., $n^2 2^n = O^*(2^n)$.

We~use square brackets in~two ways: for a~positive integer~$p$, $[p]=\{1, \dotsc, p\}$; for a~predicate~$P$, $[P]=1$ if $P$~is~true and $[P]=0$ otherwise (this~is Iverson bracket).

\subsection{Boolean Circuits}\label{sec:spckts}
\begin{definition}
A Boolean circuit~$C$ with variables $x_1,\ldots,x_n$ is a directed acyclic graph as follows. Every node has in-degree zero or two. The in-degree zero nodes are labeled either by variables $x_i$ or constants~$0$ or~$1$. The in-degree two nodes are labeled by binary Boolean functions that map $\{0,1\}^2$ to $\{0,1\}$. The only gate of out-degree zero is the output of the circuit.
\end{definition}
A Boolean circuit $C$ with variables $x_1,\ldots,x_n$ computes a Boolean function $f\colon\{0,1\}^n\to\{0,1\}$ in a natural way. We define the size of~$C$ as the number of gates in~it, and the Boolean circuit complexity of a function as the minimum size of a circuit computing it.

A circuit is called \emph{series-parallel} if there exists a numbering $\ell$ of the circuit's nodes such that for every wire $(u,v), \ell(u)<\ell(v)$, and no pair of wires $(u,v), (u',v')$ satisfies $\ell(u)<\ell(u')<\ell(v)<\ell(v')$.

The best known lower bound on the size of a Boolean circuit for functions in~\P{} is~${3.1n-o(n)}$~\cite{31n}. In fact, this bound remains the best known bound even for the much larger class of languages $\E^{\NP}$ even against the~restricted model of~series-parallel circuits. A long-standing open problem in Boolean circuit complexity is to find an explicit language that cannot be computed by linear-size circuits from various restricted circuit classes~\cite[Frontier~3]{Valiant, AB2009}.

\spckts*

\subsection{Arithmetic Circuits}\label{sec:ackts}
\begin{definition}
An arithmetic circuit $C$ over a ring~$R$ and variables $x_1,\ldots,x_n$ is a directed acyclic graph as follows. Every node has in-degree zero or two. The in-degree zero nodes are labeled either by variables $x_i$ or elements of $R$. The in-degree two nodes are labeled by either $+$ or $\times$. Every gate of out-degree zero is called an output gate.
\end{definition}
We will typically take $R$ to be $\Z$ or $\Z_p$ for a prime number $p$.
A single-output arithmetic circuit~$C$ over~$R$ computes a polynomial over $R$ in a natural way. We say that $C$ computes a polynomial $P(x_1,\ldots,x_n)$ if the two polynomials are \emph{identical} (as opposed to saying that $C$ computes $P$ if the two polynomials agree on every assignments of $(x_1,\ldots,x_n)\in R^n$). We define the size of~$C$ as the number of edges in~it, and the arithmetic circuit complexity of a polynomial as the minimum size of a circuit computing it.

While it's known~\cite{s73,bs83} that the polynomial $x_1^r+\ldots+x_n^r$ requires arithmetic circuits over $\F$ of size $\Omega(n\log(r))$ (if $r$ doesn't divide the characteristic of~$\F$), one of the biggest challenges in algebraic complexity is to prove stronger lower bounds on the arithmetic circuit complexity of an explicit polynomial of constant degree.
\ackts*
\subsection{SETH Conjectures}
Below, we~state rigorously two SETH-conjectures that we will use in this work.

\begin{itemize}
    \item Strong exponential time hypothesis 
    (SETH)~\cite{ipz98,ip99}: for every $\varepsilon>0$, there exists~$k$ such that
        \[\text{\problem{$k$-SAT}} \not \in \TIME[2^{(1-\varepsilon)n}] \, .\]
    \item Non-deterministic SETH (NSETH)~\cite{DBLP:conf/innovations/CarmosinoGIMPS16}: for every $\varepsilon>0$, there exists~$k$
    such that
    \[\text{\problem{$k$-TAUT}} \not \in \NTIME[2^{(1-\varepsilon)n}]\,,\] where \problem{$k$-TAUT}
    is~the language of~all $k$-DNFs that are tautologies.
\end{itemize}
  \cite{DBLP:journals/iandc/JahanjouMV18} proved that if SETH is false, then $\E^{\NP}$ requires series-parallel Boolean circuits of~size~$\omega(n)$. \cite[Corollary B.3.]{DBLP:conf/innovations/CarmosinoGIMPS16} extended this result and showed that refuting NSETH is sufficient for such a circuit lower bound.
    \begin{theorem}[{\cite[Corollary B.3.]{DBLP:conf/innovations/CarmosinoGIMPS16}}]\label{thm:nseth}
    If NSETH is false then $\E^{\NP}$ requires series-parallel Boolean circuits of size $\omega(n)$.
    \end{theorem}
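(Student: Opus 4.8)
The plan is to establish the statement in the form ``$\E^{\NP}$ having series-parallel circuits of size $O(n)$ and NSETH being false cannot both hold'' (which is logically equivalent to the asserted implication). So I would suppose that $\E^{\NP}$ has series-parallel circuits of size $O(n)$, and that there is a constant $\eps>0$ with \problem{$k$-TAUT} (equivalently, unsatisfiability of $k$-CNFs) in $\NTIME[2^{(1-\eps)n}]$ for every constant $k$, and derive a contradiction. By the non-deterministic time hierarchy theorem, fix a language $H\in\NTIME[2^n]$ outside $\NTIME[2^{(1-\eps')n}]$, where $\eps'>0$ will be chosen small enough given the constants produced below; the goal is then to place $H$ back in $\NTIME[2^{(1-\eps')n}]$. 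To do this I would invoke a \emph{local reduction} in the style of Jahanjou--Miles--Viola~\cite{DBLP:journals/iandc/JahanjouMV18}: on an input $x$ of length $n$, membership $x\in H$ is equivalent to satisfiability of a $k$-CNF $\phi_x$ with $k=O(1)$ on $N=2^{n}n^{O(1)}$ variables whose $j$-th clause is produced by a \emph{series-parallel} circuit of size $O(\log N)=O(n)$. Making this reduction nearly linear and series-parallel-structure-preserving is the technical core of~\cite{DBLP:journals/iandc/JahanjouMV18}, and the $\NP$ oracle of $\E^{\NP}$ is what lets one absorb the polylogarithmic overhead of a naive Cook--Levin encoding.

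Next I would guess a succinct witness. By the easy-witness idea together with the hypothesis that $\E^{\NP}$ has $O(n)$-size series-parallel circuits: whenever $\phi_x$ is satisfiable, the map ``$i\mapsto$ ($i$-th bit of the lexicographically first satisfying assignment of $\phi_x$)'' --- a slice of an $\E^{\NP}$ language on inputs of length $O(n)$ --- is computed by a series-parallel circuit $W$ of size $O(n)$, so the non-deterministic algorithm for $H$ guesses such a $W$. To verify $W$ it must check that every clause of $\phi_x$ is satisfied, i.e.\ that $p_W(j):=[\text{clause }j\text{ of }\phi_x\text{ is satisfied by }W]$ --- which composes the clause generator with $k$ copies of $W$ and hence has a series-parallel circuit of size $O(n)$ on $m=\log N=(1+o(1))n$ inputs --- is identically $1$. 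Here Valiant's structural lemma for series-parallel graphs does the crucial work: deleting $\mu\cdot m$ wires from a linear-size series-parallel circuit, with $\mu=\mu(d)\to 0$ as $d\to\infty$, reduces its depth to $d$, so promoting the deleted wires to fresh variables turns ``$p_W\equiv 1$'' into a \problem{$k$-TAUT} instance (with $k=2^{O(d)}=O(1)$) on only $(1+\mu)m$ variables, constructible in time $n^{O(1)}$. Choosing $d$, hence $\mu$, small enough, the assumed algorithm decides this instance in time $2^{(1-\eps)(1+\mu)m}=2^{(1-\eps_1)n}$ for a constant $\eps_1>0$; taking $\eps'<\eps_1$ then yields $H\in\NTIME[2^{(1-\eps')n}]$, the desired contradiction. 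Undoing the two hypotheses gives precisely ``NSETH false $\Rightarrow$ $\E^{\NP}$ requires series-parallel circuits of size $\omega(n)$.''

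The hard part will be purely quantitative bookkeeping: each step --- the local reduction, the compression of the witness into a series-parallel circuit, and the Valiant-based conversion of the verification circuit into a bounded-width formula --- must incur at most a $(1+o(1))$-factor blow-up in the relevant size parameter, and the series-parallel property must survive the compositions involved, since a constant-factor loss anywhere would push the running time past $2^n$ and destroy the contradiction. All of the ingredients needed for this are available from~\cite{DBLP:journals/iandc/JahanjouMV18}, from Valiant's lemma, and from the non-deterministic time hierarchy, and the assembly is exactly what is carried out in~\cite[Corollary B.3]{DBLP:conf/innovations/CarmosinoGIMPS16}.
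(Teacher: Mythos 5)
The paper does not prove \cref{thm:nseth}; it is stated as a black-box citation of~\cite[Corollary B.3]{DBLP:conf/innovations/CarmosinoGIMPS16}, so there is no in-paper argument to compare against. Your reconstruction faithfully captures the Williams-style pipeline that the cited corollary (building on~\cite{DBLP:journals/iandc/JahanjouMV18}) actually runs: diagonalize via the nondeterministic time hierarchy to obtain $H\in\NTIME[2^n]\setminus\NTIME[2^{(1-\eps')n}]$; apply JMV's local, series-parallel-preserving Cook--Levin reduction to a succinct $k$-CNF $\phi_x$; use the easy-witness trick with the $\E^{\NP}$ circuit-size hypothesis to compress a satisfying assignment into an $O(n)$-size series-parallel circuit $W$; use Valiant's wire-deletion for series-parallel graphs to turn the verification ``$p_W\equiv 1$'' into a $k'$-TAUT instance on $(1+o(1))n$ variables; and finish with the assumed fast nondeterministic $k'$-TAUT algorithm.

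Two small slips, neither fatal. First, you write ``choosing $d$, hence $\mu$, small enough'' --- you need $d$ \emph{large} so that $\mu(d)$ is small; and the new-variable count is really $m+\mu\cdot(\text{circuit size})=(1+O(\mu))m$ rather than exactly $(1+\mu)m$, but since the circuit size is $O(m)$ this changes only the constant. Second, the $\NP$ oracle in $\E^{\NP}$ is not what absorbs the polylogarithmic overhead of Cook--Levin; that overhead contributes only $O(\log n)=o(n)$ extra input bits and is swallowed by the asymptotics. The $\NP$ oracle is needed so that the lexicographically-first-satisfying-assignment language (on which the easy-witness compression is applied) lies inside $\E^{\NP}$ and hence is covered by the circuit-size hypothesis. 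With those corrections, the sketch is a sound account of the cited proof.
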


\subsection{Fine-grained Reductions}\label{sec:fine-grained-reductions}
In this section we rigorously define fine-grained reductions and SETH-hardness. 
First, we give the definition of fine-grained reductions from~\cite[Definition~6]{DBLP:conf/iwpec/Williams15} with a small modification that specifies the dependence of $\delta$ on $\varepsilon$ (we'll need this modification when defining SETH-hardness as SETH-hardness is a \emph{sequence} of reductions from \problem{$k$-SAT} for every value of $k$).
\begin{definition}[Fine-grained reductions]\label{deg:fg}
    Let $P,Q$ be~problems, $p, q \colon \mathbb{Z}_{\ge 0} \to \mathbb{Z}_{\ge 0}$ be non-decreasing functions and ${\delta\colon\R_{>0}\to\R_{>0}}$.
    We say that $(P, p(n))$ \emph{$\delta$-fine-grained reduces} to~$(Q, q(n))$
    and write $(P,p(n)) \le_{\delta} (Q,q(n))$,
    if~for every $\varepsilon>0$ and $\delta=\delta(\eps)>0$, there exists
    an~algorithm~$\mathcal A$ for~$P$ with oracle access to~$Q$,
    a~constant~$d$, a~function $t(n) \colon \mathbb{Z}_{\ge 0} \to \mathbb{Z}_{\ge 0}$, such that on~any instance of~$P$ of size~$n$, the algorithm~$\mathcal A$
    \begin{itemize}
        \item runs in time at~most $d(p(n))^{1-\delta}$;
        \item produces at~most $t(n)$~instances of~$Q$ adaptively: every instance depends on~the previously produced instances
        as~well as~their answers of~the oracle for~$Q$;
        \item the sizes $n_i$ of the produced instances satisfy the inequality
        \[\sum_{i=1}^{t(n)}q(n_i)^{1-\varepsilon} \le d(p(n))^{1-\delta} \, .\]
    \end{itemize}
\end{definition}
We say that $(P, p(n))$ \emph{fine-grained reduces} to~$(Q, q(n))$
    and write $(P,p(n)) \le (Q,q(n))$ if $(P,p(n)) \le_{\delta} (Q,q(n))$ for some function $\delta\colon\R_{>0}\to\R_{>0}$.

It is~not difficult to~see that
if $(P,p(n)) \le (Q,q(n))$, then any improvement over the running time $q(n)$ for the problem~$Q$
implies an~improvement over the running time~$p(n)$ for the problem~$P$: for any $\varepsilon>0$, there is $\delta>0$, such that if $Q$~can be~solved in time $O(q(n)^{1-\varepsilon})$, then $P$~can be~solved in time $O(p(n)^{1-\delta})$.

\begin{definition}[SETH-hardness]\label{def:seth-hardness}
For a constant $\lambda>1$, we say that a problem $P$ is \emph{$\lambda^n$-SETH-hard} if there exists a function $\delta\colon\R_{>0}\to\R_{>0}$ and for every $k\in\N$, \[(\problem{$k$-SAT},2^n)\leq_{\delta} (P,\lambda^n) \,.\]
\end{definition}
If a problem $P$ is $\lambda^n$-SETH-hard, then any algorithm solving $P$ in time $\lambda^{(1-\varepsilon)n}$ implies an algorithm solving \problem{$k$-SAT} in time $2^{(1-\delta(\eps))n}$ for all~$k$, thus, breaking SETH.

We now similarly define SETH-hardness of parameterized problems. 
\begin{definition}[SETH-hardness of parameterized problems]\label{def:sethparam}
    Let $P$ be a parameterized problem with a parameter~$k$, $\lambda>1$ be a constant, and ${\delta\colon\R_{>0}\to\R_{>0}}$.
    We say that $P$ is \emph{$\lambda^k$-SETH-hard} if for every $q,d\in\N$, $\varepsilon>0$, and $\delta=\delta(\eps)>0$, there exists
    an~algorithm~$\mathcal A$ for \problem{$q$-SAT} with oracle access to~$P$, a~function $t(n) \colon \mathbb{Z}_{\ge 0} \to \mathbb{Z}_{\ge 0}$, such that on~any instance of~\problem{$q$-SAT} of size~$n$, the algorithm~$\mathcal A$
    \begin{itemize}
        \item runs in time at~most $O(2^{(1-\delta)n})$;
        \item produces at~most $t(n)$~instances of~$P$ adaptively: every instance depends on~the previously produced instances
        as~well as~their answers of~the oracle for~$P$;
        \item the length $\ell_i$ and parameters $k_i$ of the produced instances satisfy the inequality
        \[\sum_{i=1}^{t(n)}\lambda^{(1-\varepsilon)k_i}\cdot \ell_i^{d} \le O(2^{(1-\delta)n}) \, .\]
    \end{itemize}
\end{definition}
It particular, if a parameterized problem $P$ is $\lambda^k$-SETH-hard, then any algorithm solving $P$ in time $\lambda^{(1-\varepsilon)k} |x|^d$ implies an algorithm solving \problem{$q$-SAT} in time $O(2^{(1-\delta(\eps))n})$ for all~$q$, thus, breaking SETH.

\subsection{Polynomial Formulations}
In this work, we consider polynomials over a ring~$R$, where $R$ is typically $\Z$ or $\Z_p$ for a prime number $p$. By a~\emph{family of~polynomials~$\mathcal P$}
we~mean an~infinite sequence
of~polynomials $P_{i_1}, P_{i_2}, \dotsc, $ such that $i_1<i_2<\dotsb$ and $P_n$ is a~multivariate polynomial
depending on $n$~variables.
We say that $\mathcal P$~has \emph{degree~$d(n)$}
if, for every~$n$, every monomial of~$P_n$ has total degree
at~most~$d(n)$.

\begin{definition}[$\Delta$-explicit family of~polynomials]
	\label{def:explicitfamily}
	For a~constant~$\Delta$, we~say that $\mathcal P$~is \emph{$\Delta$-explicit}, if, for all~$n$,
	the degree of~$P_n$ is~at~most $\Delta$ and all coefficients of~$P_n$ can be~computed (simultaneously) in~time $O(n^{\Delta})$.
\end{definition}

Equipped with this definition, we're in a position to state the required properties of polynomial formulations that will allow us to prove barriers for hardness proofs. For the case of non-parameterized problems, we define polynomial formulations as follows.

Let $A$ be a computational problem, and for every $n\in\N$, let $I_n$ be the set of instances of~$A$ of size~$n$. A polynomial formulation of a computational problem~$A$ is a $\Delta$-explicit family of polynomials ${\mathcal P}=(P_s)_{s\geq1}$ and a family of maps $\phi=(\phi_n)_{n\geq1}$ where $\phi_n\colon I_n \to \Z^{s(n)}$ satisfying the following. In order to check if $x\in I_n$ is a yes instance of~$A$, it suffices to map $y=\phi(x) \in \Z^{s(n)}$ and evaluate the corresponding polynomial $P_{s(n)}(y)$.
\begin{definition}[Polynomial formulations]
    Let $A$ be a computational problem and for every $n\in\N$, let $I_n$ be the set of instances of~$A$ of size~$n$. Let $\Delta$ be a constant, $T \colon \mathbb N \to \mathbb N$ be a~time bound, and $\mathcal P=(P_1,P_2,\dotsc)$ be a~\mbox{$\Delta$-explicit} family of~polynomials  over $\Z$. We say that $\mathcal P$
    is~a~\emph{$\Delta$-polynomial formulation of~$A$} of complexity~$T$, if there exist
    \begin{itemize}
        \item a non-decreasing function $s \colon \mathbb N \to \mathbb N$ satisfying $s(n) \le T(n)$, and an algorithm computing $s(n)$ in~time $T(n)$;
        \item a family of mappings $\phi=(\phi_1, \phi_2, \dotsc)$, where $\phi_n \colon I_n \to \Z^{s(n)}$, and an algorithm evaluating $\phi_n$ at any point in time $T(n)$
    \end{itemize}
    such that the following holds.
    For every~$n\in\N$ and every $x \in I_n$,
    \begin{itemize}
    		\item $P_{s(n)}(\phi_{n}(x)) \neq 0 \Leftrightarrow x \text{ is a yes instance of } $A$\,;$
    		\item $|P_{s(n)}(\phi_n(x))| < 2^{s(n)}$.
    \end{itemize}
\end{definition}

In order to define polynomial formulations of \emph{parameterized} problems, we need to make the following changes. An instance of the problem is now a pair $(x,k) \in I_n \times \N$, where $k$ is the parameter. The function $s$ now depends on $x$ and the value of~$k$. Similarly, each $\phi_n$ now also depends on~$k$. The time bounds on the evaluation of $s$ and $\phi$ are now $T(k) |x|^{O(1)}$ rather than $T(n)$. Since the size~$n$ of the instance~$x$ doesn't appear in these time bounds anymore (it's now replaced by the length of the bit representation $|x|$ of $x$), we don't need the index~$n$ in~$\phi_n$, and we merge the functions $(\phi_1,\phi_2,\ldots)$ into one function~$\phi$.

\begin{definition}[Polynomial formulations of parameterized problems]
Let $A$ be a parameterized computational problem and let $\mathcal I \times \N$ be the set of all instances of~$A$, where for an instance $(x,k)\in \mathcal I \times N,\, k$ is the value of the parameter. Let $\Delta$ be a constant, $T \colon \mathbb N \to \mathbb N$ be a~time bound, and $\mathcal P=(P_1,P_2,\dotsc)$ be a~\mbox{$\Delta$-explicit} family of~polynomials over $\Z$. We say that $\mathcal P$
    is~a~\emph{$\Delta$-polynomial formulation of~$A$} of complexity~$T$, if there exist
    \begin{itemize}
        \item a function $s \colon \mathcal I \times \mathbb{N} \to \mathbb N$ satisfying $s(x, k) \le T(k) |x|^{\Delta}$, and an algorithm computing $s(x, k)$ in~time $T(k) |x|^{\Delta}\,;$
        \item a function $\phi \colon \mathcal I \times \mathbb{N} \to \Z^*$ such that $\phi(x, k) \in \mathbb{Z}^{s(x, k)}$, and an algorithm computing $\phi(x, k)$ in~time~$T(k) |x|^{\Delta}\,$
    \end{itemize}
    such that the following holds. For every~$(x,k)\in \mathcal I \times \N$,
    \begin{itemize}
    	    \item $P_{s(x,k)}(\phi(x,k)) \neq 0 \Leftrightarrow (x,k) \text{ is a yes instance of } $A$\,;$
    	    \item $|P_{s(x, k)}(\phi(x, k))| < 2^{T(k)|x|^{\Delta}}\,.$
    \end{itemize}
\end{definition}

\subsection{Computational Problems}\label{sec:problems}
In this work, we~show barriers to proving hardness for the following non-parameterized and parameterized problems.
\paragraph{Non-parameterized problems.}\label{sec:problemsexact}
For each problem below, the specified parameter~$n$
is used as the~default size measure
when bounding the complexity of the problem.
It~is well known that each of these problems
can be~solved in time $2^{O(n)}$.
\begin{itemize}
    \item \problem{$k$-SAT}: given a~formula~$F$ in $k$-CNF over $n$~variables,
    check if $F$~has a~satisfying assignment.
    \item \problem{MAX-$k$-SAT}: given a~formula~$F$ in $k$-CNF over $n$~variables and an~integer~$t$, check if it~is possible to~satisfy at~least $t$~clauses of~$F$.
    \item \problem{Hamiltonian Path}: given a~directed graph~$G$ with $n$~nodes, check whether $G$~contains a~cycle visiting every node exactly once.
    \item \problem{Graph Coloring}: given a~graph~$G$ with $n$~nodes and an~integer~$t$, check whether $G$~can be~colored properly using at~most $t$~colors.    
    \item \problem{Set Cover}: given a~set family $\mathcal F \subseteq 2^{[n]}$ of size $n^{O(1)}$ and
    an~integer~$t$, check whether one can cover~$[n]$ with at~most $t$~sets from~$\mathcal F$.
    \item \problem{Independent Set}: given a~graph~$G$ with $n$~nodes and an~integer~$t$, check whether $G$~contains an~independent set of~size at~least~$t$.
	\item \problem{Clique}: given a~graph~$G$ with $n$~nodes and an~integer~$t$, check whether $G$~contains a~clique of~size at~least~$t$.
    \item \problem{Vertex Cover}: given a~graph~$G$ with $n$~nodes and an~integer~$t$, check whether $G$~contains a~vertex cover of~size at~most~$t$.
	\item \problem{$3$d-Matching}: given a~3-uniform 3-partite hypergraph~$G$ with 
	parts of~size~$n$ and an~integer~$t$, check whether $G$~contains a~matching 
	of~size at~least~$t$.
\end{itemize}
\paragraph{Parameterized problems.}\label{sec:problemsparam}
Each of~the problems below comes with a~parameter~$k$ and we~are interested
to~know how the complexity of the problem grows as a~function of the input length
and~$k$. We~say that a~problem with a~parameter~$k$ 
belongs to~the class \FPT{} if it~can be~solved
in~time $O^*(f(k))$ for some computable function~$f$. Similarly to the case of non-parameterized problems, we denote the size of an instance~$x$ of a problem (the number of nodes in the input graph, the number of variables in the input formula) by~$n$, and we denote the length (the length of the binary representation of the instance~$x$) by $|x|$.
\begin{itemize}
    \item \problem{$k$-Path}: given a~graph~$G$, check whether 
    $G$~contains a~simple path with $k$~nodes.
    \item \problem{$k$-Vertex Cover}: given a~graph~$G$, check whether $G$~contains a~vertex cover of~size at~most~$k$.
    \item \problem{$k$-Tree}: given a~graph~$G$ and a~tree~$T$ with $k$~nodes, check whether there exists a~(not necessarily induced) copy of~$T$ in~$G$.
    \item \problem{$k$-Steiner Tree}: given a~graph~$G(V,E)$ with (integer non-negative) edge weights and a~subset $T \subseteq V$ of~its nodes of~size~$k$, and an~integer $0 \le t \le |V|^{O(1)}$,
    check whether there is a~tree in~$G$ of~weight at~most~$t$ containing all nodes from~$T$.
    \item \problem{$k$-Internal Spanning Tree}: given a~graph~$G$,
    check whether there is a~spanning tree of~$G$ with at~least $k$~internal nodes.
    \item \problem{$k$-Leaf Spanning Tree}: given a~graph~$G$,
    check whether there is a~spanning tree of~$G$ with at~least $k$~leaves.
    \item \problem{$k$-Nonblocker}: given a~graph~$G$, check whether $G$~contains a~subset of nodes of~size at~least~$k$ whose complement is a~dominating set in~$G$.
    \item \problem{$k$-Path Contractibility}: given a~graph~$G$, check whether it~is
    possible to~contract at~most $k$~edges in~$G$ to~turn it~into a~path.
    \item \problem{$k$-Cluster Editing}: given a~graph~$G$, check whether it~is 
    possible to~turn~$G$ into a~cluster graph (a~set of~disjoint cliques) using 
    at~most $k$~edge modifications (additions and deletions).
    \item \problem{$k$-Set Splitting}: given a~set family $\mathcal F \subseteq 2^{[n]}$ of size $n^{O(1)}$, check whether there exists a~partition of~$[n]$ into two sets
    that splits at~least $k$~sets from $\mathcal F$.
\end{itemize}

\section{Minimum Arithmetic Circuit Problem}\label{sec:macp}
In~this section, we~show that for polynomials of~constant degree one can
find arithmetic circuits of size close to~optimal in nondeterministic polynomial time.

\begin{definition}
    Let $\mathcal P=(P_1,P_2, \dotsc)$ be a~family of polynomials.
    The \emph{minimum arithmetic circuit problem for $\mathcal P$}, denoted by $\operatorname{MACP}_{\mathcal P}(n,s)$, is: given $n, s \in \mathbb N$, find an~arithmetic circuit of~size at most~$s$ computing~$P_n$, or~report that there is no~such circuit.
\end{definition}

It~is known that when $\mathcal P$ is the family of permanent polynomials,
$\operatorname{MACP}_{\mathcal P}$ can be~solved
in time $(ns)^{O(1)}$ either by an~MA-protocol or by a~nondeterministic Turing
machine with an~oracle access to~the polynomial identity testing problem
(PIT)~\cite{DBLP:journals/cc/KabanetsI04}. In~our setting,
we do~not have oracle access to~PIT
nor do~we have any randomness. To~make~up for that, we consider the following approximate version of $\operatorname{MACP}$.

\begin{definition}
    Let $p$~be a~prime number, $\mathcal P=(P_1,P_2, \dotsc)$ be a~family of polynomials over $\Z_p$, and $c \ge 1$ be an~integer parameter.
    The problem $\operatorname{Gap-MACP}_{\mathcal P, c, p}(n, s)$ is: given $n, s \in \mathbb N$, output an~arithmetic circuit over $\Z_p$ of~size at most~$cs$ computing~$P_n$, if $P_n$ can be~computed in~$\Z_p$ by a~circuit of size at~most~$s$;
    output anything otherwise.
\end{definition}
	The reason we~allow an~abitrary output in~case $P_n$
	does not have a~circuit of~size~$s$ is~the following.
	Right after solving the $\operatorname{Gap-MACP}$ problem, we~will verify 
	that the found {circuit} is~correct. Thus, 
	if instead of a~circuit of~size~$cs$ 
	computing~$P_n$, we~are given a circuit that doesn't compute $P_n$ correctly,
	we~will reject~it in~the verification stage.
	The parameter~$p$ here is~needed to~have control over the maximum
	value of~coefficients when expanding the guessed circuit as
	a~polynomial.

We will use the following result proven by Strassen~\cite{strassen1973vermeidung}
    (see also \cite[Chapter~7.1]{DBLP:books/daglib/0090316} or
    \cite[Theorem~2.2]{DBLP:journals/fttcs/ShpilkaY10}).
Recall that a polynomial is \emph{homogeneous} if all its monomials have the same degree. We say that a circuit is homogeneous if all its gates compute homogeneous polynomials. For a polynomial~$P$, the homogeneous part of~$P$ of degree~$i$ is the sum of all monomials of $P$ of degree exactly~$i$.
\begin{theorem}[{\cite{strassen1973vermeidung}}]\label{lem:homogeneous}
There exists a constant $\mu'>0$ such that the following holds.
If a \mbox{degree-$\Delta$} polynomial~$P$ can be computed by an arithmetic circuit of size $s$, then there exists a homogeneous circuit~$C'$ of size at most $\mu' \Delta^2 s$ computing~$P$ such that the $\Delta+1$ outputs of~$C'$ compute the homogeneous parts of~$P$.
\end{theorem}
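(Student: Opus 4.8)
The plan is to follow the classical homogenization construction (this is exactly Strassen's argument): starting from a size-$s$ arithmetic circuit~$C$ that computes~$P$, build $C'$ by replacing each gate of~$C$ with $\Delta+1$ gates, one per degree $j \in \{0,1,\dots,\Delta\}$, with the intention that the $j$-th copy computes the degree-$j$ homogeneous part of the polynomial computed at that gate in~$C$. Throughout, for a polynomial~$g$ let $g^{(j)}$ denote its homogeneous component of degree~$j$.

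First I would describe the local replacement rules. For an input gate labelled~$x_i$ set $[g]_1 := x_i$ and $[g]_j := 0$ for $j \ne 1$; for a constant gate~$\alpha$ set $[g]_0 := \alpha$ and $[g]_j := 0$ otherwise. For an addition gate $g = g_1 + g_2$ set $[g]_j := [g_1]_j + [g_2]_j$, costing one addition gate per degree. For a multiplication gate $g = g_1 \cdot g_2$ set
\[
[g]_j := \sum_{\substack{a+b=j\\ a,b \ge 0}} [g_1]_a \cdot [g_2]_b \,,
\]
which uses at most $j+1 \le \Delta+1$ multiplication gates and $j \le \Delta$ addition gates, i.e.\ $O(\Delta)$ new gates for each of the $\Delta+1$ degrees.

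Next I would verify correctness by induction on the gates of~$C$ in topological order, proving that $[g]_j$ computes~$g^{(j)}$ for every $j \in \{0,\dots,\Delta\}$. The input and constant cases are immediate, and the addition case is just linearity of $g \mapsto g^{(j)}$. The multiplication case rests on the identity $(g_1 g_2)^{(j)} = \sum_{a+b=j} g_1^{(a)} g_2^{(b)}$ together with the observation that when $j \le \Delta$ every term in this sum has $a,b \le \Delta$, so the truncation built into the construction discards nothing relevant; the induction hypothesis then gives $[g]_j = \sum_{a+b=j} g_1^{(a)} g_2^{(b)} = (g_1 g_2)^{(j)}$. Since $\deg P \le \Delta$, applying this to the output gate of~$C$ shows that the gates $[g_{\mathrm{out}}]_0, \dots, [g_{\mathrm{out}}]_\Delta$ of~$C'$ compute exactly the homogeneous parts of~$P$; by construction every gate of~$C'$ computes a homogeneous polynomial of degree at most~$\Delta$, so $C'$ is homogeneous, and appending $\Delta$ addition gates summing these $\Delta+1$ outputs yields a gate computing~$P$ itself.

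Finally I would count the size. Each of the $O(s)$ gates of~$C$ is replaced by $O(\Delta^2)$ gates — $O(\Delta)$ for input, constant and addition gates, $O(\Delta^2)$ for multiplication gates — and the output-summation stage adds a further $O(\Delta)$ gates, so $C'$ has $O(\Delta^2 s)$ gates (and, since every gate has fan-in at most two, $O(\Delta^2 s)$ edges as well), giving the bound $\mu'\Delta^2 s$ for a suitable absolute constant~$\mu'$. I do not expect a genuine obstacle: the construction and its analysis are standard. The only step needing a moment's care is the multiplication case, where one must note that it is safe to drop homogeneous components of degree greater than~$\Delta$ precisely because the polynomial of interest, $P$, has degree at most~$\Delta$, so every higher-degree contribution is irrelevant to the final answer rather than merely being cancelled later.
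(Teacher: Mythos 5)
Your proposal is correct and is exactly the standard homogenization argument due to Strassen that the paper cites without reproducing (see the references to \cite{strassen1973vermeidung}, to Chapter~7.1 of Bürgisser--Clausen--Shokrollahi, and to Theorem~2.2 of the Shpilka--Yehudayoff survey); the local replacement rules, the inductive correctness argument, the observation that truncation at degree~$\Delta$ is safe because $\deg P \le \Delta$, and the $O(\Delta^2 s)$ size count are all as in the cited sources.
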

We use \cref{lem:homogeneous} to conclude that at the expense of increasing the circuit size by a factor of~$O(\Delta^2)$, we can assume that an arithmetic circuit computing a degree-$\Delta$ polynomial contains only gates computing polynomials of degree at most~$\Delta$.
\begin{corollary}\label{cor:bounded-degree}
    There exists a constant $\mu>0$ such that the following holds.
If a degree-$\Delta$ polynomial~$P$ can be computed by an arithmetic circuit of size $s$, then $P$ can be computed by a (single-output) circuit~$C$ of size at most $\mu \Delta^2 s$ such that all gates of~$C$ compute polynomials of degree at most~$\Delta$.
\end{corollary}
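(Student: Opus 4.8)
The plan is to start from the homogeneous circuit supplied by \cref{lem:homogeneous}, delete the gates of too-large degree, and then glue the homogeneous outputs together into a single output. Concretely, I would first apply \cref{lem:homogeneous} to $P$ to obtain a homogeneous circuit $C'$ of size at most $\mu'\Delta^2 s$ whose $\Delta+1$ output gates $o_0,\dots,o_\Delta$ compute the homogeneous parts $P^{(0)},\dots,P^{(\Delta)}$ of $P$, so that $P=\sum_{i=0}^{\Delta}P^{(i)}$.

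The key observation is that in a homogeneous circuit the degree of the polynomial computed at a gate is non-decreasing along every directed path leading to an output. Indeed, for an addition gate $g=g_1+g_2$ homogeneity of $g$ forces $\deg g_1=\deg g_2=\deg g$ (a constant or a lower-degree summand would destroy homogeneity), and for a multiplication gate $g=g_1\times g_2$ we have $\deg g_1,\deg g_2\le \deg g_1+\deg g_2=\deg g$. Since each output $o_i$ computes a polynomial of degree at most $\Delta$, every gate lying on a directed path to some $o_i$ computes a polynomial of degree at most $\Delta$; therefore all gates of $C'$ computing polynomials of degree strictly greater than $\Delta$ feed into no output and may be deleted. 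Call the resulting circuit $C''$: its size is still at most $\mu'\Delta^2 s$, it still computes $P^{(0)},\dots,P^{(\Delta)}$ at $o_0,\dots,o_\Delta$, and every gate of $C''$ computes a polynomial of degree at most $\Delta$.

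Finally, I would append to $C''$ a chain of $\Delta$ addition gates computing the partial sums $\sum_{i=0}^{j}P^{(i)}$ for $j=1,\dots,\Delta$; each such gate computes a polynomial of degree at most $j\le\Delta$, and the last gate in the chain outputs $P$, giving a single-output circuit $C$. This adds at most $2\Delta$ edges, so, using $\Delta,s\ge 1$, the size of $C$ is at most $\mu'\Delta^2 s+2\Delta\le(\mu'+2)\Delta^2 s$, and every gate of $C$ computes a polynomial of degree at most $\Delta$; setting $\mu:=\mu'+2$ finishes the argument. There is essentially no deep obstacle here—the only point that needs a little care is the degree-monotonicity claim for homogeneous circuits, specifically the bookkeeping around gates that compute the zero polynomial or a constant (a gate computing the zero polynomial can simply be removed, and a constant summand cannot occur under an addition gate of a homogeneous circuit), which I would dispatch as a short remark.
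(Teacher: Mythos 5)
Your proposal is correct and follows essentially the same route as the paper's proof: invoke \cref{lem:homogeneous}, delete gates of degree exceeding $\Delta$, and feed the $\Delta+1$ homogeneous outputs into a summation to obtain a single-output circuit. The paper compresses the justification for the deletion step into one sentence (``sums and products of degree-$(\Delta+1)$ homogeneous polynomials can't compute non-trivial polynomials of degree $\leq\Delta$''), whereas you spell out the degree-monotonicity argument and explicitly flag the bookkeeping around zero-polynomial and constant gates; both rely on the same implicit clean-up and yield the same constant-factor bound.
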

\begin{proof}
In order to construct the circuit~$C$ we take the homogeneous circuit~$C'$ guaranteed by \cref{lem:homogeneous}, remove all gates computing polynomials of degree greater than~$\Delta$, and sum up all $\Delta+1$ output gates of~$C'$ in the output of~$C$. Since sums and products of degree-$(\Delta+1)$ homogeneous polynomials can't compute non-trivial polynomials of degree $\leq\Delta$, removing gates computing polynomials of degree greater than~$\Delta$ doesn't affect the output gates of~$C'$.
Since the outputs of $C'$ compute the homogeneous parts of~$P$, the output of~$C$ computes~$P$, which finishes the proof of the corollary.
\end{proof}
    
We now prove that for polynomials of bounded degree, $\operatorname{Gap-MACP}$ can be solved in non-deterministic polynomial time.
\begin{lemma}
    \label{lemma:macp}
    There exists a~constant $\mu>0$ such that for every $\Delta$-explicit family of~polynomials~$\mathcal P$ and every prime number $p$, 
    \[\operatorname{Gap-MACP}_{\mathcal P, \mu \Delta^2, p}(n, s) \in \NTIME[O(\Delta^{2}sn^{2\Delta}\log^2p)] \, .\]
\end{lemma}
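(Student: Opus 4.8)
The plan is to give a non-deterministic algorithm that guesses a circuit, converts it to a bounded-degree circuit via \cref{cor:bounded-degree}, and then verifies correctness by symbolically evaluating the circuit gate-by-gate as explicit polynomials (lists of monomials), comparing the result against the coefficients of $P_n$, which are themselves computable in time $O(n^{\Delta})$ by $\Delta$-explicitness. Set $\mu$ to be the constant from \cref{cor:bounded-degree}. On input $(n,s)$, the algorithm first checks whether $P_n$ has an arithmetic circuit of size at most $s$ (if not, it is allowed to output anything, so we only need to be correct in the case it does). Since we cannot decide this directly, we instead simply non-deterministically guess a circuit $C$ of size at most $\mu\Delta^2 s$ in which every gate is annotated as computing a polynomial of degree at most $\Delta$; by \cref{cor:bounded-degree}, whenever a size-$s$ circuit for $P_n$ exists, such a guessed $C$ also exists, so on at least one computation branch the guess is a valid bounded-degree circuit for $P_n$.

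The core of the argument is the verification step. For the guessed circuit $C$, process the gates in topological order. With each gate $v$ we associate the polynomial $Q_v$ it computes, represented explicitly as a list of (monomial, coefficient-in-$\Z_p$) pairs. Input gates are handled trivially. For an addition gate $v = u_1 + u_2$, we merge the two monomial lists, adding coefficients modulo $p$; for a multiplication gate $v = u_1 \times u_2$, we form all pairwise products of monomials, combining exponents and multiplying coefficients modulo $p$, and \emph{immediately discard any monomial of degree exceeding $\Delta$} (this is legitimate because the annotation promised $\deg Q_v \le \Delta$; if any discard actually occurs, or if some gate ends up with degree $> \Delta$, we reject this branch). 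A degree-$\le\Delta$ polynomial in $n$ variables has at most $\binom{n+\Delta}{\Delta} = O(n^{\Delta})$ monomials, so each $Q_v$ fits in $O(n^{\Delta}\log p)$ space; a single multiplication gate costs $O(n^{2\Delta}\log^2 p)$ time (all pairwise products, with $\log^2 p$ for multiplying coefficients); there are $O(\Delta^2 s)$ gates, giving total time $O(\Delta^2 s\, n^{2\Delta}\log^2 p)$ for the symbolic evaluation. Finally, compute the coefficient list of $P_n$ over $\Z$ in time $O(n^{\Delta})$ using $\Delta$-explicitness, reduce it modulo $p$, and compare it entry-by-entry with the polynomial $Q_w$ at the output gate $w$ of $C$; accept (and output $C$) iff they are identical. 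This matches the definition of "computes $P_n$" in \cref{sec:ackts}, which requires polynomial identity, not merely agreement on all points of $\Z_p^n$ — so symbolic comparison is exactly the right notion.

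Correctness follows because, as noted, whenever $P_n$ has a circuit of size $\le s$ there is an accepting branch (the one guessing a correct bounded-degree $C'$ from \cref{cor:bounded-degree}), and on every accepting branch the output circuit $C$ has size $\le \mu\Delta^2 s$ and passes the identity check, hence genuinely computes $P_n$ over $\Z_p$; this is exactly what $\operatorname{Gap-MACP}_{\mathcal P,\mu\Delta^2,p}$ demands. The main technical point to get right is the bookkeeping that truncating products at degree $\Delta$ never silently corrupts a correct circuit: if the guessed annotations are consistent and honest, every gate really does have degree $\le\Delta$, so nothing is lost; if they are dishonest, a truncation event is detected and the branch rejects, so no incorrect circuit is ever output. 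A secondary point is to confirm that all coefficient arithmetic stays within $\Z_p$ with $O(\log^2 p)$-time operations and that the claimed running time $O(\Delta^2 s n^{2\Delta}\log^2 p)$ absorbs the cost of computing and reducing the coefficients of $P_n$ (which is $O(n^{\Delta})$ and hence lower order).
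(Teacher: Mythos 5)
Your proposal is correct and follows essentially the same approach as the paper: invoke \cref{cor:bounded-degree} to justify guessing a circuit of size $\mu\Delta^2 s$ in which each gate has degree at most~$\Delta$, then verify the guess by symbolically expanding the polynomial at each gate over~$\Z_p$ and comparing the output against the explicitly computed monomial list of~$P_n$. The only cosmetic difference is that you make the rejection step explicit (reject whenever a gate's degree would exceed~$\Delta$), which the paper leaves implicit in the phrase ``guess such a circuit,'' but the verification logic and the running-time accounting ($O(\Delta^2 s)$ gates, $O(n^{2\Delta})$ coefficient operations per gate, $O(\log^2 p)$ per operation) match the paper's proof.
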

\begin{proof}
    We present a non-deterministic algorithm that, given a polynomial~$P$ of circuit complexity~$s$, finds a circuit of size at most $cs$ for $c=\mu\Delta^2$. 
    
    First we note that \cref{cor:bounded-degree} guarantees the existence of a circuit~$C$ over~$\Z_p$ of size $cs$ computing~$P$ such that each gate of~$C$ computes a polynomial of degree at most~$\Delta$. 
    
    We non-deterministically guess such a circuit $C$, and verify if it computes $P$ correctly. If it does, we output~$C$, and we output an empty circuit otherwise. It remains to show that in the specified time we can verify that $C$ computes~$P$. 
    To~do this, we start with the circuit inputs and proceed to its output, and~write down the polynomial over~$\Z_p$ computed by~each gate as~a~sum or~product of~polynomials of~its input gates.
    There are at most $2n^{\Delta}$ monomials in a~polynomial of~degree at~most~$\Delta$. Computing sums and products of~such polynomials boils down 
    to~at~most $O(n^{2\Delta})$ arithmetic operations with their coefficients. As~every coefficient of a~polynomial over $\mathbb{Z}_p$ is~specified by~$\log p$ bits, any such arithmetic operation takes time $O(\log^2 p)$.
	Putting it all together, we~expand each of the~$\mu \Delta^2 s$ gates,
    expanding each gate takes $O(n^{2\Delta})$ arithmetic operations, each arithmetic operation takes time $O(\log^2p)$. Thus, 
    the total time of~expanding~$C$ in~$\mathbb{Z}_p$ is
	\[O(\mu \Delta^2 s \cdot n^{2\Delta} \cdot \log^2p) \, .\]

    To~nondeterministically solve $\operatorname{Gap-MACP}_{\mathcal P, \mu \Delta^2, p}(n,s)$, 
    we~guess~$C$ and expand~it in~$\mathbb{Z}_p$ as~discussed above.
    Since $P_n$ is from a~$\Delta$-explicit family~$\mathcal P$,
	it~can be~written as a~sum of~monomials in~time $O(n^{\Delta})$ (recall \cref{def:explicitfamily}).
	Then, it~remains to~compare the coefficients of the two sequences of~monomials.
\end{proof}

\section{Main Results}\label{sec:mainresultsection}
In this section, we state the main results of this paper. First, we state  \cref{lemma:polynomialformulation,lemma:parampolynomialformulation} asserting that every problem defined in \cref{sec:problems} admits polynomial formulations, we'll prove these lemmas in \cref{sec:formulations,sec:paramformulations}. Then, in \cref{sec:mainresult,sec:mainresultparam} we prove barriers to proving hardness results for non-parameterized and parameterized problems admitting polynomial formulations. Finally, we conclude that such barriers hold for for all problems from \cref{sec:problems}.

\begin{restatable}{lemma}{polynomialformulation}
    \label{lemma:polynomialformulation}
    For every $c>1$, there is $\Delta=\Delta(c)$, such that each of the following problems
    \begin{quote}
        \problem{$k$-SAT}, 
        \problem{MAX-$k$-SAT}, 
        \problem{Hamiltonian Path},
        \problem{Graph Coloring},
        \problem{Set Cover},
        \problem{Independent Set},
        \problem{Clique},
        \problem{Vertex Cover},
        \problem{$3$d-Matching}
    \end{quote}
    admits a~$\Delta$-polynomial formulation of~complexity~$c^n$.
\end{restatable}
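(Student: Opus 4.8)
The plan is to give a uniform construction scheme and then instantiate it for each problem. The common idea, already sketched for \problem{Hamiltonian Path} in the proof overview, is to pick a constant $c>1$, set $\Delta$ large enough (a function of $c$), and then "split" a solution to the problem into $\Delta$ pieces, each living on a $1/\Delta$-fraction of the ground set, so that a valid solution corresponds to a degree-$\Delta$ monomial whose variables encode the pieces together with the gluing data at the $\Delta-1$ seams. The number of variables is roughly $\binom{n}{n/\Delta}$ times a polynomial factor, and since $\binom{n}{n/\Delta} \le 2^{H(1/\Delta)n}$ with $H(1/\Delta)\to 0$ as $\Delta\to\infty$, choosing $\Delta=\Delta(c)$ large enough makes this at most $c^n$, giving $s(n)\le c^n$. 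The coefficients of $P_s$ (all $0/1$) are computed by enumerating all $\Delta$-tuples of pieces plus gluing data in time $\mathrm{poly}(s)$, so $\mathcal P$ is $\Delta$-explicit; the map $\phi_n$ is computed by solving $s$ many size-$(n/\Delta)$ subinstances, each in time $2^{O(n/\Delta)}$, for a total of $c^n\cdot 2^{O(n/\Delta)}=c^n\cdot 2^{o(n)} \le (c')^n$ (one absorbs the slack into the constant $c$ at the outset, i.e.\ pick $\Delta$ so that $2^{H(1/\Delta)+O(1/\Delta)}<c$). Finally $|P_s(\phi_n(x))|$ is at most the number of monomials, which is $\le 2^{O(n)} \le 2^{s}$ after the same choice of constants, so the value bound $|P_{s(n)}(\phi_n(x))|<2^{s(n)}$ holds.

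Next I would carry out the problem-by-problem instantiation. For \problem{Hamiltonian Path} the construction is exactly the one in the overview (partition $[\ell]$ into $\Delta$ blocks of size $\ell/\Delta$, variables $x_{S,u,v}$ recording the existence of a Hamiltonian path of $S$ starting at $u$ and ending at a node adjacent to $v$). For \problem{$k$-SAT} and \problem{MAX-$k$-SAT}: split the $n$ variables into $\Delta$ blocks; a variable of $P$ indexes a block $S$, a partial assignment to $S$, and (for MAX-$k$-SAT) a count of how many clauses touching only $S$ it satisfies or data at the "seams"—more carefully, for \problem{$k$-SAT} a variable $x_{S,\sigma}$ records whether $\sigma$ is consistent, and a monomial is a tuple of partial assignments that jointly satisfy all clauses; for \problem{MAX-$k$-SAT} we additionally index the threshold split across blocks. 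For \problem{Independent Set}, \problem{Clique}, \problem{Vertex Cover} (these are mutually reducible and also reducible to one another with linear blow-up), a variable $x_{S,I_S}$ records an independent set $I_S\subseteq S$ together with $|I_S|$, and a monomial is a tuple $(I_{S_1},\dots,I_{S_\Delta})$ that is pairwise non-adjacent with $\sum|I_{S_i}|\ge t$. For \problem{Graph Coloring} with $t$ colors: colorings correspond to partitions into $t$ independent sets; split $[n]$ into $\Delta$ blocks and let a variable record, for a block $S$, a proper partial $t$-coloring of $S$ (or more succinctly the set of color classes restricted to $S$), with a monomial being a consistent tuple—one must be slightly careful because the color classes glue across blocks, so the seam data must record which partial classes merge; bounding the number of such objects by $2^{o(n)}\cdot\binom{n}{n/\Delta}$ still works. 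For \problem{Set Cover}: split the universe $[n]$ into $\Delta$ blocks, let a variable $x_{S,j}$ record that $S$ can be covered by $j$ sets of $\mathcal F$ (and which sets), and a monomial records a tuple with $\sum j_i\le t$—but sets span blocks, so again the variable must carry the identities of the chosen sets, of which there are $n^{O(1)}$, costing only a polynomial factor. For \problem{$3$d-Matching}: split one of the three parts (size $n$) into $\Delta$ blocks and let a variable encode a matching saturating a subset of a block along with its size; a monomial is a tuple of such partial matchings that are globally disjoint on all three coordinates with total size $\ge t$.

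The main obstacle I expect is the "seam" bookkeeping for the problems whose solution objects are not naturally local—\problem{Graph Coloring}, \problem{Set Cover}, and \problem{$k$-SAT} with wide clauses—because a clause/set/color class can involve variables from all $\Delta$ blocks, so naively splitting does not decompose the constraint. The fix in each case is to let the variable indexed by a block $S$ carry, besides the local piece, the polynomial-size identity of the global objects it participates in (the chosen sets in Set Cover, the assignment to the $O(1)$ variables of a clause outside $S$ in $k$-SAT, the labels of color classes in Graph Coloring), so that the monomial's factors can be checked for mutual consistency; this multiplies the variable count only by $n^{O(1)}$ (or by $2^{O(1)}$ per clause for fixed $k$), which is harmless against the dominant $\binom{n}{n/\Delta}$ term. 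I would verify in each case that (i) the existence of a yes-certificate is equivalent to some monomial being assigned all ones (so $P_s(\phi_n(x))\neq 0$ iff $x$ is a yes instance, using that all coefficients are $1$ so there is no cancellation over $\Z$), (ii) the coefficient computation and $\phi_n$ evaluation run in time $\le c^n$ after the initial choice of $\Delta$, and (iii) the value bound $2^{s(n)}$ holds. The remaining routine step is to check that the reductions among \problem{Independent Set}/\problem{Clique}/\problem{Vertex Cover} are size-preserving up to constants so that a polynomial formulation for one yields one for all three.
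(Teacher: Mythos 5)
Your high-level strategy matches the paper's: pick a constant number $\theta$ of blocks depending on $c$, bound the variable count by roughly $\binom{n}{n/\theta}\cdot\mathrm{poly}(n)\le c^n$ via the entropy bound, and let each monomial be a product of one variable per block (or per pair of blocks), with all the instance-dependence pushed into $\phi_n$ rather than into $P$. Your instantiations for \problem{Hamiltonian Path}, \problem{Independent Set}/\problem{Clique}/\problem{Vertex Cover}, and \problem{$3$d-Matching} are essentially the paper's. However, for the three problems you yourself flag as the real difficulty — \problem{$k$-SAT}/\problem{MAX-$k$-SAT}, \problem{Set Cover}, and \problem{Graph Coloring} — the ``seam-bookkeeping'' fix you sketch is not right, and the paper's actual solutions are structurally different.

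For \problem{$k$-SAT}: you propose variables $x_{S,\sigma}$ indexed by a single block plus ``the assignment to the $O(1)$ variables of a clause outside $S$.'' This does not scale: a block $S$ of $n/\theta$ input variables can be touched by $\Omega(n^{k})$ distinct clauses, each bringing up to $k-1$ outside variables, so the total seam information for $S$ is not $O(1)$ bits and cannot be crammed into a single index without blowing up the variable count. (Also, the phrase ``a monomial is a tuple of partial assignments that jointly satisfy all clauses'' as written makes the polynomial depend on the input formula, which is not allowed; the satisfaction check has to live in $\phi_n$, not in the choice of monomials.) The paper's fix is different in kind: variables are indexed by \emph{$k$-tuples $B$ of blocks} together with a full assignment $\tau\in\{0,1\}^{nk/\theta}$ to all variables in those blocks and a threshold $r$; since every clause touches at most $k$ blocks, the set of clauses owned by a $k$-tuple is well-defined, and $\phi$ sets $x_{B,\tau,r}$ to $[\tau$ satisfies $\ge r$ clauses of $c(B)]$. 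The monomial then ranges over a global assignment $\mu$ and a counting function $f$ with $\sum f(B)=t$, so the polynomial is instance-independent and has degree $\binom{\theta}{k}$.

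For \problem{Set Cover} and \problem{Graph Coloring}, the obstacle is that a single chosen set or color class can be \emph{large} (size $\Omega(n)$), so it does not fit inside one block of size $\le 2n/\theta$ at all, and ``carrying the identity of the global objects'' in each block variable would need $\mathrm{poly}(n)$ bits per variable (not a polynomial \emph{factor}). The paper's trick is a case split you do not have: there can be at most $\theta$ disjoint regions each covered by one large set / at most $\theta$ large color classes, so one enumerates these explicitly, checks each large region by chopping it into $n/\theta$-sized pieces (via auxiliary variables $y_{S,i}$ or $y_S$), and handles the remaining elements by blocks as usual. This bimodal large/small decomposition is the missing idea, and your ``seam data records which partial classes merge'' / ``carry the identities of chosen sets'' sketch would not produce a $\Delta$-explicit formulation with $s(n)\le c^n$.
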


\begin{restatable}{lemma}{parampolynomialformulation}
    \label{lemma:parampolynomialformulation}
    For every $c>1$, there is $\Delta=\Delta(c)$, such that each of the following parameterized problems
    \begin{quote}
        \problem{$k$-Path}, 
        \problem{$k$-Vertex Cover}, 
        \problem{$k$-Tree}, 
        \problem{$k$-Steiner Tree}, 
        \problem{$k$-Internal Spanning Tree}, 
        \problem{$k$-Leaf Spanning Tree}, 
        \problem{$k$-Nonblocker}, 
        \problem{$k$-Path Contractibility}, 
        \problem{$k$-Cluster Editing}, 
        \problem{$k$-Set Splitting}
    \end{quote}
    admits a~$\Delta$-polynomial formulation of~complexity~$c^k$.
\end{restatable}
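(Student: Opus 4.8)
The plan is to prove, for each problem $A$ in the list and each target constant $c>1$, that $A$ admits a $\Delta$-polynomial formulation of complexity $c^k$ for a suitable $\Delta=\Delta(c)$; everything else then follows from the earlier sections and from \cref{lemma:polynomialformulation}. I would develop the construction for \problem{$k$-Path} in full and reduce the remaining problems to it or to the non-parameterized formulations. Throughout I fix a large integer $c_1=c_1(c)$ (chosen at the very end) and let $\Delta$ be a large enough multiple of $c_1$. The key ingredient is the deterministic splitter family $H$ of \cref{sec:splitters}: an $(n,k,c_1 k)$-splitter of size $N=e^{(k/c_1)(1+o(1))}k\log n$, constructible in time $2^{o(k)}n^{O(1)}$.

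For \problem{$k$-Path} I would index the variables by quadruples $(i,S,u,v)$ where $i\in[N]$, $S$ is a $(k/c_1)$-subset of $[c_1 k]$, and $u,v$ are nodes (plus variables $(i,S,u,\star)$; divisibility of $k$ by $c_1$ is absorbed by also allowing blocks of size $\lceil k/c_1\rceil$). Crucially the index refers only to the ordinal $i$, so the polynomial itself does not depend on the particular splitter. Define $P_s$ (where $s$ is the number of variables) as the sum, over $i\in[N]$, over ordered tuples $(S_1,\dots,S_{c_1})$ of pairwise disjoint $(k/c_1)$-subsets of $[c_1 k]$, and over nodes $v_1,\dots,v_{c_1}$, of the degree-$c_1$ monomial $x_{i,S_1,v_1,v_2}\,x_{i,S_2,v_2,v_3}\cdots x_{i,S_{c_1-1},v_{c_1-1},v_{c_1}}\,x_{i,S_{c_1},v_{c_1},\star}$. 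On an instance $(G,k)$, let $f_i$ be the $i$-th function of $H$ and set $x_{i,S,u,v}=1$ iff $G$ has a path starting at $u$, colorful for $f_i$, using exactly the colors in $S$, and ending at an in-neighbor of $v$ (and $x_{i,S,u,\star}=1$ iff such a path exists ending anywhere), all other coordinates $0$. For correctness: if $G$ contains a simple $k$-path $\pi$, the splitter property yields $i$ with $f_i$ injective on $V(\pi)$; chopping $\pi$ into $c_1$ consecutive blocks of $k/c_1$ nodes and reading off the color sets, first nodes, and successors makes the corresponding monomial equal $1$. Conversely, if a monomial is $1$, disjointness of the $S_j$ forces the witnessing subpaths to be vertex-disjoint, and the in-neighbor conditions splice them into a genuine simple $k$-path. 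Thus $P_s(\phi(G,k))$, a nonnegative integer, is nonzero iff $(G,k)$ is a yes-instance, and it is bounded by the number of monomials, well below $2^{T(k)|x|^\Delta}$.

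It then remains to check the quantitative requirements and pick $c_1$. The number of variables is $N\cdot\binom{c_1 k}{k/c_1}\cdot n^{O(1)}=2^{(k/c_1)\cdot O(\log c_1)}n^{O(1)}$, using $\binom{c_1 k}{k/c_1}\le (ec_1^2)^{k/c_1}$; the number of monomials is $2^{O(k\log c_1)}n^{O(c_1)}=s^{O(c_1)}$, so with $\Delta$ a large enough multiple of $c_1$ the family is $\Delta$-explicit. To evaluate $\phi$ it suffices, for each $i\in[N]$ and each $(k/c_1)$-subset $S$, to restrict $G$ to the nodes colored by $S$ and run the color-coding dynamic program over the $2^{k/c_1}$ subsets of $S$, in time $2^{k/c_1}n^{O(1)}$ per pair; together with the $2^{o(k)}n^{O(1)}$ cost of building $H$ this gives total time $2^{(k/c_1)O(\log c_1)}n^{O(1)}$. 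Since $\log(c_1)/c_1\to 0$, choosing $c_1$ large enough in terms of $c$ makes $2^{(k/c_1)O(\log c_1)}\le c^k$, while a large enough $\Delta$ absorbs $n^{O(1)}$ into $|x|^\Delta$; then both $s(x,k)$ and the running times are $\le c^k|x|^\Delta$. (As in the non-parameterized formulations I adopt the standard convention that $s$ encodes the parameters $n,k$, so that $P_s$ is well-defined from $s$ alone; padding $s$ by a polynomial factor costs nothing.) The remaining problems split into two families. \problem{$k$-Vertex Cover}, \problem{$k$-Nonblocker}, \problem{$k$-Cluster Editing}, \problem{$k$-Set Splitting}, \problem{$k$-Internal Spanning Tree}, \problem{$k$-Leaf Spanning Tree}, and \problem{$k$-Path Contractibility} each admit a linear-size kernel; kernelizing reduces the problem to a size-$O(k)$ instance of a non-parameterized problem for which the techniques behind \cref{lemma:polynomialformulation} yield a polynomial formulation, which I invoke with granularity $c^{1/D}$ where $D=O(1)$ is the kernel blow-up, obtaining complexity $(c^{1/D})^{O(k)}=c^k$. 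For \problem{$k$-Tree} and \problem{$k$-Steiner Tree}, color-coding finds the sought structure and a standard tree-partitioning lemma decomposes it into $c_1$ connected pieces of $O(k/c_1)$ nodes with $O(1)$ interface each, so the \problem{$k$-Path} template applies verbatim with ``subpath'' replaced by ``sub-structure'' at the cost of raising the degree to $O(c_1)$.

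I expect the main obstacle to be exactly what is already visible in the prototype: the formulation must have only $2^{o(k)}$ variables \emph{and} be computable in time $2^{o(k)}n^{O(1)}$, whereas running color-coding naively over $c_1 k$ colors costs $2^{c_1 k}$, which is far too much. The way out is to prescribe the $(k/c_1)$-color class of each block and run color-coding only inside it, over merely $2^{k/c_1}$ subsets; this is precisely why we must rely on the deterministic splitters of \cref{sec:splitters} being of size $e^{(k/c_1)(1+o(1))}$ rather than the $e^{k(1+o(1))}$ of generic perfect hash families, and why the construction time of those splitters must itself be $2^{o(k)}$. A secondary, milder difficulty is that high-degree trees admit no balanced edge-decomposition, so the decomposition used for \problem{$k$-Tree} and \problem{$k$-Steiner Tree} must settle for bounded-size (not balanced) pieces, which is harmless since it only inflates the degree by a constant.
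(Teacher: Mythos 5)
Your $k$-Path construction is essentially the paper's: variables indexed by (splitter function, small color set, two endpoints), monomials built from $\theta$ color-disjoint colorful sub-paths glued by edges, and the $(n,k,\theta k)$-splitter from \cref{sec:splitters} with only $e^{k/\theta(1+o(1))}$ functions is exactly what makes the variable count sub-exponential in $k$. One cosmetic difference is that you fold the edge condition into the path variable (``ends at an in-neighbor of $v$'') where the paper uses separate edge variables $y_{u,v}$; another is that you fix the block color sets to have size exactly $k/c_1$, while the paper allows $|C_i|\le k/\theta$, which in fact makes your version slightly cleaner (the paper's form could in principle fire on a path shorter than $k$). Your index-by-ordinal-$i$ device and your appeal to encoding $n,k$ into $s$ also mirror the Cantor-pairing fix the paper uses to keep $s$ injective.

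The gap is in the other nine problems. You dispose of seven of them by ``kernelize, then apply the techniques behind \cref{lemma:polynomialformulation}.'' But \cref{lemma:polynomialformulation} has no formulations for Nonblocker, Internal/Leaf Spanning Tree, Path Contractibility, Cluster Editing, or Set Splitting, and the partitioning tricks it uses do not obviously transfer. Domination (Nonblocker) is not a pairwise-decomposable property like independence, and the paper's formulation requires a specific aligned partition of the nonblocker into pieces each dominated by a corresponding piece of the dominating set. For Internal/Leaf Spanning Tree and for $k$-Steiner Tree the paper introduces a dedicated tree-decomposition lemma (\cref{st:tree_decomposition}) that breaks the witness tree into $\theta$ edge-disjoint subtrees each touching $O(k/\theta)$ marked nodes, together with the ``subset graph'' / connector-node bookkeeping to ensure the pieces glue back into a tree with the right leaf/internal-node count; nothing like this appears in the non-parameterized section, and nothing in your proposal supplies it. Path Contractibility and Cluster Editing require still more elaborate custom decompositions of the contracted-path / cluster structure. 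So ``kernelize and reuse'' is not a correct route for these. Finally, your claim that $k$-Steiner Tree follows from ``the $k$-Path template verbatim'' via color-coding is misdirected: the paper's Steiner-Tree formulation uses no splitters at all (the $k$ terminals are given, so one decomposes over subsets of terminals plus small interface sets via \cref{st:tree_decomposition}); color-coding is inapplicable because the Steiner nodes are not bounded by $k$. For $k$-Tree your high-level instinct (splitter plus tree decomposition) is right, but you would still need to spell out the connector-node machinery that makes the pieces re-assemble consistently, which is not a ``verbatim'' reuse of the $k$-Path construction.
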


\subsection{Non-parameterized Problems}\label{sec:mainresult}
In the theorem below we prove that if a problem admits constant-degree polynomial formulations of complexity $2^{\gamma n}$ for every $\gamma>0$, then SETH-hardness of the problem would imply a circuit lower bound.
\begin{theorem}\label{thm:main}
Let $A$ be a computational problem. Assume that for every $c>1$, there is $\Delta=\Delta(c)$ such that $A$ admits a~$\Delta$-polynomial formulation of~complexity~$c^n$. If $A$ is $\lambda^n$-SETH-hard for a constant $\lambda>1$, then at~least one of~the following circuit lower bounds holds:
	\begin{itemize}
		\item $\E^{\NP}$ requires series-parallel Boolean circuits of~size~$\omega(n)$;
		\item for every constant $\gamma>1$, there exists an~explicit family of constant-degree polynomials over~$\Z$ that requires arithmetic circuits of~size $\Omega(n^{\gamma})$.
	\end{itemize}
\end{theorem}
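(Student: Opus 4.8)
The plan is to fix an arbitrary $\gamma>1$ and show that unless the second conclusion of the theorem already holds for this $\gamma$, one can refute NSETH and then invoke \cref{thm:nseth} for the first conclusion. Concretely, assuming NSETH holds, I will exhibit for every $\gamma>1$ an explicit constant-degree family over $\Z$ requiring arithmetic circuits of size $\Omega(n^\gamma)$.

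Fix $\gamma>1$, instantiate the $\lambda^n$-SETH-hardness of $A$ with $\varepsilon=1/2$ to obtain $\delta=\delta(1/2)>0$ and, for every $k$, a $\delta$-fine-grained reduction from \problem{$k$-SAT} to $(A,\lambda^n)$; a \problem{$k$-SAT} instance on $m$ variables is then reduced to $A$-instances of size $O(m)$ each. Pick a constant $c>1$ small enough, as a function of $\gamma,\lambda,\delta$, that $(\gamma+O(1))\log_2 c$ lies below the threshold $\rho:=\delta(1-\varepsilon)\log_2\lambda/(1-\delta)$ identified below; set $\Delta=\Delta(c)$ and take the $\Delta$-polynomial formulation $\mathcal P=(P_1,P_2,\dots)$ of $A$ of complexity $c^n$, with maps $\phi_m\colon I_m\to\Z^{s(m)}$ on size-$m$ instances of $A$ and $s(m)\le c^m$. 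If the arithmetic circuit complexity of $P_s$ exceeds $s^\gamma$ for infinitely many $s$, then, being $\Delta$-explicit with constant $\Delta$, $\mathcal P$ is itself an explicit constant-degree family over $\Z$ of circuit complexity $\Omega(s^\gamma)$, and we are done with this $\gamma$. So assume instead that for all sufficiently large $s$ the polynomial $P_s$---and hence also $P_s\bmod p$ for every prime $p$, of no larger complexity---has an arithmetic circuit of size at most $s^\gamma$.

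It remains to build the nondeterministic \problem{$k$-TAUT} algorithm. On input a $k$-DNF $\psi$ on $n$ variables put $\varphi=\neg\psi$, a $k$-CNF; then $\psi$ is a tautology iff $\varphi$ is unsatisfiable iff every restriction $\varphi|_\sigma$ of $\varphi$ to the last $\alpha n$ variables, over all $2^{(1-\alpha)n}$ assignments $\sigma$ to the first $(1-\alpha)n$ variables, is unsatisfiable; here $\alpha>0$ is a small constant fixed last. The reduction turns a \problem{$k$-SAT} instance on $\alpha n$ variables into $A$-instances of size at most $\ell=\Theta(\alpha n)$, since each summand $\lambda^{(1-\varepsilon)n_i}$ of its budget is bounded by $d\,2^{(1-\delta)\alpha n}$. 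In a one-time \emph{nondeterministic} precomputation, for every $m\le\ell$ we guess a prime $p_m\in[\,2\cdot 2^{s(m)},4\cdot 2^{s(m)}\,]$ with a Pratt certificate, write $P_m$ out explicitly in time $O(s(m)^\Delta)$, and---invoking \cref{lemma:macp} together with the degree reduction of \cref{cor:bounded-degree}---guess and verify an arithmetic circuit $C_m$ over $\Z_{p_m}$ of size $O(\Delta^2 s(m)^\gamma)$ computing $P_m$; the whole precomputation costs $s(\ell)^{O(\Delta+\gamma)}\operatorname{poly}(\log p_\ell)\le 2^{O(\alpha n(\Delta+\gamma)\log_2 c)}$. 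Then, deterministically, for each of the $2^{(1-\alpha)n}$ restrictions we run the reduction's algorithm and answer every oracle query on an $A$-instance $y$ of size $m$ by computing $\phi_m(y)$ in time $c^m$ and evaluating $C_m$ at it modulo $p_m$ in time $O(\Delta^2 s(m)^\gamma)\operatorname{poly}(\log p_m)$; since $|P_m(\phi_m(y))|<2^{s(m)}<p_m/2$, a nonzero value means exactly that $y$ is a yes-instance. Deciding one restriction costs $d\,2^{(1-\delta)\alpha n}\cdot 2^{O((\gamma+O(1))\alpha n\log_2 c)}$, which equals $2^{(1-\delta')\alpha n}$ for a constant $\delta'>0$ exactly when $(\gamma+O(1))\log_2 c<\rho$; summing over the $2^{(1-\alpha)n}$ restrictions gives $2^{(1-\alpha\delta')n}$, and choosing $\alpha$ small enough (after $c$, hence $\Delta$, are fixed) that the precomputation costs at most $2^{\alpha\delta' n/2}$, the whole algorithm runs in time $2^{(1-\varepsilon_0)n}$ with $\varepsilon_0=\alpha\delta'/2>0$. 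It accepts iff every restriction is unsatisfiable, i.e.\ iff $\psi$ is a tautology---on a ``no'' instance every nondeterministic branch is rejected, either during the verification of some $C_m$ or because the deterministic phase uncovers a satisfiable restriction---so \problem{$k$-TAUT}$\,\in\NTIME[2^{(1-\varepsilon_0)n}]$ for every $k$, contradicting NSETH.

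The main obstacle is the order in which the constants are chosen. The dominant, per-restriction cost has an exponent proportional to $\gamma\log_2 c$, with $\gamma$ the \emph{target} circuit-size exponent, so $c$ must be pinned close to $1$ as a function of $\gamma$ and $\lambda$ \emph{before} $\Delta=\Delta(c)$ is known; the $\Delta$-dependent work---writing $P_m$ out and expanding the guessed homogeneous circuit gate by gate---occurs only in the one-time precomputation, whose exponent is proportional to $\Delta\log_2 c$ and is absorbed afterwards by shrinking $\alpha$. This separation is exactly what the downward self-reducibility of \problem{$k$-TAUT} buys: a direct reduction of the $n$-variable instance to $A$ would yield a polynomial on $c^{\Theta(n)}$ variables whose mere listing already exceeds $2^n$. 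Everything else---verifying the guessed circuits, bounding coefficients, passing between $\Z$ and $\Z_{p_m}$---is routine given \cref{lemma:macp,cor:bounded-degree}.
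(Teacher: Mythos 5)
Your proposal follows the same strategy as the paper's proof: assume the arithmetic lower bound for the given $\gamma$ fails, non-deterministically guess and verify (via Gap-MACP and the homogenization corollary) small arithmetic circuits for the polynomial formulation, then use downward self-reducibility to split an $n$-variable $k$-TAUT instance into $2^{(1-\alpha)n}$ sub-instances, each reduced via the assumed SETH-hardness to instances of $A$ that the pre-verified circuits solve fast enough, yielding a non-deterministic $k$-TAUT algorithm that refutes NSETH. Your bookkeeping differs in two minor ways from the paper's: you bound the total oracle cost by (number of calls) times (maximum cost per call) rather than summing $\sum c^{(\gamma+O(1))n_i}$ against the budget $\sum \lambda^{(1-\varepsilon)n_i}$ directly, and you pin down a slightly different threshold $\rho$ for $c$; both versions work because $c>1$ is a free parameter.

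There is, however, one slip in the final step. You claim that after fixing $c$ and $\Delta$, one can choose $\alpha$ small enough so that the precomputation, which costs $2^{O((\Delta+\gamma)\alpha n\log_2 c)}$, is at most $2^{\alpha\delta' n/2}$. That inequality reads $C_2\alpha n \le \alpha\delta' n/2$, where $C_2=O((\Delta+\gamma)\log_2 c)$, so $\alpha$ cancels and the requirement becomes $C_2\le \delta'/2$, a condition on constants that shrinking $\alpha$ cannot produce (and which fails in general since $\Delta$ may be large). The correct observation, used implicitly by the paper through its choice $\alpha=\frac{1}{\gamma+2\Delta+8}$, is that one should instead compare the precomputation exponent $C_2\alpha$ against the solving exponent $1-\alpha\delta'$: since these scale oppositely with $\alpha$, choosing $\alpha < 1/(C_2+\delta')$ forces $C_2\alpha < 1-\alpha\delta'$, so the solving stage dominates and the total non-deterministic time is $O(2^{(1-\alpha\delta')n})$. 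With this fix your argument goes through, matching the paper's conclusion.
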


\begin{proof}
Let $\lambda>1$ be the constant from the theorem statement, $\gamma>1$ be an arbitrary constant, and $\sigma=\log(\lambda)/(6\gamma)$. For $n\in\N$, let $I_n$ be the set of all instances of~$A$ of size~$n$. Let $\mathcal P$ be a~$\Delta$-polynomial formulation of~$A$ of~complexity $2^{\sigma n}$,
	for constant $\Delta=\Delta(\sigma)>0$.
	We assume that $A$ is $\lambda^n$-SETH-hard: there is a function $\delta\colon\R_{>0}\to\R_{>0}$ such that for every $k\in\N$, $(\problem{$k$-SAT},2^n)\leq_{\delta} (P,\lambda^n)$. 
	
	We'll prove that at least one of the two circuit lower bounds holds. 
	If ${\mathcal P}=(P_t)_{t\geq1}$ does not have arithmetic circuits over $\Z$ of size $t^\gamma$ for infinitely many values of~$t$, then we have an explicit family of constant-degree polynomials that requires arithmetic circuits of~size $\Omega(t^{\gamma})$. 
	Hence, in the following we assume that  ${\mathcal P}$ has arithmetic circuits over $\Z$ of size $ct^\gamma$ for all values of~$t$ for a constant~$c>0$. Under this assumption, we design a non-deterministic algorithm solving \problem{$k$-TAUT} in time $2^{(1-\eps)n}$ for every~$k$. This contradicts NSETH and, by \cref{thm:nseth}, implies a super-linear lower bound on the size of series-parallel circuits computing~$\E^{\NP}$.
	
Let $\delta_0=\delta(1/2)\in(0,1)$ where $\delta$ is the function from the SETH-hardness reduction for~$A$. Let $\alpha=\frac{1}{\gamma+2\Delta+8}$, $L=2(1-\delta_0) \alpha n/\log(\lambda)$, and $T=2^{\sigma L}$. The meaning of these constants is the following. We will start with an instance of the \problem{$k$-TAUT} problem on~$n$ variables, reduce it to $2^{(1-\alpha)n}$ instances of \problem{$k$-TAUT} on $\alpha n$ variables each. Then we'll use the fine-grained reduction from \problem{$k$-SAT} to the problem~$A$ on instances of size~$\ell\leq L$. Finally, we'll use the polynomial formulation of~$A$ to reduce instances of size~$\ell$ to polynomials with $t\leq T$ variables.

Let $F$~be a~$k$-DNF formula over $n$~variables. In order to solve~$F$, we branch on all but $\alpha n$~variables. This gives us $2^{(1-\alpha)n}$ $k$-DNF formulas. By~solving \problem{$k$-SAT} on the negations of all of~these formulas, we~solve \problem{$k$-TAUT} on the original formula~$F$. 

We now apply the
    $(\problem{$k$-SAT}, 2^n)\leq_{\delta}(A, {\lambda}^n)$ fine-grained reduction to (the negation of) each of~the resulting formulas, which gives us a number of instances of~$A$.
    Let $\ell$~be the largest size of these instances. 
    From \cref{def:seth-hardness}, we~know that for $\eps=1/2$ and $\delta_0=\delta(1/2)>0$, ${\lambda}^{\ell/2}=\lambda^{(1-\eps)\ell} < 2^{(1-\delta_0) \alpha n}$, so each instance of~$A$ indeed has size less than $\ell<2(1-\delta_0) \alpha n/\log(\lambda)=L$. 

    Since $\mathcal P$ is a polynomial formulation of~$A$ of~complexity $2^{\sigma n}$, there exist $s \colon \N \to \N, s(\ell)\leq 2^{\sigma \ell}$ and $\phi=(\phi_1, \phi_2, \dotsc)$ (computable in time $2^{\sigma L}$) such that for every~$\ell$ and every $x\in I_\ell$,
	\begin{itemize}
		\item $P_{s(\ell)}(\phi_{\ell}(x)) \neq 0$ iff $x$ is a~yes instance of~$A$;
		\item $|P_{s(\ell)}(\phi_\ell(x))|<2^{s(\ell)}$.
	\end{itemize}
    Using~$\mathcal P$,
    we will solve all instances
    of~$A$ in~two stages: in~the preprocessing stage (which takes place before all the reductions), we~guess efficient arithmetic circuits for polynomials~$P_{t}$ for all 
    $t \le T$,
    in the solving stage, we~solve all instances of~$A$ using the guessed circuits. Note that we'll be using the polynomials to solve instances of~$A$ resulting from \problem{$k$-SAT} instances on $\alpha n$ variables. Since~$L$ is the largest size of such an instance of~$A$, we have that each such instance is mapped to a polynomial with at most $T=s(L)\leq 2^{\sigma L}$ variables. Therefore, finding efficient arithmetic circuits for polynomials~$P_{t}$ for all 
    $t \le T$ will be sufficient for solving the \problem{$k$-SAT} instances of size $\alpha n$.

        \paragraph{Preprocessing.} For every $t \leq T$, we find a prime $p_t$ in the interval $2^{t+1} \le p_t \le 2^{t+2}$ in non-deterministic time~$O(t^7)$~\cite{aks04,lp19}. 
        
        Now for every $t\leq T$, we reduce all coefficients of the polynomial $P_t$ modulo~$p_t$ to obtain a polynomial $Q_t$ over $\Z_{p_t}$, and let ${\mathcal Q}=(Q_1,Q_2,\ldots)$.
        For every $t \le T$, we now non-deterministically solve $\operatorname{Gap-MACP}_{\mathcal Q, \mu \Delta^2, p_t}(t, ct^{\gamma})$ using \cref{lemma:macp}.
        Since we assume that $\mathcal P$ has arithmetic circuits over $\Z$ of size $ct^\gamma$, we have that $\mathcal Q$ has arithmetic circuits over $\Z_{p_t}$ of this size. Thus, we~obtain arithmetic circuits~$C_t$ of~size at~most
        \begin{equation}\label{eq:circuitsize}
            c\mu \Delta^2t^{\gamma}
        \end{equation}
        computing~$Q_{t}$ over $\Z_p$ for all $t \le T$. Since~$C_t$ computes~$Q_t$ correctly in~$\mathbb{Z}_p$ and $|P_{s(\ell)}(\phi_\ell(x))| < 2^{s(\ell)} \leq  p_{s(\ell)}/2$ for all $x \in I_\ell$, we can use $C_t$
        to~solve~$A$ for every instance size $\ell \leq L$. By~\cref{lemma:macp}, $\operatorname{Gap-MACP}_{\mathcal Q, \mu \Delta^2, p_t}(t, ct^{\gamma})$ can be solved in (non-deterministic) time
        \[
        	O\left(\Delta^2 \cdot ct^{\gamma} \cdot t^{2\Delta} \cdot \log^2(p_t)\right)=
        		O\left(T^{\gamma+2\Delta+2}\right) \, .
        \]
        The total (non-deterministic) running time of the preprocessing stage is then bounded from above by the time needed to~find~$T$ prime numbers, write down the corresponding explicit polynomials modulo~$p_t$, and solve~$T$ instances of $\operatorname{Gap-MACP}$:
        \begin{align}\label{eq:preprocessing}
        	O\left(T(T^7+T^{\Delta+2}+T^{\gamma+2\Delta+2})\right)=
        		O\left(T^{\gamma+2\Delta+8}\right)
        		=O\left(2^{(1-\delta_0)n}\right) \, ,
        \end{align}
        where the last equality holds due to $T=2^{\sigma L}$, $L=2(1-\delta_0) \alpha n/\log(\lambda)$, $\sigma=\log(\lambda)/(6\gamma)$, and $\alpha= \frac{1}{\gamma+2\Delta+8}$.
        \paragraph{Solving.} In~the solving stage, we~solve all $2^{(1-\alpha)n}$
        instances of~\problem{$k$-SAT} by~reducing them to~$A$ and using
        efficient circuits found in~the preprocessing stage. For an instance~$x$ of~$A$ of size $\ell$, we first transform it into an input of the polynomial $y=\phi_\ell(x)\in\Z^{s(\ell)}$. Both $s(\ell)$ and $\phi_\ell(x)$ can be computed in time $O(2^{\sigma \ell})$.  Then we feed it into the circuit $Q_{s(\ell)}$. First we note that we have the circuit $Q_{s(\ell)}$ after the preprocessing stage as $s(\ell)\leq s(L) \leq 2^{\sigma L}= T$ and we have circuits $(Q_1,\ldots,Q_T)$. The number of arithmetic operations in~$\Z_{p_{s(\ell)}}$ required to evaluate the circuit is proportional to the circuit size, and each arithmetic operation takes time $\log^2(p_{s(\ell)})=O(s(\ell)^2)$. From~\eqref{eq:circuitsize} with $t\leq s(\ell)\leq2^{\sigma \ell}$, we have that we can solve an instance of~$A$ with $\ell$ inputs in time 
        \[
        O(2^{\sigma \ell}) + c\mu \Delta^2 \cdot s(\ell)^2 \cdot 2^{\sigma\gamma \ell}
        =O\left( 2^{2\sigma \ell+\sigma\gamma \ell}\right)
        =O\left( 2^{3\sigma\gamma \ell}\right)
        =O\left(\lambda^{\ell/2}\right)\,,
        \]
        where the last equality holds due to the choice of $\sigma=\log(\lambda)/(6\gamma)$.
        The~fine-grained reduction from \problem{$k$-SAT} to~$A$ implies that a~$O\left(\lambda^{n/2}\right)$-time algorithm for~$A$ gives us a~$O\left(2^{n(1-\delta_0)}\right)$-time algorithm for \problem{$k$-SAT}. Thus, since we solve each $\ell$-instance of~$A$ resulting from $2^{(1-\alpha)n}$ instances of \problem{$k$-SAT} in time $O\left(\lambda^{\ell/2}\right)$, we solve the original $n$-variate instance~$F$ of \problem{$k$-TAUT} in time
    \begin{align}\label{eq:solving}
    O\left(2^{(1-\alpha) n} \cdot (2^{\alpha n})^{1-\delta_0}\right)=O\left(2^{n(1-\alpha\delta_0)}\right) \,.
    \end{align}
    
    The total running time of the preprocessing and solving stages (see~\eqref{eq:preprocessing} and~\eqref{eq:solving}) is bounded from above by $O\left(2^{n(1-\delta_0)}\right)+O\left(2^{n(1-\alpha\delta_0)}\right)=O\left(2^{n(1-\alpha\delta_0)}\right)$, which refutes NSETH, and implies a super-linear lower bound for Boolean series-parallel circuits.
\end{proof}

We now apply \cref{thm:main} to the non-parameterized problems from \cref{sec:problemsexact} to prove \cref{thm:intromain}.
\intromain*
\begin{proof}
    This follows immediately from \cref{lemma:polynomialformulation} and \cref{thm:main}.
\end{proof}

\subsection{Parameterized Problems}\label{sec:mainresultparam}
In the next theorem we show that if a parameterized problem admits constant-degree polynomial formulations of complexity $2^{\gamma k}$ for every $\gamma>0$, then SETH-hardness of this problem would imply a circuit lower bound. The proof of \cref{thm:mainparam} follows the high level strategy of the proof of \cref{thm:main}, but takes into account the dependence on the parameter~$k$ of the parameterized problem under consideration and (arbitrary) polynomial dependence on the input length, we present the proof in \cref{apx:mainparam}.
\begin{restatable}{theorem}{mainparam}\label{thm:mainparam}
Let $A$ be a parameterized computational problem with a parameter~$k$. Assume that for every $c>1$, there is $\Delta=\Delta(c)$ such that $A$ admits a~$\Delta$-polynomial formulation of~complexity~$c^k$. If $A$ is $\lambda^k$-SETH-hard for a constant $\lambda>1$, then at~least one of~the following circuit lower bounds holds:
	\begin{itemize}
		\item $\E^{\NP}$ requires series-parallel Boolean circuits of~size~$\omega(n)$;
		\item for every constant $\gamma>1$, there exists an~explicit family of constant-degree polynomials over~$\Z$ that requires arithmetic circuits of~size $\Omega(n^{\gamma})$.
	\end{itemize}
\end{restatable}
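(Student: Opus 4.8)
The plan is to replay the proof of \cref{thm:main}, with the two arithmetical quantities of a parameterized instance---the parameter~$k$ and the length~$|x|$---together playing the role of the single size parameter~$n$ there. Fix an arbitrary constant $\gamma>1$; we produce one of the two circuit lower bounds. Put $\eps=1/2$ and $\delta_0=\delta(1/2)\in(0,1)$, where $\delta$ witnesses $\lambda^k$-SETH-hardness of~$A$. The crucial point is the \emph{order} in which the remaining constants are fixed, which dissolves the apparent circularity between them: first pick $\sigma\vcentcolon=\tfrac{\log\lambda}{2(\gamma+2)}>0$; this determines, via the hypothesis applied with $c=2^{\sigma}>1$, a constant $\Delta\vcentcolon=\Delta(2^{\sigma})$ and a $\Delta$-polynomial formulation $\mathcal P=(P_t)_{t\ge1}$ of~$A$ of complexity $T(k)=2^{\sigma k}$ (so $s(x,k)\le 2^{\sigma k}|x|^{\Delta}$); next take the reduction exponent $d\vcentcolon=\ceil{\Delta(\gamma+2)}$, which is permissible because $\lambda^k$-SETH-hardness (\cref{def:sethparam}) provides a reduction for \emph{every}~$d\in\N$; finally choose $\alpha\vcentcolon=\tfrac{\gamma+2}{2(\gamma+2\Delta+8)}>0$.

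As in \cref{thm:main}, we split on the arithmetic-circuit complexity of~$\mathcal P$. Since $\mathcal P$ is $\Delta$-explicit (each $P_t$ is computable in time $O(t^{\Delta})$), if $\mathcal P$ requires arithmetic circuits over~$\Z$ of size larger than~$t^{\gamma}$ for infinitely many~$t$, the second conclusion of the theorem holds. So assume $\mathcal P$ has circuits of size $c't^{\gamma}$ for all~$t$, for some constant~$c'$; under this assumption we construct a nondeterministic algorithm for \problem{$k$-TAUT} running in time $2^{(1-\eps')n}$ for a fixed $\eps'>0$ independent of~$k$, which refutes NSETH and, by \cref{thm:nseth}, yields the first conclusion. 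Given a $k$-DNF~$F$ on~$n$ variables, branch on all but $\alpha n$ of its variables to obtain $2^{(1-\alpha)n}$ formulas on $\alpha n$ variables; negating, these are $2^{(1-\alpha)n}$ instances of \problem{$k$-SAT} on $\alpha n$ variables, and solving all of them solves \problem{$k$-TAUT} on~$F$. Apply the assumed fine-grained reduction (with $\eps=1/2$ and our choice of~$d$) to each such instance; this produces instances $(x_i,k_i)$ of~$A$ with $\sum_i \lambda^{k_i/2}\ell_i^{d}\le O(2^{(1-\delta_0)\alpha n})$, where $\ell_i\vcentcolon=|x_i|$. In particular every produced instance satisfies $k_i\le K\vcentcolon=\tfrac{2(1-\delta_0)\alpha n}{\log\lambda}+O(1)$ and $\ell_i\le L\vcentcolon= O(2^{(1-\delta_0)\alpha n/d})$, so the formulation attaches to $(x_i,k_i)$ a polynomial $P_{t_i}$ with $t_i=s(x_i,k_i)\le 2^{\sigma k_i}\ell_i^{\Delta}\le 2^{\sigma K}L^{\Delta}\vcentcolon= T$; our choices of $\sigma$ and $d$ give $\log T\le \tfrac{2(1-\delta_0)\alpha n}{\gamma+2}+O(1)$.

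The rest is the two-stage argument of \cref{thm:main}. In the \emph{preprocessing} stage, for each $t\le T$ we nondeterministically guess a prime $p_t$ with $2^{t+1}\le p_t\le 2^{t+2}$, reduce the coefficients of~$P_t$ modulo~$p_t$ to get~$Q_t$, and run $\operatorname{Gap-MACP}_{\mathcal Q,\mu\Delta^2,p_t}(t,c't^{\gamma})$ from \cref{lemma:macp}; since $\mathcal P$, and hence $\mathcal Q$, has circuits of size $c't^{\gamma}$, this returns a \emph{verified} circuit~$C_t$ over~$\Z_{p_t}$ of size $O(\Delta^2 t^{\gamma})$ computing~$Q_t$ in nondeterministic time $O(\Delta^2 t^{\gamma}\cdot t^{2\Delta}\cdot\log^2 p_t)=O(t^{\gamma+2\Delta+2})$. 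Summed over $t\le T$ this totals $O(T^{\gamma+2\Delta+8})=O(2^{(1-\delta_0)n})$ by the choice of~$\alpha$. In the \emph{solving} stage we simulate each of the $2^{(1-\alpha)n}$ reductions, answering an oracle query $(x_i,k_i)$ of~$A$ by computing $t_i=s(x_i,k_i)$ and $\phi(x_i,k_i)$ (in time $O(2^{\sigma k_i}\ell_i^{\Delta})$) and evaluating~$C_{t_i}$ modulo~$p_{t_i}$ (in time $O(t_i^{\gamma}\log^2 p_{t_i})=O(t_i^{\gamma+2})$); the polynomial formulation keeps $|P_{t_i}(\phi(x_i,k_i))|$ well below~$p_{t_i}$, so this modular evaluation correctly decides~$(x_i,k_i)$. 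Since $t_i\le 2^{\sigma k_i}\ell_i^{\Delta}$, the total cost of a query is $O(2^{\sigma k_i}\ell_i^{\Delta})+O(t_i^{\gamma+2})\le O(\lambda^{k_i/2}\ell_i^{d})$, using $\sigma(\gamma+2)=\tfrac12\log\lambda$ and $d\ge\Delta(\gamma+2)$. By the reduction's sum constraint every $\alpha n$-variable \problem{$k$-SAT} instance is then solved in time $O(2^{(1-\delta_0)\alpha n})$, so all $2^{(1-\alpha)n}$ of them take time $O(2^{(1-\alpha\delta_0)n})$; adding the two stages yields a nondeterministic \problem{$k$-TAUT} algorithm of running time $O(2^{(1-\delta_0)n})+O(2^{(1-\alpha\delta_0)n})=O(2^{(1-\alpha\delta_0)n})$ with $\alpha\delta_0>0$ independent of~$k$, refuting NSETH.

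The principal obstacle is precisely this balancing of constants. Because $\Delta=\Delta(\sigma)$ is only handed to us \emph{after}~$\sigma$ has been fixed, and because the fine-grained reduction may inflate the instance length~$\ell_i$ to something exponential in~$n$ (with exponent $(1-\delta_0)\alpha/d$), the number of polynomial variables~$t_i$ is itself exponential in~$n$; forcing the preprocessing cost $T^{O(\gamma+\Delta)}$ to stay below $2^{(1-\delta_0)n}$ \emph{while} keeping the per-query solving cost $t_i^{O(\gamma)}$ within the budget $\lambda^{k_i/2}\ell_i^{d}$ is exactly what pins down the choices $\sigma\le\tfrac{\log\lambda}{2(\gamma+2)}$, $d\ge\Delta(\gamma+2)$, and $\alpha\le\tfrac{\gamma+2}{2(\gamma+2\Delta+8)}$. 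It is essential here---and this is the one genuinely new ingredient over \cref{thm:main}---that \cref{def:sethparam} permits invoking the hardness reduction with a value of~$d$ of our own choosing, which lets us absorb the polynomial blow-up $\ell_i^{\Delta}$ in the number of variables into the reduction's budget.
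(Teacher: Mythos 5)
Your proposal is correct and follows essentially the same route as the paper's proof of \cref{thm:mainparam}: replay the two-stage argument of \cref{thm:main} with both the parameter bound $K$ and the length bound $L$ carried along, pre-compute verified circuits for $P_t$ for $t\le T$ with $T$ large enough to cover all queried polynomials, and exploit the freedom to choose the reduction exponent~$d$ in \cref{def:sethparam} to swallow the $\ell^{O(\Delta\gamma)}$ overhead into the reduction's budget. The only differences are cosmetic: you pick slightly different constants ($\sigma=\log\lambda/(2(\gamma+2))$, $\alpha=(\gamma+2)/(2(\gamma+2\Delta+8))$, and an explicit $d=\lceil\Delta(\gamma+2)\rceil$) and bound the instance lengths $\ell_i$ by $O(2^{(1-\delta_0)\alpha n/d})$ via the reduction's sum constraint, whereas the paper sets $\sigma=\log\lambda/(6\gamma)$, uses the cruder bound $\ell_i\le 2^{(1-\delta_0)\alpha n}$ (the reduction's running time), and compensates by shrinking $\alpha$ by an extra $\Delta+1$ factor; both bookkeeping choices close. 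You are also right to highlight explicitly the order in which the constants are fixed ($\gamma \Rightarrow \sigma \Rightarrow \Delta \Rightarrow d \Rightarrow \alpha$), which the paper leaves more implicit; this is indeed the crux of why the parameterized case is not circular.
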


We apply \cref{thm:mainparam} to the parameterized problems from \cref{sec:problemsparam} to prove \cref{thm:intromainparam}.
\intromainparam*
\begin{proof}
    This follows immediately from \cref{lemma:parampolynomialformulation} and \cref{thm:mainparam}.
\end{proof}
While \cref{thm:intromainparam} conditionally rules out $\lambda^k$ lower bounds for certain parameterized problems, we remark that the same machinery can be applied to conditionally rule out lower bounds of the form $n^{\lambda k}$ for constant $\lambda>0$. We do not include rigorous proofs of such results in the paper as this would require us to generalize \cref{thm:mainparam} to work with functions of~$k$ that may have different forms (exponential in $k$ or in~$k\log{n}$) at the expense of clarity of presentation. We note that the same techniques show that $n^{\gamma k}$-SETH hardness of \problem{$k$-Clique} or \problem{$k$-Independent Set} for a constant $\gamma>0$ would also imply one of the two circuit lower bounds. The polynomial formulations of parameterized \problem{$k$-Clique} and \problem{$k$-Independent Set} are identical to the polynomial formulations of the non-parameterized \problem{Independent Set} problem presented in \cref{sec:formulations} with the only difference that the sizes of the sets $S$ are now bounded by $2k/\theta$ instead of $2n/\theta$. This leads to $\binom{n}{\leq 2k/\theta}=n^{O(k)}$ variables in polynomial formulations and rules out $n^{\lambda k}$ lower bounds for the parameterized versions of \problem{$k$-Clique} or \problem{$k$-Independent Set}.

\section{Deterministic Splitters over Alphabets of Linear Size}\label{sec:splitters}
Our polynomial formulations of some of the problems (such as \problem{$k$-Path} and \problem{$k$-Tree}) will require \emph{deterministic} constructions of certain splitters. This section is devoted to designing such splitters. 
\begin{definition}
An $(n,k,\ell)$-splitter $H$ is a family of functions $f\colon[n]\to[\ell]$ such that for every set $S\subseteq[n]$ of size $|S|=k$, there exists a function $f\in H$ that splits $S$ evenly:
\[
\forall j\in[\ell],\;\; \floor{k/\ell} \leq f^{-1}(j) \leq \ceil{k/\ell} \;.
\]
\end{definition}
The set~$[\ell]$ in this definition is called the alphabet. If $\ell\geq k$, an $(n,k,\ell)$-splitter $H$ is a family of functions from $[n]$ to $[\ell]$ such that for every $S\subseteq[n],\; |S|=k$, there exists an $f\in H$ which is injective on~$S$. If $\ell=k$, then such a splitter is called a family of perfect-hash functions.

In this section, we present $(n,k,ck)$-splitters of size $\widetilde{O}(e^{\frac{k}{c}(1+o(1))})$ that can be computed in deterministic time $2^{9k}n^{O(1)}$.

It is easy to verify that a random set of $\approx e^{\frac{k}{c}}k\log{n}$ functions forms an $(n,k,ck)$-splitter with high probability. It's known~\cite{F84,A86} that a good linear code over the alphabet $[\ell]$ with relative distance $1-\Theta(1/k^2)$ implies a splitter with related parameters. \cite{F84,A86} use this observation to deterministically construct splitters of size $k^{O(1)}\log{n}$ for alphabets of size $\ell\geq k^2$. Although we can't use this splitter directly as we're working with alphabets of size $\ell=ck\ll k^2$, we'll use this primitive as one of the building blocks.

\cite[Theorem 3(iii)]{NSS95} gives a deterministic $(n,k,ck)$-splitter of size $\widetilde{O}(e^{k(1+o(1))})$. We follow the high-level approach of \cite{NSS95} with certain low-level modifications to design a splitter of size $\widetilde{O}(e^{\frac{k}{c}(1+o(1))})$.

Our final construction of an $(n,k,ck)$-splitter will be a certain composition of splitters with various parameters. First, we give three auxiliary constructions of splitters with different parameters that will later be used in our main construction.

We say that an $(n,k,\ell)$-splitter is explicit if the truth table of every function can be computed in deterministic time $(n\ell)^{O(1)}$. In particular, all functions of an explicit splitter~$H$ can be printed in time $|H|(n\ell)^{O(1)}$.

\subsection{\texorpdfstring{$(n,k,k^2)$-splitters}{(n,k,k2)-splitters}}
We present an efficient deterministic way to build $(n,k,k^2)$-splitters from~\cite{F84,A86} that will later effectively allow us to reduce the domain size from~$n$ to~$k^2$.

\begin{proposition}[\cite{F84,A86}]\label{prop:splitter1}
There is an explicit $(n,k,k^2)$-splitter $A(n,k,k^2)$ of size $O(k^6\log{k}\log{n})$.
\end{proposition}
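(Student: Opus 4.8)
The plan is to construct the splitter $A(n,k,k^2)$ from a good error-correcting code, following the classical observation of \cite{F84,A86}. Suppose we have a linear code $\mathcal{C}$ over an alphabet of size $q$ with $n$ codewords, length $m$, and minimum relative distance at least $1 - 1/\binom{k}{2}$. Identify each element $i \in [n]$ with a distinct codeword $w^{(i)} \in [q]^m$, and for each coordinate $j \in [m]$ define a function $f_j \colon [n] \to [q]$ by $f_j(i) = w^{(i)}_j$. I claim the family $H = \{f_1, \dots, f_m\}$ is a perfect hash family on sets of size $k$: for any $S \subseteq [n]$ with $|S| = k$, there are $\binom{k}{2}$ pairs in $S$, and each pair of distinct codewords agrees on fewer than $m/\binom{k}{2}$ coordinates, so the total number of coordinates on which \emph{some} pair of $S$ collides is strictly less than $\binom{k}{2} \cdot m/\binom{k}{2} = m$; hence some coordinate $j$ separates all pairs, i.e.\ $f_j$ is injective on $S$. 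Since an injective function into $[q]$ with $q \geq k^2 \geq k$ automatically satisfies the even-splitting condition of an $(n,k,q)$-splitter (every fiber has size $0$ or $1$, and $\lfloor k/q \rfloor = 0$, $\lceil k/q \rceil = 1$), this family is an $(n,k,q)$-splitter.

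The next step is to instantiate the code so that the alphabet size is exactly $k^2$ (or a value that can be padded up to $k^2$) and the length $m$ is $O(k^6 \log k \log n)$. A Reed--Solomon code over $\F_p$ for a prime power $p$ with $p \geq \max(n, k^2)$ and suitable degree gives relative distance $1 - (\text{deg})/p$; choosing the degree so that $(\text{deg})/p < 1/\binom{k}{2}$ forces $p = \Omega(k^2 n)$, which would make the alphabet too large. So instead I would concatenate: take an outer Reed--Solomon (or algebraic-geometry) code of length polynomial in $n$ over a field of size $\mathrm{poly}(k)$ with relative distance close to $1$, and then reduce the alphabet. A cleaner route, and the one I expect \cite{F84,A86} use, is to build the splitter by composition of two hash families: first a hash family mapping $[n] \to [k^4]$ (say) obtained from a Reed--Solomon code over a field of size $\Theta(k^4)$ with the degree chosen to get distance $1 - 1/\binom{k}{2}$, which has $O(k^2 \log n / \log k)$ functions by the distance-vs-length count; then compose each such function with a \emph{fixed} perfect hash family from $[k^4] \to [k^2]$ of size $k^{O(1)}$ (which exists by brute force or by another application of the code construction, now with the tiny domain $k^4$). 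The composed family has size $k^{O(1)} \cdot O(k^2 \log n/\log k) = O(k^6 \log k \log n)$ after bookkeeping, and splits every $k$-set since injectivity is preserved under composition with an injective map. Each function's truth table is computable in time $(n \cdot k^2)^{O(1)}$ because evaluating a Reed--Solomon codeword coordinate and a fixed small hash function are both polynomial-time, so the splitter is explicit in the required sense.

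The main obstacle is getting the constants and the composition to line up so that the alphabet is precisely $k^2$ while keeping the size bound at $O(k^6 \log k \log n)$ rather than something like $k^{O(1)} \log n$ with a larger exponent; this requires being careful about which code to use at the outer level (one needs relative distance $1 - \Theta(1/k^2)$ with length only $\mathrm{poly}(k) \cdot \log n$, which Reed--Solomon over a field of size $\mathrm{poly}(k)$ delivers since the number of coordinates needed is roughly $\binom{k}{2}$ times the degree, and the degree only needs to be $\Omega(\log_p n)$) and about the exact size of the inner fixed perfect hash family on domain $\mathrm{poly}(k)$. I would also double-check the explicitness claim: the field arithmetic is over $\F_p$ with $p = \mathrm{poly}(k)$, which is polynomial-time, and listing all $m$ coordinate functions costs $m \cdot (nk)^{O(1)}$ as claimed. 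None of this is deep; it is a matter of assembling standard coding-theoretic primitives with the right parameters, and I would cite \cite{F84,A86} for the core code-to-splitter reduction and only verify the parameter arithmetic.
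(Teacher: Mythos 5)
Your core reduction is exactly the paper's: view the $m$ coordinates of a code with $n$ codewords and high relative distance as hash functions $[n]\to[\text{alphabet}]$, and run the averaging argument over the $\binom{k}{2}$ pairs to show some coordinate is injective on any fixed $k$-set. Your observation that injectivity into $[k^2]$ automatically satisfies the even-splitting condition is also correct.

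The gap is in instantiating the code. The paper simply cites \cite{ABNNR92} for an explicit linear code over the alphabet $[k^2]$ with $\geq n$ codewords, relative distance $\geq 1-2/k^2$, and length $m=O(k^6\log k\log n)$, which immediately closes the argument. Your attempted construction does not get there. A Reed--Solomon code over a field of size $q=k^{O(1)}$ has relative distance $1-d/q$ and at most $q^{d+1}$ codewords; forcing $q^{d+1}\geq n$ gives $d\gtrsim\log n/\log q$, and then $1-d/q\geq 1-1/k^2$ requires $q\gtrsim k^2\log n/\log q$, so for $n$ growing while $k$ is fixed the required field size blows up and the alphabet is no longer $k^{O(1)}$ --- you noticed this for the unary RS case but your proposed fix (outer RS over $\F_{k^4}$ plus an inner fixed hash family $[k^4]\to[k^2]$) inherits the same problem at the outer level, since the inner map cannot improve the outer code's distance. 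What is actually needed is a code-concatenation or expander-based construction that keeps the alphabet fixed at $k^2$ while absorbing arbitrary $n$ into the length as a $\log n$ factor, which is precisely what \cite{ABNNR92} supplies. You flag this as ``a matter of assembling standard primitives,'' but as written the parameters genuinely do not line up, and you would need to invoke a result of the \cite{ABNNR92} type explicitly.

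Two minor points: your phrasing ``agrees on fewer than $m/\binom{k}{2}$ coordinates'' requires relative distance strictly greater than $1-1/\binom{k}{2}$, not merely $\geq$; the paper avoids this by choosing $\delta\geq 1-2/k^2$, and $2/k^2<1/\binom{k}{2}$ gives the needed slack. Also, the inner ``fixed perfect hash family $[k^4]\to[k^2]$'' you invoke does exist, but composing with it preserves injectivity on a $k$-set only if the inner function is injective on the (at most $k$-element) image, so you would again need a perfect hash family, not just any map, and would need to argue its size is $k^{O(1)}$; this is fine but should be said.
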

\begin{proof}
There exist explicit linear codes~\cite{ABNNR92} over the alphabet~$[k^2]$ with at least~$n$ codewords, relative distance $\delta\geq1-2/k^2$, and length $m=O(k^6\log{k}\log{n})$. Below we show that viewing such a code as a set of~$m$ functions from~$[n]$ to~$[k^2]$ gives us the desired construction of an $(n,k,k^2)$-splitter. 

Assume towards a contradiction that there exists a set~$T$ of~$k$ codewords such that for each of the~$m$ coordinates, a pair of codewords from~$T$ takes the same value at this coordinate. Then the sum of the $\binom{k}{2}$ pairwise distances between the codewords does not exceed $\binom{k}{2}\cdot m - m$. By averaging, there is a pair of codewords with distance at most
\[
\frac{\binom{k}{2}\cdot m - m}{\binom{k}{2}}=m\left(1-\frac{1}{\binom{k}{2}}\right)<m\left(1-\frac{2}{k^2}\right)\;,
\]
which contradicts the assumption $\delta\geq1-2/k^2$ on the relative distance of the code.
\end{proof}

\subsection{\texorpdfstring{$(k^2,k,\log{k})$-splitters}{(k2,k,logk)-splitters}}
Now we present explicit splitters of small size for the case of small alphabet $\ell=\log{k}$.
\begin{proposition}[{\cite[Lemma~4]{NSS95}}]\label{prop:splitter2}
There is an explicit $(k^2,k,\log{k})$-splitter $B(k^2,k,\log{k})$ of size $k^{2\log{k}}$.
\end{proposition}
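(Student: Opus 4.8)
The plan is to follow the cited construction of \cite{NSS95}: realize the splitter as the family of \emph{all} functions that cut the totally ordered domain $[k^2]$ into $\log k$ consecutive blocks, and exploit that any $k$-element subset of a line can be cut into $\log k$ nearly-equal pieces by a single choice of block boundaries tailored to that subset. Concretely, for every threshold tuple $0=\tau_0<\tau_1<\ldots<\tau_{\log k-1}<\tau_{\log k}=k^2$ with each $\tau_j\in\{0,1,\ldots,k^2\}$, I would let $f_\tau\colon[k^2]\to[\log k]$ send $x$ to the unique index $j$ with $\tau_{j-1}<x\le\tau_j$, and take $B(k^2,k,\log k)$ to be the set of all such $f_\tau$. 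The number of these functions is $\binom{k^2-1}{\log k-1}\le (k^2)^{\log k}=k^{2\log k}$, which gives the claimed size bound. Explicitness is immediate: given $\tau$ and $x$, one computes $f_\tau(x)$ by binary search among the thresholds in time $O(\log k)$, so each truth table is computable in time $O(k^2\log k)=(n\ell)^{O(1)}$, and the whole family can be enumerated in time $k^{2\log k}(k^2\log k)^{O(1)}$.

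It then remains to verify that $B(k^2,k,\log k)$ splits every $k$-set evenly. Fix $S=\{s_1<s_2<\ldots<s_k\}\subseteq[k^2]$ and write $r=\lfloor k/\log k\rfloor$ and $\rho=k-r\log k\in\{0,\ldots,\log k-1\}$. I would prescribe block sizes $m_1,\ldots,m_{\log k}$ with $\rho$ of them equal to $r+1$ and the remaining $\log k-\rho$ equal to $r$, so that $\sum_j m_j=\rho(r+1)+(\log k-\rho)r=k$ while each $m_j\in\{\lfloor k/\log k\rfloor,\lceil k/\log k\rceil\}$. Setting $\tau_j:=s_{m_1+\ldots+m_j}$ for $1\le j\le\log k-1$ (together with $\tau_0=0$, $\tau_{\log k}=k^2$) yields a valid strictly increasing threshold tuple, and for it $f_\tau^{-1}(j)\cap S=\{s_{m_1+\ldots+m_{j-1}+1},\ldots,s_{m_1+\ldots+m_j}\}$ has exactly $m_j$ elements; hence $f_\tau\in B(k^2,k,\log k)$ splits $S$ evenly.

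I do not expect a real obstacle here. The substance is just the observation that over a totally ordered domain ``even splitting'' is achieved by an interval partition, of which there are only $\binom{k^2-1}{\log k-1}$ choices, so the main points requiring care are purely bookkeeping: arranging the prescribed block sizes $m_j$ to sum to $k$ while staying in $\{\lfloor k/\log k\rfloor,\lceil k/\log k\rceil\}$ (handled by the $r,\rho$ split above), checking that the resulting thresholds are genuinely strictly increasing and lie in $\{0,\ldots,k^2\}$, and the loose estimate $\binom{k^2-1}{\log k-1}\le k^{2\log k}$ for the size.
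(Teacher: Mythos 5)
Your proof takes essentially the same approach as the paper: both realize the splitter as the family of all interval partitions of $[k^2]$ into $\log k$ consecutive blocks, bound the size by a binomial coefficient, and verify evenness by choosing thresholds at the appropriate order statistics of the $k$-set. You are slightly more careful than the paper on two minor points — spelling out the block-size bookkeeping when $\log k \nmid k$ and explicitly checking explicitness — but these are only a more fastidious rendering of the identical argument.
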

\begin{proof}
Let $\ell=\log{k}$. For each sequence $0=i_0<i_1<\ldots<i_{\ell}=k^2$, define $f\colon[k^2]\to[\ell]$ by
\[
f(x)=t \;\;\;\;\; \text{iff} \;\;\;\;\; i_{t-1}<x\leq i_t
\;.
\]
This construction is explicit and has size $\binom{k^2}{\ell-1}\leq k^{2(\ell-1)}\leq k^{2\log{k}}$. In order to show that this is a $(k^2,k,\ell)$-splitter, consider a set $S=\{j_1,\ldots,j_k\}\subseteq[n]$, where $j_1<\ldots<j_k$, and note that the function $f$ defined by the set
\[
i_1=j_{k/\ell}, i_2=j_{2k/\ell},\ldots,i_{\ell-1}=j_{(\ell-1)k/\ell} \;,
\]
splits the set~$S$ evenly.
\end{proof}

\subsection{\texorpdfstring{$(k^2,k/\log{k},ck/\log{k})$-splitters}{(k2,klogk,cklogk)-splitters}}
Now we present a splitter with good parameters which is not explicit. We will later use it with small values of parameters so even though this splitter is not explicit, it will be possible to compute it in the allocated time. This primitive is based on the construction from~\cite[Theorem~2(i)]{NSS95}.

\begin{lemma}\label{lem:splitter3}
There is an $(n,k,ck)$-splitter $C(n,k,ck)$ of size $\widetilde{O}(e^{\frac{k}{c}(1+o(1))})$ that can be constructed deterministically in time $O((kn)^{3k})$.
\end{lemma}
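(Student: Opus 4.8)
\textbf{Proof proposal for \cref{lem:splitter3}.}

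The plan is to follow the probabilistic deletion argument of \cite[Theorem~2(i)]{NSS95}, but carry out the analysis explicitly so that the final size bound and the brute-force construction time are both under control for the regime $\ell=ck$ with $c$ a constant. First I would observe that a uniformly random function $f\colon[n]\to[ck]$ splits a fixed $k$-set $S$ evenly with probability exactly $p = \binom{ck}{k}^{-1}\cdot\frac{ck!}{(c!)^{k}}$-type expression (the multinomial probability that the $k$ elements of $S$ land in $k$ distinct blocks, one per block); the point is that this probability is $e^{-\frac{k}{c}(1+o(1))}$ by a Stirling estimate, since for large $c$ each of the $k$ elements lands in a fresh block with probability roughly $1-\frac{1}{c}$, and $(1-1/c)^{k}\approx e^{-k/c}$. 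Then a set of $m \approx \frac{1}{p}\cdot k\log n \le \widetilde O(e^{\frac{k}{c}(1+o(1))})$ independently chosen random functions fails to split some particular $k$-set with probability at most $(1-p)^{m}\le e^{-pm}\le n^{-k}$, and a union bound over the at most $\binom{n}{k}\le n^{k}$ such sets shows that a random family of this size is an $(n,k,ck)$-splitter with positive (indeed high) probability. Hence such a splitter \emph{exists}.

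To make the construction \emph{deterministic} in time $O((kn)^{3k})$, I would not try to derandomize the probabilistic argument cleverly; instead I would simply search exhaustively. There are $(ck)^{n}$ candidate functions $[n]\to[ck]$, which is too many, so the key point is that one may restrict attention to a structured subfamily: a function $f$ is determined, as far as splitting $k$-sets is concerned, by how it behaves on the relevant coordinates, and in fact it suffices to consider functions that depend only on the ``pattern'' of the input — formally one can quotient $[n]$ down and then enumerate. More directly: enumerate all families $H$ of functions of size exactly $m=\widetilde O(e^{\frac{k}{c}(1+o(1))})$ drawn from a pool of $\mathrm{poly}(n,k)$ canonical functions (e.g.\ the threshold-type functions analogous to those in \cref{prop:splitter2}, or functions obtained by composing a hash to $[k^{2}]$ with a small map), and for each candidate $H$ verify in time $|H|\cdot\binom{n}{k}\cdot\mathrm{poly}(k)$ whether it is an $(n,k,ck)$-splitter by checking every $k$-subset. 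The number of candidate families is at most $(\mathrm{poly}(n,k))^{m}$, and combined with the verification cost this fits inside $O((kn)^{3k})$ once one checks that $m\le k$ in the parameter regime where this lemma will actually be invoked (the lemma is stated to be used only with small values of the parameters, exactly so that this crude bound suffices); the existence half above guarantees the search succeeds.

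The main obstacle, and the step deserving the most care, is pinning down the probability estimate $p = e^{-\frac{k}{c}(1+o(1))}$ cleanly — specifically, showing that the number of random functions needed is $e^{\frac{k}{c}(1+o(1))}k\log n$ rather than, say, $e^{\frac{k}{c}}\cdot 2^{\Theta(k)}$, since an extra $2^{\Theta(k)}$ factor would be fatal for the intended application where we need the splitter size to be $\le c'^{\,k}$ for $c'$ as close to $1$ as we like (by taking $c$ large). This requires the Stirling-based computation that the ``even split'' probability, not merely the ``injective'' probability, is $e^{-\frac{k}{c}(1+o(1))}$; concretely one writes the even-split probability as $\frac{(ck)!/((k-\lfloor 1/c\rfloor\text{-block sizes})!)}{(ck)^{k}}$ and estimates $\log(1/p) = \frac{k}{c} + o(k)$ using $\log(1-x)=-x-O(x^2)$ with $x=\Theta(1/c)$ summed over the $k$ elements. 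A secondary technical point is verifying that the brute-force search, despite enumerating families rather than single functions, stays within $O((kn)^{3k})$; here I would explicitly bound $m$ by noting that in all later uses $k$ is itself $O(\log\text{(something)})$ so that $e^{k/c}$ and hence $m$ is polynomially bounded, making the $(\mathrm{poly})^{m}$ count itself only $2^{\mathrm{poly}}$, and then argue that $\binom{n}{k}\le n^{k}$ together with $m\cdot\mathrm{poly}(k)\le (kn)^{k}$ multiplies out to at most $(kn)^{3k}$.
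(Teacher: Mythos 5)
The proposal gets the size estimate right, but the deterministic construction breaks down. Your plan is to exhaustively search over all \emph{families} of size $m \approx e^{k/c}k\log n$ drawn from a pool of $\mathrm{poly}(n,k)$ functions, and you try to fit the $(\mathrm{poly}(n,k))^{m}$ candidates inside $O((kn)^{3k})$ by asserting that $m \le k$ in the regime of interest. That assertion is false: even in the actual call from \cref{thm:splitters} (which invokes the lemma with $n'=k^{2}$, $k'=k/\log k$, $c'=c$), one has $m' \approx e^{k/(c\log k)} k$, which vastly exceeds $k'$ for large $k$, and in any case the lemma as stated must hold for all $n,k,c$. The quantity $(\mathrm{poly}(n,k))^{m}$ with $m=e^{k/c}k\log n$ is doubly exponential and nowhere near $(kn)^{3k}$. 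A second, related gap: you never explain why a $\mathrm{poly}(n,k)$-size pool of ``canonical'' functions would contain a splitter at all — the union-bound existence argument you gave is over the set of \emph{all} functions, and restricting to a polynomial pool requires a new argument.

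The paper sidesteps both issues by derandomizing differently. It does not search over families at once; it builds the splitter \emph{greedily}, one function at a time. The pool is an explicit $k$-wise independent set $T$ of functions $[n]\to[ck]$ of size $|T|\le n^{k}$ (not $\mathrm{poly}$, but still manageable). The $k$-wise independence is exactly what is needed to transfer the probability estimate $\Pr[h \text{ injective on } S] \ge e^{-k/c}$ from a truly random $h$ to an averaging argument over $T$, yielding that \emph{some} $t\in T$ splits at least an $e^{-k/c}$ fraction of any current family of uncovered $k$-sets. Iterating this greedy choice $e^{k/c}k\log n$ times kills all $\binom{n}{k}$ sets; each greedy step costs $n\cdot\binom{n}{k}\cdot|T| \le n\cdot n^{k}\cdot n^{k}$, giving the $O((kn)^{3k})$ total. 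So the ingredients you are missing are (i) replace exhaustive search over families with a greedy one-at-a-time selection, and (ii) replace the vague polynomial pool with an explicit $k$-wise independent family of size $n^{k}$, whose $k$-wise independence is what makes the probability bound usable deterministically.

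One smaller confusion worth flagging: since here $\ell=ck\ge k$, the splitter condition degenerates to ``$f$ injective on $S$,'' so the multinomial ``even split'' probability you spend a paragraph worrying about is not needed; the paper just bounds $\prod_{i=1}^{k-1}(1-i/(ck)) \ge e^{-k/c}$ via $1-x\ge e^{-x-x^{2}}$, which is a one-line estimate with no $2^{\Theta(k)}$ loss to chase down.
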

\begin{proof}
Let~$T$ be a $k$-wise independent set of vectors of length~$n$ over the alphabet~$[ck]$. There are explicit constructions of such sets of size $|T|\leq n^k$~\cite{AS08}. 

First we show that there exists $t\in T$ which, if viewed as a function $h\colon[n]\to[ck]$, splits at least an $e^{-k/c}$ fraction of $k$-sets of any family of $k$-sets ${\cal{F}} \subseteq \binom{[n]}{k}$. Indeed, a set $S\in{\cal F}$ is split by~$h$ if $h$ is injective on~$S$. The probability that~$h$ is injective on a fixed set of size~$k$ is
\[
\left(1-\frac{1}{ck}\right)\left(1-\frac{2}{ck}\right)\cdots\left(1-\frac{k-1}{ck}\right)
\geq e^{-\frac{1}{ck}-\left(\frac{1}{ck}\right)^2-\frac{2}{ck}-\left(\frac{2}{ck}\right)^2-\ldots-\frac{k-1}{ck}-\left(\frac{k-1}{ck}\right)^2}
\geq e^{-k/c} \;.
\]
Now, we iteratively greedily pick a vector from~$T$ splitting at least $e^{-k/c}$-fraction of the remaining $k$-sets.

\paragraph{Size of the splitter.} The size of the resulting splitter is at most smallest~$t$ satisfying
\[
\binom{n}{k}(1-e^{-k/c})^t \leq 1\;.
\]
That is, $t\leq e^{k/c} k\log{n}$.

\paragraph{Running time.} The running time of each step of the greedy algorithm is at most $n\cdot\binom{n}{k}|T|$. And the total running time is at most
\[
t\cdot n\cdot \binom{n}{k}\cdot |T|
\leq e^{k/c} \cdot n^k \cdot n^k\cdot (kn)^{O(1)}
\leq (kn)^{3k}\;.
\]
\end{proof}

\subsection{Main Construction}
Equipped with the three auxiliary constructions above, we're in a position to present the main result of this section.
\begin{theorem}\label{thm:splitters}
For every $c\geq1$, there is an $(n,k,ck)$-splitter of size $O(e^{\frac{k}{c}(1+o(1))}\log{n})$ that can be constructed deterministically in time
$2^{9k}n^{O(1)}$.
\end{theorem}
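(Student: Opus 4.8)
The plan is to obtain the final splitter as a three-stage composition, following the high-level strategy of~\cite{NSS95} but with the auxiliary pieces tuned so that the alphabet size stays linear in~$k$. The first stage uses the explicit $(n,k,k^2)$-splitter $A(n,k,k^2)$ from \cref{prop:splitter1}: by composing the target function after some $f\in A$, we may reduce the domain from $[n]$ to $[k^2]$ at the cost of a factor $O(k^6\log k\log n)$ in the size and a polynomial overhead in the running time. Thus it suffices to build an $(k^2,k,ck)$-splitter of size $O(e^{k/c(1+o(1))})$ computable in time $2^{9k}k^{O(1)}$, and then precompose with $A$.

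The second stage handles the reduced domain $[k^2]$. Here I would use the small-alphabet splitter $B(k^2,k,\log k)$ from \cref{prop:splitter2} to partition $[k^2]$ into $\log k$ blocks, each of which (for the "good" $f\in B$ associated to a given $k$-set $S$) contains exactly $k/\log k$ elements of $S$. Within a single block of size $k^2$ (after renaming) carrying $k/\log k$ marked points, I would apply the non-explicit but small splitter $C$ from \cref{lem:splitter3} with parameters $(k^2,\, k/\log k,\, c\cdot k/\log k)$: this produces a $(k^2, k/\log k, ck/\log k)$-splitter of size $\widetilde O(e^{\frac{k/\log k}{c}(1+o(1))}) = e^{o(k)}$, constructible in time $O((k\cdot k^2)^{3k/\log k}) = 2^{O(k)} = 2^{o(k)\cdot?}$—one should check this is at most $2^{9k}k^{O(1)}$, which it is since $3(k/\log k)\log(k^3)=9k$. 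Taking the product splitter over the $\log k$ blocks (one coordinate of $C$ per block, glued together so that block~$j$ uses colour range $[(j-1)ck/\log k+1,\, jck/\log k]$) yields a family of functions $[k^2]\to[ck]$ of size $\bigl(e^{o(k)}\bigr)^{\log k}$, which is still only $e^{o(k)}$ if the $o(k)$ term in the exponent of $C$'s size is genuinely $o(k)/\log k$; this forces me to invoke \cref{lem:splitter3} with the $(1+o(1))$ factor made explicit enough (e.g.\ $e^{\frac{k}{c}}\cdot k\log n$) so that raising to the $\log k$ power is harmless.

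To see correctness of the composition: given any $S\subseteq[n]$ with $|S|=k$, pick $f\in A(n,k,k^2)$ injective on $S$, so $f(S)$ is a $k$-subset of $[k^2]$; pick $g\in B(k^2,k,\log k)$ splitting $f(S)$ evenly into $\log k$ parts of size $k/\log k$; and for each block $j$ pick $h_j$ in the $C$-splitter that is injective on the $k/\log k$ points of $f(S)$ in block~$j$. The concatenated colouring $v\mapsto h_{g(f(v))}(f(v))$ shifted into the $j$-th colour window is then injective on~$S$, hence splits $S$ evenly since $ck\ge k$. The total size is the product $O(k^6\log k\log n)\cdot k^{2\log k}\cdot \bigl(e^{\frac{k}{c\log k}(1+o(1))}\bigr)^{\log k}$; here the middle factor $k^{2\log k}=2^{2\log^2 k}=e^{o(k)}$ and the first factor is $n^{O(1)}$, so the product is $O(e^{\frac{k}{c}(1+o(1))}\log n)$ as claimed, absorbing all polynomial-in-$k$ and polylog-in-$n$ terms into the $n^{O(1)}$ and the $(1+o(1))$.

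The main obstacle I anticipate is \textbf{bookkeeping of the exponents under composition}: each level is cheap by itself, but the $C$-splitter gets raised to the $\log k$ power and its running time $O((kn')^{3(k/\log k)})$ must be kept at $2^{9k}n^{O(1)}$, which only works if the domain fed to $C$ has been reduced to $k^{O(1)}$ first (that is why stage one with $A$ is essential) and if the per-block parameter is $k/\log k$ rather than $k$ (that is why stage two with $B$ is essential). Getting the constant $9$ exactly right — it comes out as $3\cdot(k/\log k)\cdot\log((k^2)^3)=18k$ or $3\cdot(k/\log k)\cdot\log(k\cdot k^2)=9k$ depending on how one bounds $|T|$ and the enumeration — is the one place where I would have to be careful with the arithmetic, and possibly adjust the block count (say to $2\log k$ or $\log k/2$) or invoke \cref{lem:splitter3} with the domain already $k^{O(1)}$ so the $n^k$ factor there becomes $k^{O(k)}=2^{O(k\log k/\log k)}$-controlled. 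Everything else is a routine union/greedy argument already packaged in the three auxiliary propositions.
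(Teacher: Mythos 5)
Your proposal matches the paper's proof essentially step for step: the same three-stage composition of $A(n,k,k^2)$, $B(k^2,k,\log k)$, and $C(k^2,k/\log k,ck/\log k)$, the same gluing rule $v\mapsto \frac{ck}{\log k}\cdot b(a(v)) + h_{b(a(v))}(a(v))$, the same size estimate in which $k^{2\log k}=e^{o(k)}$ is absorbed into the $(1+o(1))$ exponent, and the same $O\bigl(((k/\log k)k^2)^{3k/\log k}\bigr)=O(2^{9k})$ running-time bound for the inner splitter. Your worry about carefully tracking the exponent through the $\log k$-fold product is exactly the bookkeeping the paper does, and it comes out as you anticipate.
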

\begin{proof}
Let $A=A(n,k,k^2), B=B(k^2,k,\log{k}), C=C(k^2,k/\log{k},ck/\log{k})$ be the splitters from \cref{prop:splitter1,prop:splitter2,lem:splitter3}, respectively. 
Without loss of generality, we assume that $k$ is a multiple of $\log{k}$. 
We define our $(n,k,ck)$-splitter $H$ as follows. For every function $a\in A$, every function $b\in B$, and every $\log{k}$-tuple of functions $(h_1,\ldots,h_{\log{k}})$ from~$C$, $H$ contains the function $f\colon[n]\to[ck]$, where
\begin{align}\label{eq:splitters}
f(x) = \frac{ck}{\log{k}}\cdot b(a(x)) + h_{b(a(x))}(a(x)) \;.
\end{align}
\paragraph{Correctness.} Let $S\subseteq[n]$ be a set of size $|S|=k$. We will show that there exist functions $a \in A$, $b \in B$, and $(h_1,\ldots,h_{\log{k}})$ in~$C^{\log k}$ such that for $f$ defined in~\eqref{eq:splitters} and every pair of distinct $s_1,s_2\in S$, $f(s_1)\neq f(s_2)$. Equivalently, for $s\in S$, if $y=b(a(s))$ and $S_y=\{s\in S\colon b(a(s))=y\}$, then $h_y$ is injective on~$a(S_y)$.

Since $A$ and $B$ are splitters, there exist $a\in A$ and $b\in B$ such that for every $y\in[\log{k}]$, $|S_y| \leq k/\log{k}$. Now since $C$ is a splitter and $|a(S_y)|\leq|S_y|\leq k/\log{k}$, we have that for every~$y\in[\log{k}]$, there exists $h_y\in C$ such that $h_y$ is injective on $a(S_y)$. Therefore, the function~$f$ defined with the selected $a, b, h_1,\ldots,h_{\log{k}}$ satisfies the requirement that $f(s_1)\neq f(s_2)$ for all distinct $s_1,s_2\in S$.

\paragraph{Size of the splitter.} By the definition of~$H$,
\[
|H| = |A| \cdot |B| \cdot |C|^{\log{k}}
 = O(k^6\log{k}\log{n}) \cdot O(k^{2\log{k}}) \cdot \left(O\left(e^{\frac{k}{c\log{k}}(1+o(1))} \right) \right)^{\log{k}}
 = O\left(e^{\frac{k}{c}(1+o(1))}\log{n}\right) \;.
\]
\paragraph{Running time.} The splitters from \cref{prop:splitter1,prop:splitter2} are explicit, and the splitter from \cref{lem:splitter3} takes time $\left(k^2 \cdot \frac{k}{\log{k}}\right)^{\frac{3k}{\log{k}}}=O(2^{9k})$.
\end{proof}

\section{Polynomial Formulations}\label{sec:formulationssection}
\renewcommand{\descriptionlabel}[1]{\hspace{\labelsep}\textit{#1}}

In this section, we prove \cref{lemma:polynomialformulation,lemma:parampolynomialformulation}: we give polynomial formulations of all problems from \cref{sec:problems}.

\subsection{Non-parameterized Problems}\label{sec:formulations}
\polynomialformulation*
\begin{proof}
    Let $A$ be one of the problems from the list above, and for every $n\in\N$, let $I_n$ be the set of instances of~$A$ of size~$n$. We~construct a~family of mappings $\phi=(\phi_1, \phi_2, \dotsc)$, where $\phi_n \colon I_n \to \{0,1\}^{s(n)}$ and a~family of~polynomials $\mathcal P=(P_1, P_2, \dotsc)$, following the 
    same five-step pattern.

    \begin{quote}
    \begin{description}
        \item[Idea.] We~provide a~high-level idea 
        of~encoding a~problem as a~polynomial. Start by~fixing a~parameter $\theta=\theta(c)$ that will be~chosen as a~large enough constant. 
        In~the analysis, we~write $n/\theta$ instead of~$\lceil n/\theta \rceil$: this affects the bounds negligibly and
        at~the same time simplifies the bounds.
        Then, a~solution of~size~$n$ that we~are looking for can be~broken into ``blocks'' of~size $n/\theta$; for each potential block, we~introduce a~$0/1$-variable; then, for each candidate solution, we~introduce a~monomial that is non-zero if~it~is indeed a~solution.
        
        \item[Variables.] We~introduce a~set~$X$ of~$s(n)$
        variables. They are used to~specify the function~$\phi_n$
        that maps an~instance $I \in I_n$ to a~vector in~$\mathbb{Z}^{s(n)}$. To~do this, we~specify a~$0/1$-value
        that $\phi_n(I)$ assigns to~every variable~$x \in X$.
        
        \item[Complexity.] We~bound the number of~variables $s(n)$ of~the constructed polynomial~$P_{s(n)}$ as~well~as the time needed to~compute the mapping~$\phi_n$
        by~\[\left(2^{n/\theta} \cdot \binom{n}{n/\theta}\right)^{O(1)} \, .\]
        In~all the cases, it~will be~straightforward to~compute~$s(n)$ in~the allocated time.
        
        \item[Polynomial.] We~specify the polynomial $P_{s(n)}(X)$ as a~sum of~$2^{O(n)}$ monomials (where the hidden constant in $O(n)$ depends on $\theta=\theta(c)$ only), each having coefficient~$1$. It~is usually straightforward from the definition of the polynomial that 
        $I$~is a~yes-instance of~$A$
        iff $P_{s(n)}(\phi_n(I))>0$.
        
        \item[Degree.] We~show that the degree~$\Delta$ 
        of~$\mathcal P$
        depends on~$\theta$ only.
    \end{description}
    \end{quote}
    
    Below, we~show that the five steps above ensure that $\mathcal P$ is indeed a~polynomial formulation of~$A$.
    \begin{itemize}
        \item By~choosing a~large enough $\theta=\theta(c)$, we ensure that, for all large enough~$n$,
	    \[|X|=s(n)=\left(2^{n/\theta} \cdot \binom{n}{n/\theta}\right)^{O(1)} < c^n \, .\]
	    
	    \item Since $P_{s(n)}(X)$ is a~sum of~$2^{O(n)}$ monomials, computing the coefficients of all monomials in~$P_{s(n)}$ takes time $2^{O(n)}$. Since $|X|=c^n$, $2^{O(n)}=|X|^{O(1)}$.
	    Since the degree of~$\mathcal P$ is~$\Delta$,
	    $\mathcal P$ is a~$\Delta$-explicit family of~polynomials.
	    
	    \item Recall that $\phi_n$ maps an instance of the problem to a vector from~$\{0,1\}^{s(n)}$, and that all the coefficients of the polynomials $P_{s(n)}$ are ones. Thus, for every $I \in I_n$, $|P_{s(n)}(\phi_n(I))|$
		is at~most the number of~monomials in~$P_{s(n)}$,
		i.e., $|X|^{O(1)}$, and hence at~most $2^{|X|}$. Since $\phi_n(I)$ can be~computed in~time $c^n$, we~conclude that
		$\mathcal P$ is~indeed a~polynomial formulation of~$A$.
    \end{itemize}

        \paragraph{\problem{Hamiltonian Path}.} Given a~directed graph $G(V,E)$ with $n$~nodes, check whether it~contains a~Hamiltonian path.
        
        \begin{description}
            \item[Idea.] 
            One can break a~Hamiltonian path $\pi$ into $\theta$ node-disjoint paths $\pi_1, \dotsc, \pi_{\theta}$ of length $n/\theta$ each. We say that $\pi_1, \dotsc, \pi_{\theta}$ is a~valid partition iff $\pi_i$'s are simple paths of~length $n/\theta$ sharing no~nodes and, for every~$i$, there is an~edge joining the last node of~$\pi_i$ with the first node of~$\pi_{i+1}$.
            
            \item[Variables.] Introduce $s(n)=O\left(n^2\binom{n}{n/\theta}\right)$ variables:
            \[X=\{x_{S,u,v} \colon S \subseteq V, |S|=n/\theta, u \in S, v \in V \setminus S\} \, .\]
            The mapping~$\phi_n(G)$ 
            assigns the following $0/1$-value 
            to a~variable $x_{S,u,v}$:
            \[[\text{there is a~Hamiltonian path in $G[S]$ that starts at~$u$ and ends
            at~a~node adjacent to~$v$}]\]
            (here and below, $[\cdot]$ is the Iverson bracket: for 
            a~predicate~$Y$, $[Y]=1$ if $Y$ is~true and $[Y]=0$ otherwise).
            
            \item[Complexity.] The mapping $\phi_n(G)$~can be~computed in~time $O^*(\binom{n}{{n/\theta}}2^{n/\theta})$ because \problem{Hamiltonian Path} on a graph with $n$ nodes can be~solved in~time $O^*(2^n)$.
            
            \item[Polynomial.] For every partition $V=S_1 \sqcup \dotsb \sqcup S_{\theta}$ into disjoint subsets of~size~${n/\theta}$ and
            every ${\theta}$~nodes $v_1, \dotsc, v_{\theta+1}$, add to~$P_{s(n)}$ a~monomial
            \[x_{S_1,v_1,v_2} \cdot x_{S_2,v_2,v_{3}} \dotsb x_{S_\theta,v_{\theta},v_{\theta+1}} \, .\]
            The number of~monomials added to~$P_{s(n)}$ 
            is~at~most~$\theta^nn^{\theta}=2^{O(n)}$.
            
            \item[Degree.] The degree of~$\mathcal P$ is~$\theta$.
        \end{description}

        \paragraph{\problem{3d-Matching}.} Given a~3-uniform 3-partite hypergraph~$G(V_1 \sqcup V_2 \sqcup V_3, E)$ with 
        	parts of~size~$n$ (that is, $|V_1|=|V_2|=|V_3|=n$ and $E \subseteq V_1 \times V_2 \times V_3$) and an~integer~$t$, check whether $G$~contains a~matching 
        	of~size at~least~$t$.
        	
        \begin{description}
            \item[Idea.] 
             One can break a~matching~$M$ of~size~$t$ into $\theta$~matchings $M_1, \dotsc, M_{\theta}$ 
         	of~size at~most $n/\theta$. Then, $M_1, \dotsc, M_{\theta}$ is a~valid partition iff $M_i$'s are
         	node-disjoint matchings.
         	\item[Variables.] Introduce $s(n)=\binom{n}{\leq n/\theta}^3=O^*\left(\binom{n}{n/\theta}^3\right)$ variables:      	\[X=\{x_{A,B,C} \colon A \subseteq V_1, B \subseteq V_2, C \subseteq V_3, |A|=|B|=|C| \le n/\theta\} \, .\]
         	 The function $\phi_n(G)$ assigns the following value to $x_{A,B,C}$:
         	 \[[\text{the induced subgraph $G[A \sqcup B \sqcup C]$ contains a~perfect matching}] \, .\]
         	 \item[Complexity.]
         	 This~can be~computed in~time $O^*\left(8^{n/\theta}\binom{n}{n/\theta}^3\right)$ since \problem{3d-Matching}
     	        is~solvable in~time $O^*(8^n)$ in 3-partite graphs with parts of~size~$n$.\footnote{This is~done by a~straightforward dynamic programming algorithm: for $A \subseteq V_1, B \subseteq V_2, C \subseteq V_3$, let $M(A, B, C)$ be the maximum size of a~matching in $G[A \cup B \cup C]$; then, $M(A, B, C) = \max_{\{a, b, c\} \in E} M(A \setminus a, B \setminus b, C \setminus c) + 1.$}
     	    \item[Polynomial.] The polynomial~$P_{s(n)}$ is~defined as~follows. For every 
         	$A_1, \dotsc, A_\theta \subseteq V_1$,
         	$B_1, \dotsc, B_\theta \subseteq V_2$,
         	$C_1, \dotsc, C_\theta \subseteq V_3$, such that all $A_i$'s, $B_i$'s, and $C_i$'s
         	are pairwise disjoint, have size at most $n/\theta$, and that
         	\[\left|\bigcup_{i \in [\theta]}A_i\right| = t \, ,\]
         	add to~$P_{s(n)}$ a~monomial
         	\[\prod_{j \in [\theta]}x_{A_j,B_j,C_j} \, .\] 
         	The number of~monomials added to~$P_{s(n)}$
     	    is at most $\binom{n}{n / \theta}^{3\theta} = 2^{O(n)}$.
     	    \item[Degree.] The degree of~$\mathcal P$ is~$\theta$.
        \end{description}

        \paragraph{\problem{Independent Set}.} Given a~graph~$G(V,E)$ with $n$~nodes and an~integer~$t$, check whether $G$~contains an~independent set of~size at~least~$t$.
        
        \begin{description}
            \item[Idea.] 
            An~independent set~$I$ of size~$t$ can 
        	be~partitioned into $\theta$ sets $I_1, \dotsc, I_{\theta}$ of size at~most~$n/\theta$. Then, $I_1, \dotsc, I_{\theta}$ is a~valid partition iff their total size is at least~$t$ and for all $i \neq j$, $I_i \cup I_j$ is an~independent set.
        	\item[Variables.] Introduce
            $s(n)=\binom{n}{\le 2n/\theta}=O^*(\binom{n}{2n/\theta})$ variables:
            \[X=\{x_S \colon S \subseteq V \text{ and } |S|\le 2n/\theta\} \, .\]
            The mapping~$\phi_n(G)$ assigns to~$x_S$ the value
            \[[\text{$S$ is an~independent set of~$G$}] \, .\]
            \item[Complexity.] The mapping $\phi_n(G)$ can be~computed in~time $O^*(2^{2n/\theta}\binom{n}{2n/\theta})$,
            since \problem{Independent Set} can be~solved in~time $O^*(2^n)$.
            \item[Polynomial.] For every $S_1, \dotsc, S_{\theta} \in \binom{V}{\le n/\theta}$ such that $S_i \cap S_j =\emptyset$, for all $i \neq j$, and $|\cup_{i \in [\theta]} S_i| = t$, add to~$P_{s(n)}$ a~monomial
            \[\prod_{1 \le i < j \le \theta}x_{S_i \cup S_j} \, .\]
            The number of~monomials added to~$P$ is at~most~$\theta^n=2^{O(n)}$.
            \item[Degree.] The degree of~$\mathcal P$ is~$O(\theta^2)$. 
        \end{description}
        
        \paragraph{\problem{Vertex Cover} and \problem{Clique}.} 
        These two problems are close relatives of~\problem{Independent Set}: 
        the complement of an~independent set in~a~graph
        is a~vertex cover of~this graph; a~clique in a~graph
        is an~independent set in the~complement of the graph.
        Thus, for \problem{Vertex Cover} and \problem{Clique} one can use the polynomial formulation of \problem{Independent Set}.

        \paragraph{\problem{MAX-$k$-SAT}.} Given a~$k$-CNF formula $F=C_1 \land \dotsb \land C_m$ over $n$~variables and an~integer~$t$, check whether it~is~possible to~satisfy at~least $t$~clauses of~$F$.
        
        \begin{description}
            \item[Idea.] 
            An~assignment $\mu \in \{0,1\}^n$ satisfying at~least~$t$ clauses can be~partitioned into $\theta$
            subassignments $\mu_1, \dotsc, \mu_{\theta} \in \{0,1\}^{n/\theta}$. Then, for each clause~$C$, one can assign at most $k$~subassignments that are ``responsible'' for~$C$: these are the subassignments that contain the $k$ variables from~$C$. Then, $\mu_1, \dotsc, \mu_{\theta}$ is a~valid 
        	partition if~the total number of clauses satisfied 
        	by~their $k$-tuples of~subassignments is at~least~$t$.
        	
        	\item[Variables.] Partition
            the set of~variables of~$F$ into $\theta$~blocks $V_1, \dotsc, V_{\theta}$
            of~size~$n/\theta$.
            For each clause~$C$, assign $k$~blocks such that all variables of~$C$
            belong to~these blocks: formally, let $b(C) \subseteq [{\theta}]$, $|b(C)|=k$, and the set of~variables of~$C_i$ is a~subset of~$\cup_{i \in b(C)}V_i$.
            
            Introduce $s(n)=\binom{n/{\theta}}{k}2^{nk/{\theta}}t$ variables:
            \[X=\{x_{B,\tau,r} \colon B \subseteq [{\theta}], |B|=k, \tau \in \{0,1\}^{nk/{\theta}}, 0 \le r \le t\} \, .\]
            For $B \subseteq [{\theta}]$, $|B|=k$, by~$c(B)$ define the set of~clauses~$C$ of~$F$ such that $b(C)=B$.
            The mapping $\phi_{n}(F)$ assigns the following value to $x_{B, \tau, {r}}$:
            \[[\text{$\tau$~satisfies at~least $r$~clauses from~$c(B)$}] \,.\]
            
            \item[Polynomial.] 
            Let $\mathcal F$ be a~class of~functions $f \colon 2^{[{\theta}]} \to \mathbb{Z}_{\ge 0}$ such that $\sum_{B \subseteq [{\theta}], |B|=k}f(B)=t$. For $\mu \in \{0,1\}^n$ and $B \subseteq [{\theta}]$, let $\mu_B$ be a~projection on~coordinates $\cup_{i \in B}V_i$. 
            For every $f \in \mathcal F$ and $\mu \in \{0,1\}^n$, 
            add to~$P_{s(n)}$ a~monomial
            \[\prod_{B \subseteq [{\theta}], |B|=k}x_{B,\mu_B,f(B)} \, .\]
            Clearly, $|\mathcal F| \le t^{2^\theta} \le n^{O(k2^\theta)}$ and $|\{B \colon B \subseteq [\theta], |B|=k\}| \le 2^{\theta}$. Hence, the number of~monomials added to~$P$ is at~most $O^*(2^n)=2^{O(n)}$.
            \item[Degree.] The degree of~$P$ is $\binom{\theta}{k} \le 2^{\theta}$.
        \end{description}

        \paragraph{\problem{$k$-SAT}.} \problem{$k$-SAT} is a~special case of~\problem{MAX-$k$-SAT}.

        \paragraph{\problem{Graph Coloring}.} Given a~graph~$G(V,E)$ with $n$~nodes and an~integer~$t$, check whether $G$~can be~colored properly using at~most $t$~colors.
        
        \begin{description}
            \item[Idea.] 
Partition~$V$ into $\theta$~blocks $V_1, \dotsc, V_{\theta}$ of size $n/\theta$. We~would like 
    		to~construct a~$t$-coloring of~$V$ from~colorings of
    		the blocks. However, a~$t$-coloring may contain a~color
    		whose color class is large (much larger than $n/\theta$). For this reason, the polynomial formulation below is a~bit trickier than the previous ones: we~have to~treat color classes differently depending on~their size.
    		\item[Variables.] Introduce $s(n)=O(n^2\binom{n}{2n/\theta})$ variables:
            \[X=\{x_{S,r} \colon S \subseteq V, |S| \le 2n/\theta, 0 \le r \le t\} \, .\]

            The mapping~$\phi_{n}(G)$ assigns to~$x_{S,r}$ the value
            \[[\chi(G[S]) \le r] \, .\]
            As~the chromatic number of a~$n$-node graph can be~found
            in~time~$O^*(2^n)$~\cite{DBLP:journals/siamcomp/BjorklundHK09}, the mapping $\phi_n(G)$ can be computed in time \[O^*\left(2^{2n/\theta}\binom{n}{2n/\theta}\right) \, .\]
            
            \item[Polynomial.] The polynomial $P_{s(n)}$ is~defined as~follows:
            \[P_{s(n)}(X)=\sum_{\substack{V=S_1 \sqcup \dotsb \sqcup S_{\theta}\\|S_i|\le 2{n/\theta} \text{ for } i \in [{\theta}]}} \sum_{\substack{t_1+\dotsb+t_p=t\\t_i \ge 0 \text{ for } i \in [{\theta}]}} \prod_{i \in [p]}x_{S_i,t_i} \, .\]
            We~claim that $P_{s(n)}(\phi_n(X)) > 0$ iff $G$~can be~properly colored using $t$~colors such that every color induces an~independent set of~size
            at~most~${n/\theta}$. Indeed, if~there is~such a~coloring, one can greedily pack color classes into groups of size at~most~$2{n/\theta}$ and obtain the required partition $V=S_1 \sqcup \dotsb \sqcup S_{\theta}$.
            
                    Thus, it~remains to~consider the case when there exists a~$t$-coloring where at~least one of the color classes has size more than~${n/\theta}$.
        Since each color class induces an~independent set in the~graph,
        we~are going to~reuse the ideas from polynomial formulation
        of~\problem{Independent Set}. Namely, assume that $T_1, \dotsc, T_l$ are all large color classes in a~$t$-coloring: for every $i \in [l]$, $T_i \subseteq V$ is an~independent set of~size more than~${n/\theta}$.

        Introduce the following additional variables:
        \[Y=\{y_S \colon S \subseteq V, |S| \le {2n/\theta}\} \, .\]
        The mapping~$\phi_n(G)$ assigns the following values to~$y_S$:
        \[[\text{$S$ is an~independent set in~$G$}] \, .\]
        
                For every $L \subseteq V$ of~size $l>{n/\theta}$, fix its partition (say, the lexicographically first one) $L_1 \sqcup \dotsb \sqcup L_{\lceil l\theta/n \rceil}$ into subsets of~size~${n/\theta}$: all sets are disjoint and all of~them have size~${n/\theta}$ except for possibly the last one. The following monomial expresses the fact that $L$~is an~independent set in~$G$:
        \[M(L)=\prod_{1 \le i < j \le \lceil l\theta/n \rceil}y_{L_i \cup L_j} \, .\]

        The final polynomial looks as~follows:
        \[Q_{s(n)}(X,Y)=\sum_{l=0}^{{\theta}}\sum_{\substack{T_1, \dotsc, T_l \subseteq V\\T_i \cap T_j = \emptyset \text{ for all }i\neq j\\|T_i|>{n/\theta} \text{ for all }i}} \sum_{\substack{S_1 \sqcup \dotsb \sqcup S_{\theta}=V\setminus \cup_{i \in [l]}T_i\\|S_i| \le 2{n/\theta} \text{ for all }i \in [p]\\t_1+\dotsb+t_p=t-l}}\prod_{i \in [l]}M(T_i)\prod_{i \in [{\theta}]}x_{S_i,t_i} \, .\]
        
        The number of~monomials added to~$Q$ is at~most $O^*(n^{\theta}\theta^n)=2^{O(n)}$.
        \item[Degree.] The degree of~$Q_{s(n)}$ is at~most $2\theta^2+\theta$. 
        \end{description}

        \paragraph{\problem{Set Cover}.} Given a~set family $\mathcal F = \{F_1, \dotsc, F_m\} \subseteq 2^{[n]}$, $m = n^{O(1)}$ and an~integer~$t$, check whether one can cover $[n]$ with at~most $t$~sets from~$\mathcal F$.
        
        \begin{description}
            \item[Idea.] 
Partition the universe $[n]$ into $\theta$ blocks of~size $n/\theta$. Each of~these blocks is~either covered~by at~most~$t$ sets or is~covered by a~single large set (of size at least $n/\theta$) that also possibly intersects other blocks.
            
            \item[Variables.]
            Introduce $s(n)=O(n^{O(1)}\binom{n}{2n/\theta})$ variables:
            \begin{align*}
                X&=\{x_{S,r} \colon S \subseteq [n], |S| \le 2n/\theta, 0 \le r \le t\} \, ,\\
                Y&=\{y_{S,i} \colon S \subseteq [n], |S| \le 2n/\theta, 1 \le i \le m\} \, .
            \end{align*}
    
            The mapping~$\phi_{s(n)}(\mathcal F)$ assigns the following values to~$x_{S,r}$ and $y_{S,i}$:
            \begin{align*}
                &[\text{$S$ can be covered by $r$ sets from~$\mathcal F $}] \, ,\\
                &[S \subseteq F_i] \, ,\\
            \end{align*}
            
            \item[Complexity.] As~\problem{Set Cover} problem can be~solved
            in~time~$O^*(2^n)$~\cite{DBLP:journals/siamcomp/BjorklundHK09}, the mapping~$\phi_{s(n)}(\mathcal F)$ can be~computed in~time \[O^*\left(2^{\frac{2n}{\theta}}\binom{n}{\frac{2n}{\theta}}\right) \, .\]
            
            \item[Polynomial.]
            For every $L \subseteq V$ of~size $l>n/\theta$, fix its partition (say, the lexicographically first one) $L_1 \sqcup \dotsc \sqcup L_{\lceil \theta l/n \rceil}$ into subsets of~size~$n/\theta$: all sets are disjoint and all of~them have size~$n/\theta$ except for possibly the last one. The following monomial expresses the fact that $L \subseteq F_q$:
            \[M(L,q)=\prod_{i=1}^{\lceil \theta l/n \rceil}y_{L_i,q} \, .\]
    
            Finally, the polynomial $Q_{s(n)}(X,Y)$ is~defined as~follows:
            \[\sum_{l=0}^{\theta}\sum_{\substack{q_1, \dotsc, q_l \in [m]\\ \forall i \neq j: q_i \neq q_j}}\sum_{\substack{T_1, \dotsc, T_l \subseteq V\\T_i \cap T_j = \emptyset \text{ for all }i\neq j\\|T_i|>n/\theta \text{ for all }i}} \sum_{\substack{S_1 \sqcup \dotsb \sqcup S_{\theta}=V\setminus \cup_{i \in [l]}T_i\\t_1+\dotsb+t_p=t-l}}\prod_{i \in [l]}M(T_i,q_i)\prod_{i \in [\theta]}x_{S_i,t_i} \, .\]
            
            \item[Degree.]
            The degree of~this polynomial is at~most $(\theta+1)\theta$. The number of~monomials added to~$Q_{s(n)}$
            is at~most $O^*(n^{O(\theta)}\theta^n)=2^{O(n)}$.
        \end{description}
\end{proof}

\subsection{Parameterized Problems}\label{sec:paramformulations}

\subsubsection{Technical Lemmas}
    For parameterized polynomial formulations, we utilize the following technical lemma. We~provide its proof in~\cref{apx:treelemma}.
    \begin{restatable}{lemma}{treedecomposition}
    \label{st:tree_decomposition}
        Let $T(V,E)$ be a~tree, $M \subseteq V$ be a~set of $k$~nodes, and $\theta > 1$ be an integer. Then $E$~can be~partitioned into
        $m \le \theta$ blocks $E=E_1 \sqcup \dotsb \sqcup E_m$
        such that, for each $i \in [m]$, $E_i$ induces a~subtree $T_i$
        of~$T$ with at~most $\frac{2k}{\theta - 1} + 2$ nodes from~$M$.
    \end{restatable}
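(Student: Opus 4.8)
The plan is to root $T$ at an arbitrary vertex and then repeatedly ``peel off'' blocks from the bottom, bundling subtrees greedily so that every committed block carries a full quota of marked nodes. Set $\tau\vcentcolon=\ceil{k/(\theta-1)}$; the case $k=0$ is trivial (take the single block $E_1=E$), so assume $k\ge1$, whence $1\le\tau\le k$. For the rooted tree and a vertex $v$ write $T_v$ for the subtree rooted at $v$ and $d(v)\vcentcolon=|M\cap V(T_v)|$, and call $v$ \emph{heavy} if $d(v)\ge\tau$ and \emph{light} otherwise. Since $d(r)=k\ge\tau$ there is a heavy vertex; walking down from $r$ through heavy children until stuck, we reach a heavy vertex $v^\ast$ all of whose children are light, so every child $c$ of $v^\ast$ has $d(c)\le\tau-1$.

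First suppose $\sum_{c}d(c)\ge\tau$ (sum over children of $v^\ast$). Process the children in an arbitrary order, cutting off a new \emph{bundle} each time the running sum of $d(\cdot)$ first reaches $\tau$; since each child contributes at most $\tau-1$, every completed bundle $G$ satisfies $\tau\le\sum_{c\in G}d(c)\le 2\tau-2$ (here $\tau\ge2$ automatically, as all $d(c)\le\tau-1$ while $\sum_c d(c)\ge\tau$). For each completed bundle $G$ emit the block $\{(v^\ast,c):c\in G\}\cup\bigcup_{c\in G}E(T_c)$ -- and to exactly one of these blocks also add the edge from $v^\ast$ to its parent, if $v^\ast\neq r$ -- then delete all the emitted edges. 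This detaches the subtrees $T_c$ for $c$ in completed bundles while leaving $v^\ast$ connected to its parent and to its children in the (possibly empty) leftover bundle, so the remaining graph is again a tree; crucially, the leftover bundle is \emph{carried upward} rather than committed. In the degenerate case $\sum_c d(c)<\tau$ one must have $v^\ast\in M$ and $d(v^\ast)=\tau$; then emit the single block $E(T_{v^\ast})\cup\{(\mathrm{parent}(v^\ast),v^\ast)\}$ (drop the last edge if $v^\ast=r$) and delete its edges, detaching all of $T_{v^\ast}$. Repeat on the remaining tree as long as it contains at least $\tau$ marked nodes; once it contains fewer, emit all its remaining edges as the final block.

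It then remains to verify the three requirements. Each emitted set of edges is, by construction, a connected acyclic subgraph of $T$, i.e.\ induces a subtree. Its number of marked vertices is at most $[v^\ast\in M]+\sum_{c\in G}d(c)+[\mathrm{parent}(v^\ast)\in M]\le 1+(2\tau-2)+1=2\tau$ for a bundle block, at most $d(v^\ast)+1=\tau+1\le 2\tau$ for a degenerate block, and at most $\tau$ for the final block; in every case this is at most $2\tau=2\ceil{k/(\theta-1)}\le\frac{2k}{\theta-1}+2$. For the count of blocks, observe that every block other than the final one ``consumes'' at least $\tau$ marked vertices -- namely the marked vertices strictly inside the subtrees it detaches ($\sum_{c\in G}d(c)\ge\tau$, resp.\ $d(v^\ast)=\tau$) -- and these consumed sets are pairwise disjoint, since a detached subtree never reappears. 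Hence the number of non-final blocks is at most $k/\tau\le k/(k/(\theta-1))=\theta-1$, so there are at most $\theta$ blocks in all; and since the marked count drops by at least $\tau\ge1$ at every step, the process terminates.

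The one point that requires care -- and the reason for peeling one heavy vertex at a time rather than simply cutting near every marked node -- is that a bottom vertex of large degree could otherwise spawn many tiny blocks and destroy the bound $m\le\theta$. Keeping the incomplete leftover bundle attached to $v^\ast$ (so that it is absorbed into a block created higher up, or into the final block) guarantees that every block we actually emit consumes its full quota of $\ge\tau$ marked vertices. The rest is the $\pm1$ bookkeeping for a block's two boundary vertices, its apex $v^\ast$ and the parent of $v^\ast$, which is precisely what the ``$+2$'' in the statement absorbs, together with the harmless extreme cases $k=0$ and $\tau=1$ (in the latter all children are light with $d(c)=0$, so every step uses the degenerate branch and emits a block with at most $2$ marked vertices).
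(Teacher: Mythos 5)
There is a genuine internal contradiction in your bundle case that breaks the block count as written. You instruct us to add the edge from $v^\ast$ to its parent to one of the bundle blocks and then "delete all the emitted edges," yet in the same breath claim this "leav[es] $v^\ast$ connected to its parent" so that "the remaining graph is again a tree." Both cannot hold: once the parent edge is emitted and deleted, the remaining graph splits into two components — the part of $T$ above $v^\ast$, and $v^\ast$ together with the leftover bundle and its subtrees. Your bound $m \le k/\tau + 1 \le \theta$ relies on there being a \emph{single} final block, i.e.\ a single surviving tree with $<\tau$ marked vertices; under the forest interpretation each orphaned component with $<\tau$ marked vertices demands its own final block, so each non-root $v^\ast$ with a non-empty leftover can contribute an extra block and the disjoint-consumption count no longer gives $m\le\theta$. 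The $+[\mathrm{parent}(v^\ast)\in M]$ term in your per-block marked count is a second symptom of the same confusion.

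The fix is small but necessary: in the bundle case, do \emph{not} emit the parent edge of $v^\ast$. Then $v^\ast$ genuinely stays attached to both its parent and the leftover children, the remainder is a tree, and each bundle block has marked count $[v^\ast\in M]+\sum_{c\in G}d(c)\le 2\tau-1$ (only the apex $v^\ast$ is a boundary vertex now, so you actually need $+1$, not $+2$, here). The edge $(\mathrm{parent}(v^\ast),v^\ast)$ is then emitted later — inside a degenerate step whose detached $T_{v^{\ast\ast}}$ contains $v^\ast$, inside a bundle at an ancestor, or in the final block — so all edges are still covered. With this one repair, the rest of your argument (the $2\tau\le\frac{2k}{\theta-1}+2$ bound, the disjoint-consumption count, and the $k=0$ and $\tau=1$ edge cases) is correct. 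For reference, the paper instead proves a standalone extraction lemma producing a subtree $U$ with $\ell/2\le|M\cap(V(U)\setminus\{u\})|\le\ell$ and iterates it; your "deepest heavy vertex plus greedy bundling" route is a legitimate alternative that achieves the same $\ge\tau$ consumption per block, and once fixed it is equally valid.
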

    
    The following definition resembles a~block structure of a~graph: for a~family of~sets $X_1, \dotsc, X_m$ we~introduce $m$~nodes
    and connect the $i$-th of these nodes with all elements of~$X_i$
    that belong to~at least one other~$X_j$.
    
    \begin{definition}
        Given $m$~sets $X_1, \dotsc, X_m$, we introduce a~set $S = \{s_1, \dotsc, s_m\}$ such that $S$~does not intersect any $X_i$, and a set $C = \bigcup\limits_{i \neq j} (X_i \cap X_j)$.
By a~\emph{subset graph} $B(X_1, \dotsc, X_m)$ we denote the following graph $G(V,E)$: $V=S \sqcup C$, $E = \{ \{s_i, c\} \colon c \in X_i \cap C \}$.
        Each $v \in S$ is~called a~\emph{set node}, 
        and each $v \in C$ is called a~\emph{connector node}.
    \end{definition}

    \begin{remark}
    \label{re:b_tree}
        Let $T_1, \dotsc, T_m$ be the trees resulting from 
        applying \cref{st:tree_decomposition} to a~tree~$T$.
        Then, $B = B(V(T_1), \dotsc, V(T_m))$ is a~tree
        containing $m$~set nodes and at~most~$m - 1$ connector nodes.
    \end{remark}

\subsubsection{Polynomial Formulations}
\parampolynomialformulation*
\begin{proof}
We will design polynomial formulations $P_{s(x,k)}$ for the above problems where $s(x,k)=s(n,k)$ will be a function of $n=|x|$ and $k$. Naturally, we would like to have different polynomials $P$ for different values of $(n,k)$, alas, $s$ is not necessarily injective. One way to overcome this issue is to consider a two-dimensional sequence of polynomials $P_{s(n,k),k}$ as we'll always have that $s(\cdot,k)$ is injective for every~$k$. But this approach would cause technical issues in the proof of arithmetic lower bounds in \cref{thm:mainparam}. Instead, we still consider a sequence $(P_1,P_2,\ldots)$ of polynomials, but slightly modify the function $s$. Given a function $s(n, k)$, we define the following Cantor pairing function~\cite{HU79} of $s(n,k)$ and $k$:
\[
s'(n,k)=(s(n,k)+k)(s(n,k)+k+1)/2+k \,.
\]
We note that $s'(\cdot,\cdot)$ is injective because $s(\cdot,k)$ is injective for every~$k$. In all polynomial formulations below, we can always switch from $s(x,k)$ to $s'(x,k)$ to resolve the aforementioned issue. Indeed, since $s'(x,k)\geq s(x,k)$, we can just add $s'(x,k)-s(x,k)$ dummy variables to the polynomial. Also, since $s'(x,k)\leq(s(x,k)+k)^2$, it suffices to replace the bound $s(x,k)\leq T(k) |x|^\Delta$ by the bound $s'(x,k)\leq (s(x,k)+k)^2 \leq T'(k) |x|^{\Delta'}$ for $T'(x)=T(k)^2\cdot k^2$ and $\Delta'=\Delta$. Since we're proving polynomial formulations for \emph{all} $T=c^k, c>1$, the change in $T'$ doesn't affect the statement of the lemma.

In~all polynomial formulations below we~follow the same five-step
pattern as~in \cref{lemma:polynomialformulation} with the following three differences.
\begin{description}
    \item[Kernel.] For most of the~considered problems, we~start by~applying a~kernel. Recall that a~kernel replaces, in~polynomial time,
    an~instance $(x,k)$ of a~parameterized problem~$A$
    with an~equivalent instance $(x',k')$ of~$A$ such that 
    $|x'|, k' \le g(k)$, for some computable function~$g$.
    To~simplify the presentation of~polynomial formulations,
    we~identify $(x,k)$ with $(x',k')$. This allows~us
    to~assume from the beginning that $n=|x| \le g(k)$.
    \item[Variables.] The value of $s(x,k)$ will~be
    a~constant degree of~a~product~of $2^{O(k/\theta)}$
    and $\binom{O(k)}{O(k/\theta)}$ as~well~as $n^{O(1)}$.
    By~choosing a~large enough~$\theta=\theta(c)$, we~ensure that
    $s(x,k) \le c^kn^{O(1)}$.
    \item[Polynomial.] We~ensure that the polynomial $P_{s(x,k)}$
    is a~sum of~at~most $n^{O(1)}2^{O(k)}$ monomials.
\end{description}

    \paragraph{\problem{$k$-Vertex Cover}.} Given a~graph~$G(V,E)$, check whether $G$~contains a~vertex cover of~size at~most~$k$.
    
    \begin{description}
        \item[Kernel.] As~there exists a~kernel of~size~$2k$
        for \problem{$k$-Vertex Cover}~\cite{chen2001vertex}, we~assume that $|V|=n \le 2k$. 
        \item[Idea.] 
Partition~$V$ into $\theta$ blocks of~size $n/\theta$: $V=V_1 \sqcup \dotsb \sqcup V_{\theta}$. This induces a~partition
        of any~$S \subseteq V$ into~$\theta$ blocks of~size at~most~$n/\theta$: $S=S_1 \sqcup \dotsb \sqcup S_\theta$ where $S_i = S \cap V_i$. Then, $S$~is a~vertex cover of~$G$
        iff, for all $1 \le i<j \le \theta$, $S_i \sqcup S_j$ is a~vertex cover of~$G[V_i \sqcup V_j]$.
        \item[Variables.] Introduce $s(G,k)=O(n^2\binom{n}{\le n/\theta}^2)=O^*(\binom{2k}{2k/\theta})$ variables:
        \[X=\{x_{i,j,A,B} \colon i,j \in [\theta], A \subseteq V_i, B \subseteq V_j\} \, .\]
        The mapping $\phi(G,k)$ assigns the following value to a~variable $x_{i,j,A,B}$:
        \[[\text{$A \sqcup B$ is a~vertex cover of~$G[V_i \sqcup V_j]$}] \, .\]
        \item[Complexity.] The mapping $\phi(G,k)$ can be~computed
        in time $O^*(\binom{2k}{2k/\theta})$.
        \item[Polynomial.] For every $S \subseteq V$ of~size at~most~$k$, add the following monomial to~$P_{s(G,k)}$:
        \[\prod_{1 \le i < j \le \theta}x_{i,j,S_i,S_j} \, .\]
        The number of~monomials added to~$P_{s(G,k)}$ is $\binom{n}{\le k}=O^*(\binom{2k}{k})$.
        \item[Degree.] The degree of~$\mathcal P$ is at most~$\theta^2$.
    \end{description}
    
    \paragraph{\problem{$k$-Steiner Tree}.} Given a~graph~$G(V,E)$ with (integer non-negative) edge weights and a~subset $S \subseteq V$ of~its nodes of~size~$k$ (called terminals), and an~integer $0 \le t \le |V|^{O(1)}$, check whether there is a~tree in~$G$ of~weight at~most~$t$ containing all nodes from~$S$. 
    \begin{description}
\item[Idea.] 
Assume that $S = \{1, \dots, k\}$: relabel nodes if needed. 
Consider a~Steiner tree~$T$ that we are looking for. Using \cref{st:tree_decomposition} for~$T$ and~$M = S$ one can find subtrees $T_1,\dotsc, T_m$ of~$T$ for some $m \leq \theta$ such that each $T_i$ contains at most $\frac{2k}{\theta-1}+2 \le 3k/\theta$ terminals. 
        Let $V_1, \dots, V_m$ be the corresponding sets of~nodes, that is, $V_i = V(T_i)$, let $S_i = S \cap V_i$ for every $1 \le i \le m$, let $\ell_i$ be the weight of $T_i$, and let $B = B(V_1, \dots, V_m)$. By \cref{re:b_tree}, $B$ is a~tree and it contains $m$ subset nodes and at most $m - 1$ connector nodes. Observe that $B$~can also be obtained as $B(A_1, \dots, A_m)$, where $A_i = V_i \cap \bigcup\limits_{j \neq i} V_j$. It~is significant since each $|A_i|$ is bounded by $\theta - 1$ in contrast to $|V_i|$ which is bounded by~$n$. Note also that $\sum \ell_i \le L$ and for every~$i$, a~subtree $T_i$ has weight $\ell_i$ and connects nodes from $S_i \cup A_i$.
        
        To construct the polynomial, we go over all possible $\ell_i$, $\{S_i\}$ and $\{A_i\}$ such that $\sum \ell_i \le L$, $\bigcup S_i = S$ and $B(A_1, \dots, A_m)$ is connected, and check whether for every $i$ we can connect nodes from $S_i \cup A_i$ with a tree of weight at most $\ell_i$. Observe that those trees can intersect and their union can give us a~connected subgraph that is not a~tree, but we still can obtain a~proper Steiner tree just by~taking a~spanning tree of that subgraph.
        
        \item[Variables.] Introduce $s(G,k)=\mathcal{O}(\binom{k}{\le 3k/\theta + \theta} \cdot n^{\theta} \cdot n^{O(1)}) +  n^{O(1)} = O^*(\binom{k}{4k/\theta})$ variables:
        \begin{align*}
          X = \{& x_{S', A, \ell} \colon  S' \subseteq S, |S'| \le 3k/\theta, A \subseteq V, |A| \le \theta - 1, \ell \le n^{O(1)}\}; \\
          Y = \{& y_{L'} \colon L' \le n^{O(1)} \} \,.
        \end{align*}
        
        The mapping $\phi(G,k)$ assigns the following values to variables:
        \begin{align*}
            x_{S', A, \ell} \mapsto [& \text{there exists a Steiner tree in $G$ for the set of terminals } S' \cup A\\ 
            & \text{of the weight at most $\ell$];} \\ 
            y_{L'} \mapsto [& L' \le L] \, .
        \end{align*}
        
        \item[Complexity.] The mapping $\phi(G,k)$ can be~computed in time $O^*(\binom{k}{4k/\theta} \cdot 2^{4k/\theta})$, since \problem{$k$-Steiner tree} can be solved in~time $O^*((2 + \delta)^k)$ for any $\delta > 0$ \cite{fuchs2007dynamic}.
        
        \item[Polynomial.] 
        The polynomial looks as follows:
        \[P_{s(G, k)}(X, Y) = \sum\limits_{\substack{
            m \le \theta, \\ 
            \ell_1, \dots, \ell_m, \\
            \sum \ell_i \le n^{O(1)}
            }} \sum\limits_{\substack{
            S_1 \cup \dots \cup S_m = S, \\
            |S_i| \le 3k/\theta
            }} \sum\limits_{\substack{
            A_1, \dots, A_m \subseteq V, \\
            |\bigcup A_i| \le \theta - 1, \\
            B(A_1, \dots, A_m) \\
            \text{is connected}
            }} y_{\sum \ell_i}
            \prod\limits_{i = 1}^{m} x_{S_i, A_i, \ell_i} 
        \, .\]
        The number of~monomials in~$P_{s(G,k)}$ is at most $\theta  \cdot n^{O(1)} \cdot \binom{k}{\le 3k/\theta}^{\theta} \cdot \binom{n}{\theta}^{\theta} = 2^{O(k)} n^{O(1)}$.
        \item[Degree.] The degree of~$\mathcal P$ is $\theta + 1$.
    \end{description}
    
    \paragraph{\problem{$k$-Internal Spanning Tree}.} Given a~graph~$G$,
    check whether there is a~spanning tree of~$G$ with at~least $k$~internal nodes.
    \begin{description}
        \item[Kernel.] As~there exists a~kernel for \problem{$k$-Internal Spanning Tree} of~size~$3k$ \cite{fomin2013linear} we~assume that $|V|=n \le 3k$.  
        
        \item[Idea.] 
Let $T$~be a~tree we are looking for. Using \cref{st:tree_decomposition} for~$T$ and $M = V(T)$, we obtain $m \le \theta$ subtrees~$T_i$ of~size at~most 
        $\frac{2n}{\theta - 1} + 2 \le 3n / \theta$. 
        For each $i \in [m]$, let $S_i=V(T_i)$
        and $k_i$ be~the number of~internal nodes of~$T$ that 
        belong to~$S_i$.   
        By~\cref{re:b_tree}, $B = B(S_1, \dots, S_m)$ is a~tree. 
        Let $S$~be the set of~set nodes and $C$~be the set of~connector nodes of~$B$. 
        When summing up all~$k_i$, we~count each internal node~$v$ of~$T$ that belongs to~$C$ exactly $\deg_B(v)$ times.
        Then, we~count $\sum\limits_{v \in C} \deg_B(v) - |C| = (|S| + |C| - 1) - |C| = m - 1$ extra nodes, so $\sum k_i \ge k + (m - 1)$.
        
        To construct the polynomial, we go~over all possible $S_1, \dots, S_m \subseteq V$ such that $\bigcup S_i = V$ and $B(S_1, \dots, S_m)$ is a~tree and over all possible $k_1, \dots, k_m$ such that $\sum k_i \ge k + (m - 1)$.
        For fixed $\{S_i\}$ and $\{k_i\}$, it suffices to check that for every~$i$ there exists a~spanning tree of $G[S_i]$ with at least $k_i$ internal nodes, where we also should consider $S_i \cap \bigcup\limits_{j \neq i} S_j$ as internal nodes (to do that, we add a~leaf to each node from $S_i \cap \bigcup\limits_{j \neq i} S_j$).

        \item[Variables.] Introduce $s(G,k)=\binom{n}{\le 3n/\theta} \cdot \binom{n}{\le \theta} \cdot 3n/\theta = O^*(\binom{3k}{9k/\theta})$ variables:
        \[X=\{x_{S, A, k'} \colon A \subseteq S \subseteq V, |S| \le 3n/\theta, |A| \le \theta - 1, k' \le |S|\} \, .\]
        
        Let $G_{S, A}$ be a~graph obtained from $G[S]$ by adding 
        a~leaf to~each node $v \in A$.
        
        The mapping $\phi(G,k)$ assigns the following value 
        to~a~variable $x_{S, A, k'}$:
        \[[\text{there exists a~spanning tree of $G_{S, A}$ that contains at least $k'$ internal nodes}] \, .\]
        
        \item[Complexity.] The mapping $\phi(G,k)$ can be~computed
        in time $O^*(\binom{3k}{9k/\theta} \cdot 8^{3n/\theta}) = O^*(\binom{3k}{9k/\theta} \cdot 8^{9k/\theta})$, since \problem{$k$-Internal Spanning Tree} can be solved in time $O^*(8^k)$ \cite{fomin2013linear}.
        
        \item[Polynomial.] The polynomial looks as follows:
        \[P_{s(G, k)}(X) = \sum\limits_{m \in [\theta]} 
            \sum\limits_{\substack{
            S_1, \dots, S_m \subseteq V, \\
            1 < |S_i| \le 3n/\theta, \\ 
            \bigcup S_i = V, \\
            B(S_1, \dots, S_m) \text{ is a tree}
            }} \sum\limits_{\substack{
            k_1, \dots, k_m, \\
            k_i \le |S_i|, \\
            \sum k_i \ge k + (m - 1) \\
            }} \prod\limits_{i = 1}^m x_{S_i, \bigcup\limits_{j \neq i} (S_i \cap S_j), k_i} 
        \]

        The number of~monomials in~$P_{s(G,k)}$ is at most $\theta \cdot \binom{n}{3n/\theta}^{\theta} \cdot (3n/\theta)^{\theta} = \binom{3k}{9k/\theta}^{\theta} k^{O(1)} = 2^{O(k)}$.
        
        \item[Degree.] The degree of~$\mathcal P$ is~$\theta$.
    \end{description}

    \paragraph{\problem{$k$-Leaf Spanning Tree}.} Given a~graph~$G$,
    check whether there is a~spanning tree of~$G$ with at~least $k$~leaves.
    \begin{description}
        \item[Kernel.] As~there exists a~kernel for \problem{$k$-Leaf Spanning Tree} of~size~$5.75k$ \cite{fellowscoordinatized}, we~assume that $|V|=n \le 5.75k \le 6k$.  
        
        \item[Idea.] 
Similarly to~the polynomial formulation of~\problem{$k$-Internal Spanning Tree},
        we~go over all possible $S_1, \dots, S_m \subseteq V$ such that $\bigcup S_i = V$ and $B(S_1, \dots, S_m)$ is a~tree, and go~over all possible $k_1, \dots, k_m$ such that $\sum k_i \ge k$. For fixed $\{S_i\}$ and $\{k_i\}$, 
        it~suffices to~check that for every~$i$ there exists 
        a~spanning tree of $G[S_i]$ with at least $k_i$ leaves, where we should consider $S_i \cap \bigcup\limits_{j \neq i} S_j$ as internal nodes (to do that, we add a~leaf to each node from $S_i \cap \bigcup\limits_{j \neq i} S_j$).
        
        \item[Variables.] Introduce $s(G, k) = O(\binom{n}{\le 3n / \theta} \cdot n^{\theta} \cdot 3n/\theta) = O^*(\binom{6k}{18k/\theta})$ variables:
        \[X=\{x_{S, A, k'} \colon A \subseteq S \subseteq V, |S| \le 3n/\theta, |A| \le \theta - 1, k' \le |S|\} \, .\]

        The mapping $\phi(G,k)$ assigns the following value 
        to~a~variable $x_{S, A, k'}$:
        \[[\text{there exists a spanning tree of $G_{S, A}$ that contains at least $k' + |A|$ leaves}] \, .\]
        We add~$|A|$ as we~obtain a~graph with $|A|$~dummy leaves
        that appear in~any spanning tree.
        
        \item[Complexity.] The mapping $\phi(G,k)$ can be~computed
        in time $O^*(\binom{6k}{18k/\theta} \cdot 4^{3n / \theta + \theta}) = O^*(\binom{6k}{18 / \theta} \cdot 4^{20k/\theta})$, since \problem{$k$-Leaf Spanning Tree} can be solved in time $O^*(4^{k})$ \cite{kneis2008new}.
        
        \item[Polynomial.] The polynomial looks as follows:
        \[P_{s(G, k)}(X) = \sum\limits_{m \le \theta}
            \sum\limits_{\substack{
            S_1, \dots, S_m \subseteq V, \\
            1 < |S_i| \le 3n/\theta, \\ 
            \bigcup S_i = V, \\
            B(S_1, \dots, S_m) \text{ is a tree}
            }} \sum\limits_{\substack{
            k_1, \dots, k_m, \\
            k_i \le |S_i|, \\
            \sum k_i \ge k, \\
            }} \prod\limits_{i = 1}^m x_{S_i, \bigcup\limits_{j \neq i} (S_i \cap S_j), k_i} 
        \]
        
        The number of~monomials added to~$P_{s(G,k)}$ is $O(\theta \cdot \binom{n}{\le 3n/\theta}^{\theta} \cdot (3n/\theta)^{\theta}) = \binom{6k}{18k/\theta}^{\theta}k^{O(1)} = 2^{O(k)}$.
        
        \item[Degree.] The degree of~$\mathcal P$ is $\theta$.
    \end{description}

    \paragraph{\problem{$k$-Nonblocker}.} Given a~graph~$G(V, E)$, check whether $G$~contains a~subset of nodes of~size at~least~$k$ whose complement is a~dominating set in~$G$.
    \begin{description}
        \item[Kernel.] As~there exists a~kernel of~size~$\frac{5k}{3}$
        for \problem{$k$-Nonblocker}~\cite{dehne2006nonblocker}, we~assume that $|V|=n \le \frac{5k}{3}$.  
        \item[Idea.] 
Let $N \subseteq V$ be a~nonblocker for $G$ with at~least 
        $k$~nodes, and let $D = V \setminus N$ be a~dominating set for $G$. Let $D^1 \sqcup \dots \sqcup D^{\theta}$ be 
        a~partition of~$D$ into $\theta$ 
        blocks of~size at~most~$n/\theta$.
        Let $N^1 \sqcup \dots \sqcup N^{\theta}$ be a~partition of $N$ such that each node from $N^i$ is adjacent to some node from $D^i$, and, for each $i \in [\theta]$, let $N^i_1 \sqcup \dots \sqcup N^i_{\theta}$ be a~partition of $N^i$ into $\theta$ blocks of~size at~most $n/\theta$. 
        This way, we get
        a~partition $\bigsqcup N^i_j \sqcup \bigsqcup D^i$ of~$V$ into $\theta^2 + \theta$ blocks of~size at~most $n/\theta \le \frac{5k}{3\theta}$.
        
        To construct the polynomial~$P_{s(G,k)}$, we go~through all such partitions, and for each partition we~check that for every~$i$ each node from $\bigsqcup\limits_j N^i_j$ is adjacent to~a~node from $D^i$.
        
        \item[Variables.] Introduce $s(G,k)=O(\binom{n}{\le n / \theta}^2) = O^*(\binom{5k/3}{5k/3\theta}^2)$ variables:
        \[X=\{x_{N, D} \colon N, D \subseteq V, |N|, |D| \le n/\theta\}\, .\]
        The mapping $\phi(G,k)$ assigns the following value 
        to~a~variable $x_{N, D}$:
        \[[ \forall v \in N \; \exists u \in D: \{v, u\} \in E]=[\text{every node in~$N$ is~dominated by a~node in~$D$}] \, .\]
        \item[Complexity.] The mapping $\phi(G,k)$ can be~computed in time $O^*(\binom{5k/3}{5k/3\theta}^2)$.
        \item[Polynomial.] For every $N \subseteq V$ of~size 
        at~least~$k$ and a~partition $\bigsqcup N^i_j \sqcup \bigsqcup D^i$ of~$V$ into $\theta^2 + \theta$ blocks 
        of~size at~most $n/\theta$ such that $\bigsqcup\limits_{i, j} N^i_j = N$, we add the following monomial to~$P_{s(G,k)}$:
        \[\prod\limits_{i, j \le \theta} x_{N^i_j, D^i} \, .\]
        The number of~monomials added to~$P_{s(G,k)}$ is $O((\theta^2 + \theta)^n) = 2^{O(k)}$.
        \item[Degree.] The degree of~$\mathcal P$ is $\theta^2$.
    \end{description}
    
    \paragraph{\problem{$k$-Path Contractibility}.} Given a~graph~$G$, check whether it~is
    possible to~contract at~most $k$~edges in~$G$ to~turn it~into a~path.
    \begin{description}
        \item[Kernel.] As~there exists a~kernel of~size~$5k + 3$
        for \problem{$k$-Path Contractibility}~\cite{heggernes2014contracting}, we~assume that $|V|=n \le 5k + 3 \le 6k$.
        
        \item[Idea.] Consider the path resulting from edge contraction.
        Each node of the path corresponds to a connected set of $G$ that contracts to that node.
        Let $\mathcal{S}$ be a family of those sets of size at least $n/\theta$, and let $\mathcal{T}$ be a family of those sets of size less than $n/\theta$. We apply the following procedure to $\mathcal{T}$.
        While $\mathcal{T}$ contains two sets $T_1$ and $T_2$ of size less than $n/(2\theta)$ and $T_1 \sqcup T_2$ is a connected set in $G$, we replace $T_1$ and $T_2$ in $\mathcal{T}$ by $T_1 \sqcup T_2$.

        Let $s := |\mathcal{S}|$ and $t := |\mathcal{T}|$. 
        Let $\mathcal{V} = \mathcal{S} \sqcup \mathcal{T}$, and let $\mathcal{V} = \{V_1, \dots, V_m\}$ where $V_i$ are numbered according to their order in the path. 
        This way, we get a partition $V_1 \sqcup \dots \sqcup V_m$ of $V$ into $m = s + t$ blocks.
        In order to obtain an upper bound for $m$, we consider a partition $\mathcal{T} = \mathcal{T}_{\ell} \sqcup \mathcal{T}_s$, where $\mathcal{T}_{\ell}$  contains sets of size at least $n / (2\theta)$ and $\mathcal{T}_s$ contains sets of size less than $n / (2\theta)$. Observe that $s \le n / (n / \theta) = \theta$, $|\mathcal{T}_{\ell}| \le n / (n / (2\theta)) = 2\theta$, and that after applying the above procedure to $\mathcal{T}$, for every $V_i \in \mathcal{T}_{s}$, $V_{i - 1}$ and $V_{i + 1}$ (if they exist) belong to $\mathcal{S} \sqcup \mathcal{T}_{\ell}$, so $|\mathcal{T}_{s}| \le |\mathcal{S}| + |\mathcal{T}_{\ell}| + 1 \le \theta + 2\theta + 1 \le 4\theta$. Hence, $t = |\mathcal{T}_{\ell}| + |\mathcal{T}_{s}| \le 6\theta$, and $m = s + t \le 7\theta$.
        
        Let $\mathcal{S} = \{S_1, \dots, S_s\}$ and $\mathcal{T} = \{T_1, \dots, T_t\}$. Sets from $\mathcal{S}$ can be too large, so we cover each of them with subsets of size at most $3n /\theta$. For every $i \in [s]$ we apply \cref{st:tree_decomposition} to a spanning tree of $G[S_i]$ and obtain its subtrees on node sets $S^1_i, \dots, S^{\ell}_i$, where $\ell \le \theta$, $\forall j \; |S^j_i| \le 3n/\theta$, $\bigcup\limits_{j}S^j_i = S_i$, and $B(S^1_i, \dots, S^{\ell}_i)$ is connected.
        
        Let $k_1, \dots, k_m$ be the numbers of contracted edges in $V_1, \dots, V_m$, respectively. 
        Let us partition $\{k_i\}$ into $\{k^s_i\}$ and $\{k^t_i\}$, where $k^s_i$ and $k^t_j$ are numbers of contracted edges in $S_i$ and $T_j$, respectively. Observe that $\forall i \in [s], \; k^s_i = |S_i| - 1$. That means that $\sum k_i \le k \Leftrightarrow \sum k^t_i + \sum (|S_i| - 1) \le k$. 
        
        Let $A_1, \dots, A_m$ and $B_1, \dots, B_m$ be such sets of nodes that $\forall i \in [m - 1] \; B_i$ consists of nodes from $V_i$ that are adjacent to a node from $V_{i + 1}$, and $A_{i + 1}$ consists of nodes from $V_{i + 1}$ that are adjacent to a node from $V_{i}$. Let us partition $\{A_i\}$ into $\{A^s_i\}$ and $\{A^t_i\}$ such that $\forall i \in [s] \; A^s_i \subseteq S_i$ and $\forall i \in [t] \; A^t_i \subseteq T_i$. Similarly, we partition $\{B_i\}$ into $\{B^s_i\}$ and $\{B^t_i\}$.
        
        Consider $a_1, \dots, a_m$ and $b_1, \dots, b_m$ such that $\forall i \in [m] \; a_i \in A_i$, $b_i \in B_i$ and $\forall i \in [m - 1]$ there is an edge between $b_i$ and $a_{i + 1}$.
\begin{center}
\begin{tikzpicture}[scale=0.7]
 \coordinate (a1) at (0,0) {};
 \coordinate (a2) at (0,2) {};
 \coordinate (a3) at (4,2) {};
 \coordinate (a4) at (4,0) {};
 \coordinate (a5) at (1,0) {};
 \coordinate (a6) at (1,2) {};
 \coordinate (a7) at (3,0) {};
 \coordinate (a8) at (3,2) {};
 \node (a9) at (0.5,1) {$A_1$};
 \node (a10) at (3.5,1) {$B_1$};
 \node[circle,draw,minimum size = 0.5mm,inner sep=0mm] (a12) at (3.7,1.5) {$b_1$};
 \node (a13) at (2,2.5) {$T_1$};
 \node (a14) at (2,-0.5) {$V_1$};
 
 \draw (a1) -- (a2) -- (a3) -- (a4) -- (a1);
 \draw (a5) -- (a6);
 \draw (a7) -- (a8);

 \coordinate (b1) at (0+5,0) {};
 \coordinate (b2) at (0+5,2) {};
 \coordinate (b3) at (4+5,2) {};
 \coordinate (b4) at (4+5,0) {};
 \coordinate (b5) at (1+5,0) {};
 \coordinate (b6) at (1+5,2) {};
 \coordinate (b7) at (3+5,0) {};
 \coordinate (b8) at (3+5,2) {};
 \node (b9) at (0.5+5,1) {$A_2$};
 \node (b10) at (3.5+5,1) {$B_2$};
 \node[circle,draw,minimum size = 0.5mm,inner sep=0mm] (b11) at (0.4+5,0.3) {$a_2$}; 
 \node[circle,draw,minimum size = 0.5mm,inner sep=0mm] (b12) at (3.6+5,0.5) {$b_2$}; 
 \node (b13) at (2+5,2.5) {$T_2$};
 \node (b14) at (2+5,-0.5) {$V_2$};

 \draw (b1) -- (b2) -- (b3) -- (b4) -- (b1);
 \draw (b5) -- (b6);
 \draw (b7) -- (b8);
 \draw (a12) -- (b11);
 
 \coordinate (c1) at (0+10,0) {};
 \coordinate (c2) at (0+10,2) {};
 \coordinate (c3) at (2+10,2) {};
 \coordinate (c4) at (2+10,0) {};
  \node[circle,draw,minimum size = 0.5mm,inner sep=0mm] (c5) at (0.4+10,0.3) {$a_3$}; 
 \node[circle,draw,minimum size = 0.5mm,inner sep=0mm] (c6) at (1.6+10,0.3) {$b_3$}; 
 \coordinate (c7) at (0.8+10,0) {};
 \coordinate (c8) at (1.5+10,0) {};
 \coordinate (c9) at (0.9+10,2) {};
 \coordinate (c10) at (1.5+10,2) {};
 \node (c11) at (0.3+10,1) {$A_3$};
 \node (c12) at (1.7+10,1) {$B_3$};
 \node (c13) at (1+10,2.5) {$T_3$};
 \node (c14) at (1+10,-0.5) {$V_3$}; 

 \draw (c1) -- (c2) -- (c3) -- (c4) -- (c1);
 \draw (b12) -- (c5);
 \draw (c7) to[out=0,in=0] (c9);
 \draw (c8) to[out=190,in=170] (c10);

  \draw (2+13,1) ellipse (2.1 and 2.9);
 \coordinate (d1) at (0+13,0) {};
 \coordinate (d2) at (0+13,2) {};
 \coordinate (d3) at (2.5+13,2) {};
 \coordinate (d4) at (2.5+13,0) {};
 \coordinate (d5) at (1.9+13,-1) {};
 \coordinate (d6) at (1.9+13,3) {};
 \coordinate (d7) at (3.5+13,3) {};
 \coordinate (d8) at (3.5+13,-1) {};

 \coordinate (d9) at (-0.13+13,1) {};
 \coordinate (d10) at (3.15+13,3.4) {};
 \coordinate (d11) at (1+13,-1.52) {};
 \coordinate (d12) at (2.2+13,1.72) {};
 
 \node (d13) at (1+13,1) {$A_4$};
 \node (d14) at (3+13,1) {$B_4$};
 \node (d15) at (2+13,4.5) {$S_1$};
 \node (d16) at (1.1+13,2.8) {$S^1_1$};
 \node (d17) at (1.1+13,-0.4) {$S^2_1$};
 \node (d18) at (3+13,-0.4) {$S^3_1$};
 \node (d19) at (2+13,-2.3) {$V_4$};

 \node (d20) [circle,draw,minimum size = 0.5mm,inner sep=0mm] at (0.5+13,1.5) {$a_4$};
 \node (d21) [circle,draw,minimum size = 0.5mm,inner sep=0mm] at (3.5+13,1.5) {$b_4$};

\draw (c6) -- (d20);
 \draw (d2) -- (d3) -- (d4)-- (d1);
 \draw (d7) -- (d6) -- (d5) -- (d8);
 \draw (d9) to [out = 330, in = 250] (d10);
 \draw (d11) to [out = 20,in = 280] (d12);

 \draw (2+18,1) ellipse (2.1 and 2.9);
 \coordinate (e1) at (0+18,0) {};
 \coordinate (e2) at (0+18,2) {};
 \coordinate (e3) at (1.9+18,2) {};
 \coordinate (e4) at (1.9+18,0) {};
 \coordinate (e5) at (2.5+18,-1) {};
 \coordinate (e6) at (2.5+18,3) {};
 \coordinate (e7) at (3.5+18,3) {};
 \coordinate (e8) at (3.5+18,-1) {};

\coordinate  (e9) at (2.27+18,0) {};
 \coordinate (e10) at (3.15+18,-1.42) {};
 \coordinate (e11) at (-0.13+18,1) {};
 \coordinate (e12) at (3.15+18,3.4) {};
 \coordinate (e13) at (1+18,-1.52) {};
 \coordinate (e14) at (2.2+18,1.72) {};

 \node (e15) at (1+18,1) {$A_4$};
 \node (e16) at (3+18,1) {$B_4$};
 \node (e17) at (2+18,4.5) {$S_2$};
 \node (e18) at (1.1+18,2.8) {$S^1_2$};
 \node (e19) at (1.1+18,-0.4) {$S^2_2$};
 \node (e20) at (3+18,-0.4) {$S^3_2$};
 \node (e20) at (2.3+18,-1.4) {$S^4_2$};
 \node (e22) at (2+18,-2.3) {$V_5$};

 \node (e23) [circle,draw,minimum size = 0.5mm,inner sep=0mm] at (0.5+18,1.5) {$a_5$};

\draw (d21) -- (e23);
 \draw (e2) -- (e3) -- (e4)-- (e1);
 \draw (e7) -- (e6) -- (e5) -- (e8);
  \draw (e9) to [out =270, in = 170] (e10); 
 \draw (e11) to [out = 330, in = 250] (e12);
 \draw (e13) to [out = 0,in = 280] (e14);

\end{tikzpicture}
\end{center}
        
        To construct the polynomial, we go through all possible $\{V_i\}$ and $\{k_i\}$, and check that every $V_i$ is a connected set, for every $i < j$ there is an edge between $V_i$ and $V_j$ iff $i + 1 = j$, and for every $V_i$ of size less than $n / \theta$, the graph $G[V_i]$ can be contracted to a path $p_1, \dots, p_h$ where only nodes corresponding to $p_1$ can be adjacent to $V_{i - 1}$ and only nodes corresponding to $p_h$ can be adjacent to $V_{i + 1}$. 
        To do that we also go through all possible $\{A_i\}$ and $\{B_i\}$ where $A_i, B_i \subseteq V_i$, $\{a_i\}$ and $\{b_i\}$ where $a_i \in A_i$, $b_i \in B_i$, and $\{S^j_i\}$, and consider $\{T_i\}$, $\{A^t_i\}$ and $\{B^t_i\}$ consistent with $\{V_i\}$, $\{A_i\}$ and $\{B_i\}$.  
        
        \item[Variables.] Introduce $s(G, k) = O(\binom{n}{\le 3n/\theta} + \binom{n}{\le n/\theta}^3 k + \binom{n}{\le 3n/\theta}^{4} + n^2) = O^*(\binom{6k}{18k/\theta}^4)$ variables:
        \[X = \{x_{S} \colon S \subseteq V, |S| \le 3n/\theta\} \, ;\]
        \[Y = \{y_{T, t}^{A, B} \colon A, B \subseteq T \subseteq V, |T| \le n/\theta, t \le k\} \, ;\]
        \[Z = \{z_{V_1, V_2}^{B, A} \colon V_1, V_2 \subseteq V, B \subseteq V_1, A \subseteq V_2, |V_1|, |V_2| \le 3n/\theta\} \, ;\]
        \[W = \{w_{b, a} \colon b, a \in [n] \} \, .\]
        
        The mapping $\phi(G,k)$ assigns the following values to the variables:
        \begin{align*}
          x_{S} \mapsto [& G[S] \text{ is connected}] \, ; \\
          y_{T, t}^{A, B} \mapsto [& \text{it is possible to contract at most } t \text{ edges of } G[T] \\ 
          & \text{to obtain a path such that } A \text{ and } B \text{ contract to} \\
          & \text{the first and the last nodes in the path, respectively}] \, ; \\
          z_{V_1, V_2}^{B, A} \mapsto [& \text{the set of endpoints of edges between $V_1$ and $V_2$ is a subset of $B \cup A$}] \, ; \\
          w_{b, a} \mapsto [& \text{$b$ and $a$ are adjacent in $G$}] \, . 
        \end{align*}

        \item[Complexity.] Each variable $x_{S}$, $z_{V_1, V_2}^{B, A}$ and $w_{b, a}$ can be computed in polynomial time.
        \begin{lemma}
            $y_{T, t}^{A, B}$ can be computed in $O^*(2^{|T|})$ time.
        \end{lemma}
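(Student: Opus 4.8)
The plan is to reduce the quantity $y_{T, t}^{A, B}$ to a purely structural question about bipartitions of $G[T]$ and then answer that question by brute force over all $2^{|T|}$ bipartitions. Write $H := G[T]$ and $N := |T|$. Recall that contracting some edges of $H$ to obtain a path $P_\ell$ is exactly the same as exhibiting an \emph{$H$-witness structure for $P_\ell$}: a partition $V(H) = W_1 \sqcup \dots \sqcup W_\ell$ into nonempty sets, each inducing a connected subgraph of $H$, such that $H$ has an edge between $W_i$ and $W_j$ (for $i \neq j$) if and only if $|i - j| = 1$. Any such structure uses exactly $\sum_i (|W_i| - 1) = N - \ell$ edge contractions, and conversely exactly $N - \ell$ contractions are needed to pass from $N$ vertices to $\ell$ vertices; hence $y_{T, t}^{A, B} = 1$ if and only if $H$ admits a witness structure for some $P_\ell$ with $\ell \ge N - t$ in which $A \subseteq W_1$ and $B \subseteq W_\ell$.

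The key structural step is the following characterization. For a bipartition $V(H) = R \sqcup B$, let $\mathcal H(R, B)$ be the bipartite graph whose left vertices are the connected components of $H[R]$, whose right vertices are the connected components of $H[B]$, and which has an edge between two such components precisely when $H$ has an edge between them. I will show that $H$ admits a witness structure for $P_\ell$ with $A$ in the first bag and $B$ in the last bag if and only if there is a bipartition $R \sqcup B = V(H)$ such that $\mathcal H(R, B)$ is isomorphic to the path graph $P_\ell$, one of whose two endpoint-components contains $A$ and whose other endpoint-component contains $B$ (in the degenerate case $\ell = 1$ this means $H$ is connected and $A \cup B \subseteq V(H)$). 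For the forward direction, given a witness structure $W_1, \dots, W_\ell$, put the odd-indexed bags into $R$ and the even-indexed bags into $B$; since nonconsecutive bags are nonadjacent and each bag is connected, the components of $H[R]$ are exactly the odd bags and the components of $H[B]$ are exactly the even bags, so $\mathcal H(R, B)$ is precisely the path $W_1 - W_2 - \dots - W_\ell$. For the reverse direction, if $\mathcal H(R, B) \cong P_\ell$ with its components appearing in path order $X_1 - \dots - X_\ell$, then consecutive $X_i$ lie on opposite sides, each $X_i$ is connected in $H$, and for $i \neq j$ we have $X_i$ adjacent to $X_j$ in $H$ iff $|i - j| = 1$ (same-side components are never $H$-adjacent by definition of component); so $W_i := X_i$ is a valid witness structure whose two ends are the endpoint-components of $\mathcal H(R,B)$.

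Given this characterization, the algorithm is immediate: enumerate all $2^N$ bipartitions $V(H) = R \sqcup B$; for each, in polynomial time compute the components of $H[R]$ and $H[B]$, build $\mathcal H(R, B)$, test whether it is a path, let $\ell$ be its number of vertices, check $N - \ell \le t$, and check whether one endpoint-component contains $A$ and the other contains $B$ (handling $\ell = 1$ separately according to the intended convention on whether the first and last nodes may coincide). Output $y_{T, t}^{A, B} = 1$ iff some bipartition passes all checks. This runs in $O^*(2^N) = O^*(2^{|T|})$ time; alternatively, the same bound follows by invoking a known $O^*(2^n)$-time algorithm for \problem{Path Contraction} and forcing the two prescribed endpoints with a constant number of auxiliary gadget vertices. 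I expect the only step needing real care to be the structural characterization — specifically, verifying that $\mathcal H(R,B)$ cannot acquire ``extra'' edges between nonconsecutive components, and getting the small cases $\ell \in \{1, 2\}$ and the empty-side bipartitions right — which is precisely what the two directions above check.
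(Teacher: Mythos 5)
Your proposal is correct and takes essentially the same approach as the paper: the paper also enumerates all $2^{|T|}$ two-colorings of $T$ (your bipartition $R \sqcup B$ is its coloring $\tau$), recovers the candidate witness bags as the monochromatic connected components, and checks that the resulting quotient is a path with the right endpoint bags and at most $t$ contractions. Your write-up just spells out the two directions of the recoverability claim that the paper leaves implicit.
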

        \begin{proof}
            Let $p_1, \dots, p_h$ be a path obtained from $G[T]$ by contracting edges.      
            Let $T^1 \sqcup \dots \sqcup T^h$ be a partition of $T$ into sets where $T^i$ contracts to $p_i$.
            Let $\tau: T \to \{0, 1\}$ be a 2-coloring function that colors $v \in T^{i}$ to $[i \mod 2]$.
            We observe that given $T$ and $\tau$, we can obtain the partition $T^1 \sqcup \dots \sqcup T^h$ by finding connected components of each color.
            
            To compute $y_{T, t}^{A, B}$, we go through all $2^{|T|}$ 2-colorings and check that if we contract the corresponding $\{T^i\}$ we obtain a path, $\sum (|T^i| - 1) \le t$, $A \subseteq T^1$ and $B \subseteq T^h$.
        \end{proof}
        
        The mapping $\phi(G,k)$ can be~computed
        in time $O^*(\binom{6k}{18k/\theta}^4 \cdot 2^{6k/\theta})$.
        
        \item[Polynomial.] Before we present the polynomial, we introduce the following monomials.
        
        \begin{itemize}
            \item $C_{\{S^j_i\}}(X)$ checks that all sets $S^j_i$ are connected.
            Using it for proper $\{S^j_i\}$, we check that every $S_i$ is connected.
            \[
            C_{\{S^j_i\}}(X) =
            \prod\limits_{i, j}  x_{S^j_i}
            \]
            
            \item $P_{\{T_i\}, \{k^t_i\}}^{\{A^t_i\}, \{B^t_i\}}(Y)$ checks that, for every~$i$, there is 
            a~way to contract at most $k^t_i$ edges of $G[T_i]$ to obtain a path such that $A^t_i$ and $B^t_i$ contract to the first and the last nodes of that path, respectively. 
            \[
            P_{\{T_i\}, \{k^t_i\}}^{\{A^t_i\}, \{B^t_i\}}(Y) = 
            \prod\limits_{i} y_{T_i, k^t_i}^{A^t_i, B^t_i}
            \]
            
            \item $N_{\{V_i\}}^{\{S^j_i\}, \{T_i\}}(Z)$ checks that $\forall a, b$ such that $|a - b| > 1$, there is no edge between $V_a$ and $V_b$.
            \[
            N_{\{V_i\}}^{\{S^j_i\}, \{T_i\}}(Z) = 
            \prod\limits_{\substack{
            a, b, \\
            a + 1 < b
            }} \prod\limits_{\substack{
            M_1, M_2 \in \{T_i\} \cup \{S^j_i\}, \\
            M_1 \subseteq V_a, M_2 \subseteq V_b
            }} z^{\varnothing, \varnothing}_{M_1, M_2}
            \]
            
            \item $A^{\{S^j_i\}, \{T_i\}}_{\{V_i\}, \{A_i\}, \{B_i\}}(Z)$ checks that for every $i \in [m - 1]$ the set of endpoints of edges between $V_i$ and $V_{i + 1}$ is a subset of $B_i \cup A_{i + 1}$.
            Using it with the previous monomial, we check that $\forall i \in [t]$, regardless of how we contract $G[T_i]$ to a path, there is no edge connecting an internal node of that path to a node outside $T_i$.
            \[
            A^{\{S^j_i\}, \{T_i\}}_{\{V_i\}, \{A_i\}, \{B_i\}}(Z) =
            \prod\limits_{i} 
            \prod\limits_{\substack{
                M_1, M_2 \in \{T_i\} \cup \{S^j_i\}, \\
                M_1 \subseteq V_i, M_2 \subseteq V_{i + 1}
                }} z^{M_1 \cap B_i, M_2 \cap A_{i + 1}}_{M_1, M_2}
            \]
            
            \item $P_{\{a_i\}, \{b_i\}}(W)$ checks that $\forall i \in [m - 1]$, there is an edge $(b_i, a_{i + 1})$ between $V_i$ and $V_{i + 1}$.
            \[
            P_{\{a_i\}, \{b_i\}}(W) =  
            \prod\limits_{i} w_{b_i, a_{i + 1}}
            \]
        \end{itemize}

        Now, we can present the final polynomial $P_{s(G, k)}(X, Y, Z, W)$:
        for every partition $V_1 \sqcup \dots \sqcup V_m$ of $V$ into $m \le 7\theta$ blocks, $\{A_i\}$, $\{B_i\}$ such that $\forall i \in [m] \; A_i, B_i \subseteq V_i$, $\{a_i\}$ and $\{b_i\}$ such that $\forall i \in [m] \; a_i \in A_i$, $b_i \in B_i$, the corresponding $\{S_i\}$, $\{T_i\}$, $\{A^t_i\}$ and $\{B^t_i\}$, $\{k^t_i\}$ such that $\sum k^t_i + \sum (|S_i| - 1) \le k$  and a family $\{S^j_i\}$ of $\ell \le \theta$ sets of size at most $3n / \theta$ such that $\bigcup\limits_j S^j_i = S_i$ and $B(S^1_i, \dots, S^{\ell}_i)$ is connected, we add the following monomial:
        
        \[C_{\{S^j_i\}}(X) \cdot
        P_{\{T_i\}, \{k^t_i\}}^{\{A^t_i\}, \{B^t_i\}}(Y) \cdot
        N_{\{V_i\}}^{\{S^j_i\}, \{T_i\}}(Z) \cdot
        A^{\{S^j_i\}, \{T_i\}}_{\{V_i\}, \{A_i\}, \{B_i\}}(Z) \cdot
        P_{\{a_i\}, \{b_i\}}(W)
        \, .\]
        
        The number of~monomials added to~$P_{s(G,k)}$ is 
        $O((5\theta)^{n} \cdot 2^{n} \cdot  2^{n} \cdot n^{5\theta} \cdot n^{5\theta} \cdot k^{5\theta} \cdot \binom{n}{3n/\theta}^{\theta}) = 2^{O(n)} = 2^{O(k)}$.

        \item[Degree.] The degree of~$\mathcal P$ is $\theta^{O(1)}$.
    \end{description}

\paragraph{\problem{$k$-Set Splitting}.} Given a~family $\mathcal{F} \subseteq 2^{[n]}$ of~size $m = n^{O(1)}$ and an~integer~$k$, decide whether there exists a~partition $A \sqcup B=[n]$ 
that splits at~least $k$~sets of~$\mathcal F$.

\begin{description}
    \item[Kernel.] As~there exists a~kernel with~$2k$ sets over the universe of~size~$k$
    for \problem{$k$-Set Splitting}~\cite{LS09}, we~assume that $|\mathcal{F}| = m \le 2k$ and $n \le k$
    
    \item[Idea.] 
Recall that a~partition $A \sqcup B=[n]$ splits a~set~$S \subseteq [n]$ if there are exists $a,b \in S$ such that $a \in A$ and $b \in B$.
    For a~subfamily $\mathcal{F}' \subseteq \mathcal{F}$, 
    if $(A, B)$ splits exactly $q$~sets of~$\mathcal{F}'$, 
    then there exist two sets $A' \subseteq A$, $B' \subseteq B$ such that $|A'|, |B'| \leq q \leq |\mathcal{F}'|$ and pair $(A', B')$ splits at~least $q$~sets 
    of~$\mathcal{F}'$. Such sets $A'$ and $B'$ can be~constructed 
    by~picking a~pair $(a, b)$ from each split set of~$\mathcal{F}'$. If~there are $k$~sets in~$\mathcal{F}$ that are split by~$(A, B)$, one can partition them into $\theta$ subfamilies $\mathcal{F}_1, \dotsc, \mathcal{F}_\theta$ of size at most~$k / \theta$ such that $(A, B)$ splits all sets in each $\mathcal{F}_i$. Then, for each $\mathcal{F}_i$, one can choose pair of sets $A_i \subseteq A, B_i \subseteq B$ of size at most $k / \theta$ that splits all sets 
    in~$\mathcal{F}_i$.
    
    \item[Variables.] Introduce $s(G,k) = O(\binom{n}{\le k/\theta}^2\binom{m}{\le k/\theta}) = O^*(\binom{2k}{k/\theta}^3)$ variables:

    \[X=\left\{x_{A, B, L} \colon A, B \subseteq [n], |A|,|B| \le k/\theta, A \cap B = \varnothing, L \subseteq [m], |L| \le k/\theta\right\}\, .\]
    For a~set family~$\mathcal{S}$ of size~$m$ and a~set of~indices $L \subseteq [m]$ denote by~$\mathcal{S}^L$ a~subfamily of $\mathcal{S}$ defined by~indices of~$L$.
    The mapping $\phi(\mathcal F,k)$ assigns the following value 
    to~a~variable $x_{A, B, L}$:
    \[[\text{$(A, B)$ splits all sets in $\mathcal{F}^L$}] \, .\]
    
    \item[Complexity.] The mapping $\phi(\mathcal F,k)$ can be~computed
    in time $O^*(\binom{2k}{k/\theta}^3)$.
    \item[Polynomial.] For every partition $A \sqcup B = [n]$, sets $A_1, \dotsc, A_{\theta} \subseteq A$, $B_1, \dotsc, B_{\theta} \subseteq B$ such that $|A_i|, |B_i| \le k / \theta$ for all $i$ and disjoint sets of indices $L_1, \ldots, L_\theta \subseteq [m]$ such that $|L_i| \le k / \theta$ for all $i$, we add the following monomial to $P_{s(G, k)}$:
    \[\prod\limits_{\substack{i \in [\theta]}} x_{A_i, B_i, L_i}\]
    The number of monomials added to $P_{s(\mathcal F, k)}$
    is at most $O(\binom{n}{k/\theta}^{2\theta}\binom{m}{k/\theta}^\theta) = 2^{O(k)}$.
    \item[Degree.] The degree of~$\mathcal P$ is $\theta$.
\end{description}

\paragraph{\problem{Cluster Editing}.} Given a~graph $G(V, E)$ with $n$~nodes and an~integer~$k$, decide whether $G$~can be~transformed into a~cluster graph (i.e., a~set of disjoint cliques) using 
at~most $k$~edge modifications (additions and deletions).

\begin{description}
    \item[Kernel.] As~there exists a~kernel with~$2k$ nodes for \problem{Cluster Editing}~\cite{CM12}, we~assume that $n \le 2k$.
    
    \item[Idea.] Consider a~solution of~\problem{Cluster Editing} for~$G$. Let $H$~be a~cluster graph resulting from~$G$
    by~at~most $k$~edge modifications. Take all cliques of~size 
    at~most $n/\theta$ and group them into $t \le 2\theta$ blocks
    each having at~most $n/\theta$ nodes. Let $T_1, \dotsc, T_t$ be sets of nodes of those blocks. Let $S_1, \dotsc, S_s$ 
    be~sets of~nodes of~cliques of~size more than $n/\theta$.
    For every~$i \in [s]$, partition $S_i$ into at~most $\theta$~blocks of~size at~most $n/\theta$: $S_i=\bigsqcup S_i^j$.

    To~construct the polynomial, we go~through all possible $\{T_i\}$ and $\{S_i^j\}$ and check whether 
    it~is possible to~modify
    the graph to~obtain a~cluster graph that is consistent with $\{T_i\}$ and $\{S_i^j\}$. To do this, let $\mathcal{T} = \{T_i\}$, $\mathcal{S} = \{S_i\}$, $\widetilde{\mathcal{S}} = \bigsqcup\limits_{i, j} \{S_i^j\}$, $\mathcal{V} = \mathcal{T} \sqcup \mathcal{S}$, and $\mathcal{C} = \mathcal{T} \sqcup \widetilde{\mathcal{S}}$.
    Let $\mathcal{V} = \{V_1, \dots, V_{t + s}\}$ and $\mathcal{C} = \{C_1, \dots, C_m\}$, where $m \le 2\theta + \theta^2$.
    
    Let $k = \sum\limits_{1 \le a \le b \le m} k_{a, b}$, where:
    \begin{itemize}
        \item there are at~most $k_{a, a}$ edge modifications in every $C_a \in \mathcal{T}$;
        \item there are at~most $k_{b, b}$ edge additions in every $C_b \in \widetilde{\mathcal{S}}$;
        \item there are at~most $k_{a, b}$ edge additions between every pair of different $C_a$ and $C_b$, where $C_a, C_b \subseteq S_i$ for some $i$;
        \item there are at~most $k_{a, b}$ edge deletions between every pair of $C_a$ and $C_b$, where $C_a$ and $C_b$ belong to different sets from $\mathcal{V}$.
    \end{itemize}
    
    Now, we can go through all possible $\mathcal{T}$, $\mathcal{S}$, $\mathcal{V}$, and $\mathcal{C}$ that are consistent with each other, and $\{k_{a, b}\}$ such that $\sum k_{a, b} = k$, and check whether for every $a, b \in [m]$ there can be done at most $k_{a, b}$ edge modifications according to cases listed above.

    \item[Variables.] Introduce $s(G, k) = O(\binom{n}{\le n / \theta}^2 \cdot k) = O^*(\binom{2k}{2k / \theta}^2)$ variables:
    
    \[X = \{ x_{T, k'} \colon T \subseteq V, |T| \le n / \theta, k' \in [k] \}\,;\]
    \[Y = \{ y_{S, k'} \colon S \subseteq V, |S| \le n / \theta, k' \in [k] \}\,;\]
    \[Z = \{ z_{S_1, S_2, k'} \colon S_1 \sqcup S_2 \subseteq V, S_1, S_2 \le n / \theta, k' \in [k] \}\,;\]
    \[W = \{ w_{C_1, C_2, k'} \colon C_1 \sqcup C_2 \subseteq V, C_1, C_2 \le n / \theta, k' \in [k] \}\,.\]

    The mapping $\phi(G,k)$ assigns the following values to the variables:
    \begin{align*}
      x_{T, k'} \mapsto [& \le k' \text{ edge modifications are needed to make $G[T]$ a cluster graph}] \, ; \\
      y_{S, k'} \mapsto [& \le k' \text{ edge modifications are needed to make $G[S]$ a clique}] \, ; \\
      z_{S_1, S_2, k'} \mapsto [& \le k' \text{ edge modifications are needed to connect each pair of nodes} \\  
      & \text{$(v_1, v_2)$ by an edge, where $v_1 \in S_1$, $v_2 \in S_2$}] \, ; \\
      w_{C_1, C_2, k'} \mapsto [& \le k' \text{ edge modifications are needed to get rid of all edges $(v_1, v_2)$} \\
      & \text{where $v_1 \in C_1$, $v_2 \in C_2$}] \, . 
    \end{align*}

    \item[Complexity.] Each variable from $Y \cup Z \cup W$ can be computed in polynomial time. Each variable $x_{T, k'}$ can be computed in time $O^*(3^{|T|})$ by dynamic programming
    \footnote{Let $dp[S]$ be the minimum number of edge modifications that turn $G[S]$ into a cluster graph. Then $dp[S] = \min\limits_{X \subseteq S} (|\overline{E}(X)| + |E(X, S \setminus X)| + dp[S \setminus X])$ and one can compute $dp[S]$ for all $S \subseteq T$ in time $O^*(3^{|T|})$.}. 
    The mapping $\phi(G, k)$ can be computed in time $O^*(\binom{2k}{2k / \theta}^2 \cdot 3^{n / \theta}) = O^*(\binom{2k}{2k / \theta}^2 \cdot 3^{2k / \theta})$.
    
    \item[Polynomial.] For every $(\mathcal{T}, \mathcal{S}, \widetilde{\mathcal{S}}, \mathcal{V}, \mathcal{C}, \{k_{a, b}\})$, we add the following monomial:

    \[\prod\limits_{\substack{a \in [m] \colon \\ C_a \in \mathcal{T}}} x_{C_a, k_{a, a}} 
    \prod\limits_{\substack{b \in [m] \colon \\ C_b \in \widetilde{\mathcal{S}}}} y_{C_b, k_{b, b}}
    \prod\limits_{\substack{a, b \in [m] \colon \\ a < b, \\ \exists i \in [s] \; C_a, C_b \subseteq S_i}} z_{C_a, C_b, k_{a, b}}
    \prod\limits_{\substack{a, b \in [m] \colon \\ a < b, \\ \exists i, j \in [t + s] \colon i \neq j, \\ C_a \subseteq V_i, C_b \subseteq V_j}} w_{C_a, C_b, k_{a, b}}
    \, .\]

    \item[Degree.] The degree of~$\mathcal P$ is $\theta^{O(1)}$.
\end{description}

\paragraph{\problem{$k$-Path}.} Given a~directed graph $G(V, E)$ with $n$ nodes and integer $k$, decide whether there is a simple path with exactly $k$ nodes.

\begin{description}
    \item[Idea.] 
Let $V = [n]$. The color-coding technique, introduced by~\cite{AYZ95}, solves \problem{$k$-Path} as~follows:
    assign a~random color $c \in [k]$ to~every node; then,
    all nodes of a~$k$-path receive different colors with
    probability about $e^{-k}$; at~the same time, one can find
    such a~colorful path in~time $n^{O(1)}2^k$. This gives
    a~randomized $2^{O(k)}n^{O(1)}$-time algorithm for 
    \problem{$k$-Path}. A~way to~derandomize~it
    is~to~use a~$k$-perfect family~$\mathcal{F}$ of~hash functions $f \colon [n] \to [k]$: go~through all~$f \in \mathcal F$ and for each node $v \in [n]$, assign a~color~$f(v)$ (see \cite[Section~4]{AYZ95}). The key feature of~the family~$\mathcal F$~is: for any subset of~nodes~$S \subseteq [n]$ of~size~$k$, there exists a~function $f \in \mathcal{F}$ such that $f$~is injective on~$S$. It~guarantees that for any $k$-path there is a~coloring $f \in \mathcal F$ that assigns
    different colors to all nodes of~the path.
    \cite{AYZ95} gives a~construction of~$\mathcal F$ of~size
    $O(2^{O(k)}\log n)$. 
    For our purposes, we~need a~family of~smaller size.
    To~achieve this, 
    we~allow a~larger number of~colors.
    
    Take a~family~$\mathcal F$ of $(n, k, \theta k)$-splitters of~size $O(e^{(k /\theta)(1 + o(1))}n^{O(1)}) = O^*(e^{2k/\theta})$ that can be computed in time $2^{9k}n^{O(1)}$ guaranteed by~\cref{thm:splitters}.

Given a~coloring $f \colon [n] \to [\theta k]$, one can
    find an~$f$-colorful $k$-path in~time 
    $2^{\theta k}n^{O(1)}$~\cite[Lemma~3.1]{AYZ95}.
    
    The main idea of the polynomial formulation below is the following. For a~coloring $f \colon [n] \to [\theta k]$, an~$f$-colorful $k$-path~$\pi$ can be~partitioned into 
    $\theta$~paths $\pi_1, \dotsc, \pi_{\theta}$ such that each path uses at~most $k/\theta$ colors. Then, a~partition $\pi_1, \dotsc, \pi_{\theta}$ is~valid if the paths are color-disjoint and there~is an~edge from the last node of~$\pi_i$ to~the first node of~$\pi_{i+1}$, for all~$i \in [\theta-1]$.

    \item[Variables.] Introduce $s(G,k) = O\left(e^{2k/\theta}{\binom{\theta k}{\le k / \theta}}n^{O(1)}\right) = O^*\left(e^{2k/\theta}{\binom{\theta k}{k / \theta}}\right)$ variables:

    \[X=\{x_{f, C, u, v} \colon f \in \mathcal{F}, u, v \in V, C \subseteq [\theta k], |C| \le k / \theta\}\, ,\]
    \[Y=\{y_{u, v} \colon u, v \in V\}\, .\]
    
    The mapping $\phi(G,k)$ assigns the following values:
    \[x_{f, C, u, v} \mapsto [\text{there is an~$f$-colorful path from~$u$ to~$v$ that uses colors from~$C$ only}]\,,\]
    \[y_{u, v} \mapsto [(u, v) \in E]\, .\]
    
    \item[Complexity.] The mapping $\phi(G,k)$ can be~computed
    in time $O\left(2^{k / \theta}e^{2k/\theta}{\binom{\theta k}{\le k / \theta}}n^{O(1)}\right) = O^*\left((2e^2)^{k / \theta}{\binom{\theta k}{k / \theta}}\right)$.
    
    \item[Polynomial.] For every $f \in \mathcal{F}$, disjoint $C_1, \ldots, C_{\theta} \subseteq [\theta k]$ such that $|C_i| \le \frac{k}{\theta}$, and distinct nodes $v_1, \ldots, v_{2\theta} \in V$, add a~monomial:
    \[\prod\limits_{i \in [\theta]}x_{f, C_i, v_{2i - 1}, v_{2i}}
      \prod\limits_{i \in [\theta - 1]}y_{v_{2i}, v_{2i + 1}}\]
    The number of monomials added to $P_{s(G, k)}$
    is at most $e^{2k/\theta}\binom{\theta k}{k / \theta}^{\theta}n^{O(1)} = 2^{O(k)}n^{O(1)}$.
    \item[Degree.] The degree of~$\mathcal P$ is $2\theta - 1$.
\end{description}

\paragraph{\problem{$k$-Tree}.} Given a tree $H$ on $k$ nodes and a graph $G$ on $n$ nodes, decide whether there is a (not necessarily induced) copy of $H$ in $G$.

\begin{description}
     \item[Idea.] Let $\mathcal{F}$ be an $(n, k, \theta k)$-splitter from \cref{thm:splitters}. Let $H_1, \dotsc, H_m$ be subtrees of~$H$ resulting from applying \cref{st:tree_decomposition} to~$H$, where $m \le \theta$ and $|V(H_i)| \le 3k / \theta$. Let $\mathcal{I} \in V(G)^{V(H)}$ be an isomorphism between $H$ and some subtree of $G$. Let $T_i = V(H_i)$, and let $S_i = \mathcal{I}(T_i)$. 
     Let $C^T$ be connector nodes of $B(T_1, \dots, T_m)$, and let $C^S$ be connector nodes of $B(S_1, \dots, S_m)$. Let $C^T_i = T_i \cap C^T$, let $C^S_i = S_i \cap C^S$, and let $\mathcal{I}_i = \mathcal{I} \Big|_{C^T_i}$.
     Let $f \in \mathcal{F}$ be a coloring of $V(G)$ that assigns different colors to nodes from $\mathcal{I}(V(H))$, and let $F_i = f(S_i)$.

     To construct the polynomial, we go through all possible $\{T_i\}$, $\{C^S_i\}$, $\{F_i\}$ and $f$ such that $T_i \subseteq [k]$, $C^S_i \subseteq [n]$, $F_i \subseteq [\theta k]$, $|T_i| = |F_i| \le 3k / \theta$, $|C^S_i| \le \theta - 1$ and $f \in \mathcal{F}$. We check that $\bigcup C^S_i$ equals to the set of connector nodes of $B(C^S_1, \dots, C^S_m)$, and that $B(T_1, \dots, T_m)$, $B(C^S_1, \dots, C^S_m)$ and $B(F_1, \dots, F_m)$ are isomorphic trees where sets with the same numbers correspond to each other. 
     We compute $C^T$ as connector nodes of $B(T_1, \dots, T_m)$ and $C^T_i$ as $T_i \cap C^T$.
     Now, we define $\mathcal{I}_i$ as a bijection between $C^T_i$ and $C^S_i$ obtained from the above isomorphism between $B(T_1, \dots, T_m)$ and $B(C^S_1, \dots, C^S_m)$ as a restriction to connector nodes. 
     
     Now, we have $(\{T_i\}, \{C^T_i\}, \{C^S_i\}, \{\mathcal{I}_i\}, \{F_i\}, f)$, and we need to check that
     \begin{itemize}
         \item for all~$i$, $H(T_i)$ is a~tree;
         \item for all~$i$, there exists $S_i \subseteq [n]$ and $\mathcal{J} \in \operatorname{Bij}(T_i, S_i)$ such that: 
         \begin{itemize}
             \item $ C^S_i \subseteq S_i$,
             \item $f(S_i) = F_i$,
             \item $H[T_i]$ is isomorphic to $G[S_i]$ according to $\mathcal{J}$,
             \item $\mathcal{J} \Big|_{C^T_i} = \mathcal{I}_i$.
         \end{itemize}
     \end{itemize}

    \item[Variables.] Introduce $s(H, G, k) = O(\binom{k}{\le 3k/\theta} k^{\theta} n^{\theta} \theta^{\theta} \binom{\theta k}{\le 3k / \theta} e^{2k/\theta} \log n)$ variables:
    \begin{align*}
       X = \{& x_{T} \colon T \subseteq [k], |T| \le 3k / \theta \}; \\
       Y = \{& y_{T, C^T, C^S, \mathcal{I}, F, f} \colon T, C^T \subseteq [k], C^S \subseteq [n], \mathcal{I} \in (C^S)^{C^T}, F \subseteq [\theta k], \\
       & f \in \mathcal{F}, |T| = |F| \le 3k / \theta, |C^T| = |C^S| \le \theta - 1\}\, .
    \end{align*}
 
    The mapping $\phi(H, G, k)$ assigns the following values to the variables:
    \begin{align*}
       x_{T} \mapsto [& H[T] \text{ is a tree}] \, ; \\
       y_{T, C^T, C^S, \mathcal{I}, F, f} \mapsto [& \text{there exists a set $S \subseteq[n]$ such that $C^S \subseteq S$, $f(S) = F$,} \\
       & \text{and $G[S]$ is a copy of $H[T]$, where nodes of $H$ in $C^T$} \\
       & \text{correspond to nodes of $G$ in $C^S$ according to $\mathcal{I}$}] \, .
    \end{align*}

    \item[Complexity.] The family $\mathcal{F}$ can be computed in time $2^{9k}n^{O(1)}$. Each variable from $X$ can be computed in polynomial time. Each variable from $Y$ can be computed in time $O^*(8^{3k/\theta})$ by the following dynamic programming
    algorithm. Let us view $H[T]$ as a rooted tree. Let $T' \subseteq T$ such that $H[T']$ is connected, let $F' \subseteq F$, and let $v \in [n]$. Let $dp[T'][F'][v] = [$there exists a subset $S \subseteq [n]$ such that $v \in S$, $|S| = |F'|$, $f(S) = F'$ and $H[T']$ is isomorphic to $G[S]$ according to some isomorphism $\mathcal{J}$ which is consistent with $\mathcal{I}$ and which maps the root of $H[T']$ into $v$]. For $T'$, let $T''$ be a subset of $T'$ such that $H[T'']$ and $H[T' \setminus T'']$ are connected, and the root of $H[T'']$ is a son of the root of $H[T']$. Then, 
    \[dp[T'][F'][v] = \bigvee\limits_{\substack{F'' \subseteq F', \\ u \in [n] \colon (u, v) \in E(G)}} 
    (dp[T''][F''][u] \wedge dp[T' \setminus T''][F' \setminus F''][v]) \,. \]
    
    \item[Polynomial.] For every $(\{T_i\}, \{C^T_i\}, \{C^S_i\}, \{\mathcal{I}_i\}, \{F_i\}, f)$, we add the following monomial:
    \[\prod\limits_{i = 1}^{m} x_{T_i} \prod\limits_{i = 1}^{m} y_{T_i, C^T_i, C^S_i, \mathcal{I}_i, F_i, f} \, . \]
    
    The number of monomials is at most $\binom{k}{3k/\theta}^{\theta} 
    \binom{k}{\theta}^{\theta} \binom{n}{\theta}^{\theta} 
    (\theta^{\theta})^{\theta} \binom{\theta k}{3k / \theta}^{\theta}
    e^{2k / \theta} \log n = 2^{O(k)}n^{O(1)}$.
    
    \item[Degree.] The degree of~$\mathcal P$ is at most $2 \theta$.
 \end{description}
    
\end{proof}

\section*{Acknowledgments}
Research is~partially supported by Huawei (grant TC20211214628)
and by the Ministry of Science and Higher Education of the Russian Federation (agreement 075-15-2019-1620 date 08/11/2019 and 075-15-2022-289 date 06/04/2022).
We~are indebted to~anonymous reviewers and to
	Shyan Shaer Akmal and Ryan Willams
	for their helpful comments that significantly helped us
	to~improve the exposition of the paper and to~fix a~number
	of~issues in~the proofs.

\bibliographystyle{alpha} 
\bibliography{main.bib}

\appendix
\section{Omitted Proofs}\label{sec:apx}
\subsection{Proof of Theorem~\ref{thm:mainparam}}\label{apx:mainparam}
\mainparam*
\begin{proof}
Let $\lambda>1$ be the constant from the theorem statement, $\gamma>1$ be an arbitrary constant, and $\sigma=\log(\lambda)/(6\gamma)$. Let $\mathcal I \times \N$ be the set of all instances of~$A$, where for an instance $(x,k)\in \mathcal I \times \N$, $k$~is the value of the parameter. Let $\mathcal P$ be a~$\Delta$-polynomial formulation of~$A$ of~complexity $2^{\sigma k}$,
	for constant $\Delta=\Delta(\sigma)>0$.
	We assume that $A$ is $\lambda^k$-SETH-hard: there is a function $\delta\colon\R_{>0}\to\R_{>0}$ such that for every $q,d\in\N$, \problem{$q$-SAT} can be solved in time $2^{(1-\delta)n}$ given a~$\lambda^{(1-\eps)k} |x|^d$-algorithm for~$A$.
	
	If ${\mathcal P}=(P_t)_{t\geq1}$ does not have arithmetic circuits over $\Z$ of size $t^\gamma$ for infinitely many values of~$t$, then we have an explicit family of constant-degree polynomials that requires arithmetic circuits of~size $\Omega(t^{\gamma})$. 
	Hence, in the following we assume that  ${\mathcal P}$ has arithmetic circuits over $\Z$ of size $ct^\gamma$ for all values of~$t$ for a constant~$c>0$. Under this assumption, we design a non-deterministic algorithm solving \problem{$q$-TAUT} in time $2^{(1-\eps)n}$ for every~$q$. This contradicts NSETH and, by \cref{thm:nseth}, implies a super-linear lower bound on the size of series-parallel circuits computing~$\E^{\NP}$.
	
Let $\delta_0=\delta(1/2)\in(0,1)$ where $\delta$ is the function from the SETH-hardness reduction for~$A$. Let $\alpha= \frac{1}{(\Delta+1)(\gamma+2\Delta+8)}$, $L=2^{(1-\delta_0)\alpha n}$,  $K=2(1-\delta_0) \alpha n/\log(\lambda)$, and $T=2^{(\sigma K)} \cdot L^\Delta$. We will start with an instance of the \problem{$q$-TAUT} problem on~$n$ variables, reduce it to $2^{(1-\alpha)n}$ instances of \problem{$q$-TAUT} on $\alpha n$ variables each . Then we'll use the fine-grained reduction from \problem{$q$-SAT} to the problem~$A$ on instances of length~$\ell\leq L$ and parameter~$k\leq K$. Finally, we'll use the polynomial formulation of~$A$ to reduce instances of length~$\ell$ and~parameter~$k$ to polynomials with $t\leq T$ variables.

Let $F$~be a~$k$-DNF formula over $n$~variables. In order to solve~$F$, we branch on all but $\alpha n$~variables. This gives us $2^{(1-\alpha)n}$ $k$-DNF formulas. By~solving \problem{$q$-SAT} on the negations of all of~these formulas, we~solve \problem{$q$-TAUT} on the original formula~$F$. 

Assuming a $\lambda^{k/2}$-algorithm for~$A$ we have a $2^{(1-\delta_0)\alpha n}$-algorithm for \problem{$q$-SAT} on $\alpha n$ variables for $\delta_0=\delta(1/2)>0$. We now apply the
fine-grained reduction from \problem{$q$-SAT} to~$A$ to (the negations of) all $2^{(1-\alpha)n}$ instances of \problem{$q$-TAUT}. This gives us a number of instances of the problem~$A$. Let $\ell$~be the largest length of these instances, and $k$ be the largest value of the parameter in these instances. Since the running time of the reduction is bounded from above by~$2^{(1-\delta_0)\alpha n}$, we have that $\ell \leq 2^{(1-\delta_0)\alpha n} = L$.  From \cref{def:sethparam}, we~know that  ${\lambda}^{k/2}< 2^{(1-\delta_0) \alpha n}$, so each instance of~$A$ has the parameter value at most $k<2(1-\delta_0) \alpha n/\log(\lambda)=K$. 

    Since $\mathcal P$ is a polynomial formulation of~$A$ of~complexity $2^{\sigma k}$, there exist $s \colon \mathcal I \times \mathbb{N} \to \mathbb N$, $s(x, k)\leq 2^{\sigma k} |x|^{\Delta} \leq 2^{\sigma K} L^\Delta$ and $\phi \colon \mathcal I \times \mathbb{N} \to \Z^*$ (both computable in time $2^{\sigma K} L^\Delta$) such that for every $x\in \mathcal I$ and $k\in\N$,
	\begin{itemize}
		\item $P_{s(x,k)}(\phi(x,k)) \neq 0 \Leftrightarrow (x,k) \text{ is a yes instance of } $A$\,;$
		\item $|P_{s(x, k)}(\phi(x, k))| < 2^{2^{\sigma k}|x|^{\Delta}}\,.$
	\end{itemize}
    Using~$\mathcal P$,
    we will solve all instances
    of~$A$ in~two stages: in~the preprocessing stage (which takes place before all the reductions), we~guess efficient arithmetic circuits for polynomials~$P_{t}$ for all 
    $t \le T$,
    in the solving stage, we~solve all instances of~$A$ using the guessed circuits. Note that we'll be using the polynomials to solve instances of~$A$ resulting from \problem{$q$-SAT} instances on $\alpha n$ variables. Since~$L$ is the largest length of such an instance of~$A$ and $K$ is the largest value of the parameter, we have that each such instance is mapped to a polynomial with at most $s(x,k)\leq 2^{\sigma k} |x|^\Delta \leq 2^{\sigma K} L^\Delta=T$ variables. Therefore, finding efficient arithmetic circuits for polynomials~$P_{t}$ for all 
    $t \le T$ will be sufficient for solving all \problem{$q$-SAT} instances of size $\alpha n$ that we obtain in the reductions.

        \paragraph{Preprocessing.} For every $t \leq T$, we find a prime $p_t$ in the interval $2^{t+1} \le p_t \le 2^{t+2}$ in non-deterministic time~$O(t^7)$~\cite{aks04,lp19}. 
        
        Now for every $t\leq T$, we reduce all coefficients of the polynomial $P_t$ modulo~$p_t$ to obtain a polynomial $Q_t$ over $\Z_{p_t}$, and let ${\mathcal Q}=(Q_1,Q_2,\ldots)$.
        For every $t \le T$, we now non-deterministically solve $\operatorname{Gap-MACP}_{\mathcal Q, \mu \Delta^2, p_t}(t, ct^{\gamma})$ using \cref{lemma:macp}.
        Since we assume that $\mathcal P$ has arithmetic circuits over $\Z$ of size $ct^\gamma$, we have that $\mathcal Q$ has arithmetic circuits over $\Z_{p_t}$ of this size. Thus, we~obtain arithmetic circuits~$C_t$ of~size at~most
        \begin{equation}\label{eq:circuitsizeparam}
            c\mu \Delta^2t^{\gamma}
        \end{equation}
        computing~$Q_{t}$ over $\Z_p$ for all $t \le T$. Since~$C_t$ computes~$Q_t$ correctly in~$\mathbb{Z}_p$ and $|P_{s(x,k )}(\phi(x,k))| < 2^{s(x,k)} \leq  p_{s(x,k)}/2$ for all $x \in I_\ell$, we can use $C_t$
        to~solve~$A$ for every instance length $\ell \leq L$ and parameter $k \leq K$. By~\cref{lemma:macp}, $\operatorname{Gap-MACP}_{\mathcal Q, \mu \Delta^2, p_t}(t, ct^{\gamma})$ can be solved in (non-deterministic) time
        \[
        	O\left(\Delta^2 \cdot ct^{\gamma} \cdot t^{2\Delta} \cdot \log^2(p_t)\right)=
        		O\left(T^{\gamma+2\Delta+2}\right) \, .
        \]
        The total (non-deterministic) running time of the preprocessing stage is then bounded from above by the time needed to~find~$T$ prime numbers, write down the corresponding explicit polynomials modulo~$p_t$, and solve~$T$ instances of $\operatorname{Gap-MACP}$:
        \begin{align}\label{eq:preprocessingparam}
        	O\left(T(T^7+T^{\Delta+2}+T^{\gamma+2\Delta+2})\right)=
        		O\left(T^{\gamma+2\Delta+8}\right)
        		=O\left(2^{(1-\delta_0)n}\right) \, ,
        \end{align}
        where the last equality holds due to $T=2^{\sigma K} L^\Delta$, $L=2^{(1-\delta_0)\alpha n}$, $K=2(1-\delta_0) \alpha n/\log(\lambda)$, $\sigma=\log(\lambda)/(6\gamma)$, and $\alpha= \frac{1}{(\Delta+1)(\gamma+2\Delta+8)}$.
        \paragraph{Solving.} In~the solving stage, we~solve all $2^{(1-\alpha)n}$
        instances of~\problem{$q$-SAT} by~reducing them to~$A$ and using
        efficient circuits found in~the preprocessing stage. For an instance~$x$ of~$A$ of length $\ell$ with parameter~$k$, we first transform it into an input of the polynomial $y=\phi(x,k)\in\Z^{s(x,k)}$. Both $s(x,k)$ and $\phi(x,k)$ can be computed in time $O(2^{\sigma k}\ell^\Delta)$.  Then we feed it into the circuit $Q_{s(x,k)}$. First we note that we have the circuit $Q_{s(x,k)}$ after the preprocessing stage as $s(x,k)\leq T$ and we have circuits $(Q_1,\ldots,Q_T)$. The number of arithmetic operations in~$\Z_{p_{s(x,k)}}$ required to evaluate the circuit is proportional to the circuit size, and each arithmetic operation takes time $\log^2(p_{s(x,k)})=O(s(x,k)^2)$. From~\eqref{eq:circuitsizeparam} with $t\leq s(x,k)\leq2^{\sigma k }\ell^\Delta$, we have that we can solve an instance of~$A$ with $\ell$ inputs and parameter~$k$ in time 
        \[
        O(2^{\sigma k}\ell^\Delta) + c\mu \Delta^2 \cdot s(x,k)^2 \cdot 2^{\sigma\gamma k}\cdot\ell^{\Delta \gamma}
        =O\left( 2^{3\sigma\gamma k} \ell^{3\Delta\gamma}\right)
        =O\left(\lambda^{k/2} \ell^{3\Delta\gamma}\right)\,,
        \]
        where the last equality holds due to the choice of $\sigma=\log(\lambda)/(6\gamma)$.
        The~fine-grained reduction from \problem{$q$-SAT} to~$A$ implies that a~$O\left(\lambda^{k/2} \ell^{O(1)}\right)$-time algorithm for~$A$ gives us a~$O\left(2^{n(1-\delta_0)}\right)$-time algorithm for \problem{$q$-SAT}. Thus, since we solve each $\ell$-instance of~$A$ resulting from $2^{(1-\alpha)n}$ instances of \problem{$q$-SAT} in time $O\left(\lambda^{k/2}\ell^{O(1)}\right)$, we solve the original $n$-variate instance~$F$ of \problem{$q$-TAUT} in time
    \begin{align}\label{eq:solvingparam}
    O\left(2^{(1-\alpha) n} \cdot (2^{\alpha n})^{1-\delta_0}\right)=O\left(2^{n(1-\alpha\delta_0)}\right) \,.
    \end{align}
    
    The total running time of the preprocessing and solving stages (see~\eqref{eq:preprocessingparam} and~\eqref{eq:solvingparam}) is bounded from above by $O\left(2^{n(1-\delta_0)}\right)+O\left(2^{n(1-\alpha\delta_0)}\right)=O\left(2^{n(1-\alpha\delta_0)}\right)$, which refutes NSETH, and implies a super-linear lower bound for Boolean series-parallel circuits.
\end{proof}

\subsection{Proof of~Lemma~\ref{st:tree_decomposition}}\label{apx:treelemma}

\begin{lemma}
\label{lemma:U_u}
Let $T(V, E)$~be a~rooted tree and $M\subseteq V$~be a~set of more than $\ell$ nodes. Then there is a~subtree~$U$ of~$T$ whose topmost node is~$u$ such that $T[V(T) \setminus V(U) \cup \{u\}]$ is connected and $\frac{\ell}{2} \le |M \cap (V(U) \setminus \{u\})| \le \ell$.
\end{lemma}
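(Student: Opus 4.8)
The plan is to root $T$ at a vertex $r$ and, for a node $v$, write $T_v$ for the subtree of $T$ rooted at $v$ and $m(v) := |M \cap V(T_v)|$ for the number of marked nodes it contains; in particular $m(r) = |M| > \ell$. The first step is to reduce to a purely combinatorial task: it suffices to find a node $u$ together with a nonempty set $S$ of children of $u$ satisfying $\frac{\ell}{2} \le \sum_{c\in S} m(c) \le \ell$. Indeed, setting $U := \{u\}\cup\bigcup_{c\in S}T_c$, the topmost node of $U$ is $u$; since the subtrees $T_c$ are pairwise disjoint and avoid $u$, we get $V(U)\setminus\{u\} = \bigsqcup_{c\in S}V(T_c)$ and hence $|M\cap(V(U)\setminus\{u\})| = \sum_{c\in S}m(c)\in[\ell/2,\ell]$; moreover $T[V(T)\setminus V(U)\cup\{u\}] = T[V(T)\setminus\bigcup_{c\in S}V(T_c)]$, which is $T$ with some whole hanging branches below $u$ deleted, hence connected.

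To locate $u$, I would descend from $r$ keeping the invariant $m(u)>\ell$, which holds at $u=r$. As long as the current node $u$ has a child $c$ with $m(c)>\ell$, replace $u$ by $c$; since each step moves strictly downward, after finitely many steps we reach a node $u$ with $m(u)>\ell$ all of whose children $c_1,\dots,c_d$ satisfy $m(c_i)\le\ell$ (and $d\ge 1$, since a leaf has $m\le 1\le\ell$). Summing, $\sum_i m(c_i) = m(u)-[u\in M] \ge \lfloor\ell\rfloor \ge \ell/2$, using that $m(u)$ is an integer larger than $\ell$ and $\ell\ge 1$. Now choose $S$: if some child has $m(c_i)\in[\ell/2,\ell]$, put $S:=\{c_i\}$; otherwise every child has $m(c_i)<\ell/2$, and we insert children into $S$ one by one, stopping the first time $\sum_{c\in S}m(c)\ge\ell/2$. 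This threshold is reached because the full sum is $\ge\ell/2$, and at the moment it is reached the running total is $<\ell/2+\ell/2=\ell$, since it was $<\ell/2$ just before the last insertion and that child contributes $<\ell/2$. In either case $S$ is nonempty and $\frac{\ell}{2}\le\sum_{c\in S}m(c)\le\ell$, as required.

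The only subtle points are the two estimates in the terminal step: that $\sum_i m(c_i)\ge\ell/2$ (a short integrality argument that needs $\ell\ge1$, which is all that is used in the applications), and that the greedy sum does not overshoot $\ell$ — which is exactly why the case of a child with $m(c_i)$ already in $[\ell/2,\ell]$ is peeled off first, so that every insertion in the greedy phase is by strictly less than $\ell/2$. The remaining ingredients — termination of the descent, disjointness of the $T_c$, and connectivity of $T[V(T)\setminus V(U)\cup\{u\}]$ — are routine once $U$ is taken to be a union of complete hanging subtrees.
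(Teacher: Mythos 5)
Your proof is correct and follows essentially the same approach as the paper's: you locate a node $u$ with $m(u)>\ell$ all of whose children have $m(c)\le\ell$ (your explicit descent is the unrolled version of the paper's induction on $|V(T)|$), and then either take a single child with $m(c)\in[\ell/2,\ell]$ or greedily accumulate children with $m(c)<\ell/2$ until the running total lands in $[\ell/2,\ell]$, exactly mirroring the paper's three-case analysis on the largest $a_m$.
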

\begin{proof}
We prove this lemma by~induction on~the size of the tree~$T$. The base case of $|V|=1$ holds trivially (by choosing $U=T$). For the case $|V|>1$, suppose the root~$r$ of~$T$ has $m$~children with subtrees $T_1,\ldots,T_m$. Let $a_i = |V(T_i) \cap M|$. Assume without loss of generality that $a_1\leq \ldots \leq a_m$. Consider the following three cases.
\begin{itemize}
    \item If $a_m > \ell$, then we use the induction hypothesis 
    to~find~$U$ in~$T_m$.
    \item If $\frac{\ell}{2} \le a_m \le \ell$, then we take $U$ to be the subtree $T_m$ together with the root~$r$. Then $|M \cap (V(U) \setminus \{u\})|=|M \cap V(T_m)|=a_m$.
    \item 
    If $a_m < \frac{\ell}{2}$, then, for all $i \in [m]$, $a_i < \frac{\ell}{2}$. As $\sum_{i=1}^m a_i \ge |M|-1\geq \ell$, there exists $j \in [m]$ such that $\frac{\ell}{2} \le \sum\limits_{i = 1}^{j} a_i \le \ell$. We take $U$ to be the subtrees $T_1,\ldots,T_j$ together with the root $r$. This way we have $|M \cap (V(U) \setminus \{u\})|=\sum\limits_{i = 1}^{j} |M \cap V(T_i)|=\sum\limits_{i = 1}^{j} a_i$.
\end{itemize}
\end{proof}

\treedecomposition*
\begin{proof}
Choose an~arbitrary node~$r$ of~$T$ as the root and view~$T$ as a~rooted tree. Let $\ell = \lceil \frac{2k}{\theta - 1} \rceil$. 
As long as $|V(T) \cap M| > \ell$ we repeat the following procedure. (Note that it's possible that in the very beginning $|V(T) \cap M|=k\leq \ell$, so we don't run the procedure even once.) 
Since $|V(T) \cap M| > \ell$, \cref{lemma:U_u} gives us a tree $U$ and its root~$u$. We take $T_i=U$, and delete all the vertices $V(U) \setminus \{u\}$ from $T$.

When we can't apply this procedure anymore, we have $|V(T) \cap M| \leq \ell$, and we add the remaining tree with at most~$\ell$ nodes from~$M$ as the last~$T_i$ in our collection. Note that each $T_i$ is a subtree, and each $T_i$ contains at most~$\ell+1\leq \frac{2k}{\theta-1}+2$ nodes from~$M$. Since each application of \cref{lemma:U_u} removes at least $\ell/2$ vertices from~$M$, the number of such iterations is at most $\ceil{\frac{k}{\ell/2}} = \ceil{\frac{2k}{\ell}}$. Thus, the total number of subtrees is  $m \leq \ceil{\frac{2k}{\ell}} + 1 \le \ceil{\frac{2k(\theta - 1)}{2k}} + 1 = \theta$.
\end{proof}
\end{document}